\documentclass[sn-mathphys-num]{sn-jnl}

\usepackage[hyperpageref]{backref}
\renewcommand*{\backref}[1]{}
\renewcommand*{\backrefalt}[4]{%
  \ifcase #1
  \or
    \space(Cited on page~#2)
  \else
    \space(Cited on pages~#2)
  \fi
}

\usepackage[T1]{fontenc}
\usepackage{amssymb,amsmath,latexsym}
\usepackage{amsthm}
\usepackage{mathrsfs}
\usepackage{diagbox}
\usepackage[shortlabels]{enumitem}
\usepackage{color}
\usepackage{comment}
\usepackage{anyfontsize}
\usepackage{xcolor}
\usepackage{stackengine}
\usepackage{calc}
\usepackage{physics}
\usepackage{accents}

\usepackage{appendix}
\usepackage{colortbl}
\usepackage{booktabs}
\usepackage{tabularx}
\usepackage{tikz-cd}
\usepackage{cleveref}
\usepackage{caption}

\usepackage{dsfont}
\usepackage{footnote}

\usepackage[most]{tcolorbox}
\hypersetup{
  colorlinks   = true,
  urlcolor     = blue,
  linkcolor    = blue,
  citecolor    = blue
}
\allowdisplaybreaks
\newcommand{\wt}{\widetilde}

\newcommand{\mb}{\mathbb}
\newcommand{\mc}{\mathcal}

\newcommand{\spec}{\mathrm{spec}}

 \definecolor{thmblue}{HTML}{1F4E79} \definecolor{thmblueback}{HTML}{EEF6FF} \definecolor{lemmapurple}{HTML}{5B3A8E} \definecolor{lemmapurpleback}{HTML}{F5F0FF} \definecolor{defgreen}{HTML}{2F6B4F} \definecolor{defgreenback}{HTML}{F0FFF4} \definecolor{conjred}{HTML}{9A3412} \definecolor{conjredback}{HTML}{FFF7ED} \definecolor{remarkgray}{HTML}{555555} \definecolor{remarkgrayback}{HTML}{F7F7F7} 
 \theoremstyle{plain} \newtheorem{theorem}{Theorem}[section] \newtheorem{lemma}[theorem]{Lemma}  \newtheorem{proposition}[theorem]{Proposition}    \theoremstyle{definition} \newtheorem{definition}[theorem]{Definition}  \newtheorem{example}[theorem]{Example} \theoremstyle{remark}   
 \tcbset{ theoremstylebox/.style={ enhanced, breakable, boxrule=0.7pt, arc=1.2mm, outer arc=1.2mm, left=7pt, right=7pt, top=7pt, bottom=7pt, before skip=10pt, after skip=10pt, borderline west={2.5pt}{0pt}{#1}, } } 
 \tcolorboxenvironment{theorem}{ theoremstylebox=thmblue, colback=thmblueback, colframe=thmblue, } \tcolorboxenvironment{proposition}{ theoremstylebox=thmblue, colback=thmblueback, colframe=thmblue, } \tcolorboxenvironment{prop}{ theoremstylebox=thmblue, colback=thmblueback, colframe=thmblue, } \tcolorboxenvironment{corollary}{ theoremstylebox=thmblue, colback=thmblueback, colframe=thmblue, } \tcolorboxenvironment{lemma}{ theoremstylebox=lemmapurple, colback=lemmapurpleback, colframe=lemmapurple, } \tcolorboxenvironment{claim}{ theoremstylebox=lemmapurple, colback=lemmapurpleback, colframe=lemmapurple, } \tcolorboxenvironment{definition}{ theoremstylebox=defgreen, colback=defgreenback, colframe=defgreen, } \tcolorboxenvironment{assumption}{ theoremstylebox=defgreen, colback=defgreenback, colframe=defgreen, } \tcolorboxenvironment{conjecture}{ theoremstylebox=conjred, colback=conjredback, colframe=conjred, } \tcolorboxenvironment{example}{ theoremstylebox=remarkgray, colback=remarkgrayback, colframe=remarkgray, } \tcolorboxenvironment{remark}{ theoremstylebox=remarkgray, colback=remarkgrayback, colframe=remarkgray, } \tcolorboxenvironment{rem}{ theoremstylebox=remarkgray, colback=remarkgrayback, colframe=remarkgray, }

\renewcommand{\footnoterule}{%
  \kern -3pt
  \hrule width \textwidth height 0.4pt
  \kern 2.6pt
}

\newcommand{\eps}{\varepsilon}

\DeclareMathOperator{\supp}{supp}

\newcommand{\diag}{\operatorname{diag}}

\newcommand{\Haar}{\mathrm{Haar}}

\allowdisplaybreaks

\newcommand{\qd}{\end{proof}\vspace{0.5ex}}
\newcommand{\prf}{\begin{proof}[\bf Proof:]}

\makesavenoteenv{tabular}
\makesavenoteenv{table}

\newcommand{\GL}{\mathrm{GL}}
\newcommand{\U}{\mathrm{U}}

\title{Refined sample complexities from the tomographic rate function}
\author[1,2]{\fnm{Arick} \sur{Grootveld}}
\author[2,3]{Alexander Maloney}
\author[1,2]{Jason Pollack} 
\author[2,4,5,6]{Peixue Wu}
\affil[1]{Dept. of Electrical Engineering \& Computer Science, Syracuse University, Syracuse, NY, USA}
\affil[2]{Institute for Quantum \& Information Sciences, Syracuse University, Syracuse, NY, USA}
\affil[3]{Dept. of Physics, Syracuse University, Syracuse, NY, USA}
\affil[4]{Dept. of Applied Mathematics, University of Waterloo, Waterloo, ON, Canada}
\affil[5]{Institute for Quantum Computing, University of Waterloo, Waterloo, ON, Canada}
\affil[6]{Dept. of Mathematics, Syracuse University, Syracuse, NY, USA}
\date{}

\abstract{
Protocols for  quantum state tomography can be characterized either by their sample complexity or by the tomographic rate function that governs the large deviation behaviour of error estimates.
We establish a framework for the analysis of covariant protocols which allows us to compare these measures.  First, we derive the rate functions for several protocols, including random purification-based protocols and the sample-optimal protocol of Haah et al. These rate functions are expressed in terms of a new family of divergences which includes the reverse sandwiched R\'enyi divergence and Keyl's annealed quantum relative entropy as special cases. We show that these rate functions obey a strict ordering, even though the sample complexity of each protocol is order optimal. 
We consider a different version of sample complexity based on Wasserstein distance rather than trace distance, and show that the ordering of the rate functions implies analogous strict inequalities for the Wasserstein-type sample complexity. 
Thus the tomographic rate function is a more fine-grained indicator of the performance of a tomography protocol than the usual sample complexity.
}
\begin{document}
\maketitle

\clearpage
\tableofcontents

\section{Introduction}


The basic task of quantum state tomography is to reconstruct an unknown quantum state from a sequence of measurements of multiple copies of that state.  Specifically, given an unknown density matrix $\rho$, a tomographic protocol is a measurement of the $n$-fold tensor product $\rho^{\otimes n}$ which provides an estimate of $\rho$.  
Good tomography protocols -- namely those which provide good estimates of $\rho$ for relatively small values of $n$ -- are extremely useful in the characterization of quantum states and performance assessment of quantum
devices~\cite{NielsenChuang2010Quantum}.
 
 A tomography protocol can be regarded as a measurement on $\rho^{\otimes n}$ whose output will be a density matrix $\sigma$ which (if the tomography protocol works well) should be close, in some distance measure on the space of density matrices, to the true state $\rho$.  
For every tomography protocol $\mathfrak{T}$, we denote by $P^n_\mathfrak{T}(\sigma|\rho)$ the probability density of obtaining such a result.
Of course, because we have access only to a finite number of samples $n$, our estimate will not be perfect; in other words, a typical $\sigma$ pulled from this probability distribution will be close to (but not exactly) $\rho$.  
We can then assess the performance of our tomography protocol by characterizing either how close a typical $\sigma$ is to $\rho$, or how unlikely it is for an estimated $\sigma$ to be far from $\rho$.

The most common approach is to study the {\it sample complexity} of a given tomography protocol, which is the number of samples one must use in order to guarantee that $\sigma$ is close to $\rho$ with high probability.  Specifically, if one fixes an accuracy parameter $\varepsilon>0$, a failure probability $\delta>0$, and a distance measure $\mathrm{dist}$ on the space of density matrices then the sample complexity $N_{\mathfrak T} (\rho, \mathrm{dist}, \varepsilon,\delta)$ is the smallest value of $n$ such that the probability $P^n_\mathfrak{T}\left(\mathrm{dist}(\rho,\sigma)<\varepsilon\right)>1-\delta$ (see
Definition~\ref{def:sample} for a formal definition).  In other words, the sample complexity is the finite value of $N$ (as a function of $\varepsilon$ and $\delta$) at which our probability of failure (i.e.\ the probability that the distance between $\sigma$ and $\rho$ is too large) is sufficiently small.

Importantly, the sample complexity depends on the distance measure one uses to characterize the estimation error. 
Past work has focused primarily on the {\it trace
distance}, where protocols have been identified with optimal or
near-optimal sample complexity~\cite{o2016efficient,Haah_2017,
ODonnel2017EfficientII,pelecanos2025mixed}. These results identify several different protocols which all have the same scaling of the sample complexity with $\varepsilon$ and $\delta$. Thus all of these protocols perform equally well in the accurate reconstruction of quantum states, at least with respect to the trace-distance sample complexity.

However, these protocols all differ in other respects, and in particular have wildy different probability distributions $P_\mathfrak{T}^n(\sigma|\rho)$.  
For example, they assign exponentially different
probabilities to estimates near the same incorrect state. This raises the question:
\begin{itemize}
\item[]
\textit{Are there other measures which distinguish between these protocols, and which allow us to better characterize which protocol gives the best estimate of an unknown quantum state?}
\end{itemize}

To address these questions we will study the large-deviation behavior of quantum state tomography. In particular, we will study the {\it rate function}, which describes how rapidly the probability of
producing an estimate $\sigma$ that differs from $\rho$ decays as the number of copies
increases. 
A protocol has rate function $I(\cdot\|\cdot)$ if
\begin{equation*}
    P^n_\mathfrak{T} (\sigma|\rho)
    \asymp \exp\left\{-n I(\sigma\|\rho)\right\}.
\end{equation*}
as $n\to\infty$.
A larger rate means that the corresponding error becomes
exponentially less likely at large $n$. Unlike the sample-complexity bounds described above, the rate function $I(\sigma|\rho)$ is independent of a choice of distance on the space of quantum states.
Moreover, it captures the statistical behaviour of {rare deviations} where an estimated $\sigma$ differs substantially from the true $\rho$.

Our first contribution is a unified analysis of the rate functions of
several important tomography protocols. 
We focus on 
covariant protocols, for which the probability distribution $P^n(\sigma|\rho)$ is invariant under simultaneous unitary rotation of its arguments. We will compute the rate functions for Keyl's
protocol~\cite{Keyl_2006}, the sample-optimal protocol of Haah et al.~\cite{Haah_2017},
and protocols based on the pretty good measurement (PGM), including the
random-purification approach of~\cite{pelecanos2025mixed}. 

We introduce power-weighted generalizations of the Haah and PGM tomography protocols. Their rate functions are described by a common two-parameter family of quantum divergences \(I_{\beta,\gamma}(\sigma\|\rho)\), defined in Definition~\ref{def:beta-gamma-divergence}.
%
The power-weighted pretty good measurement with uniform spectral prior
has rate function
\[
    I^{\mathrm{PGM}}_{\beta}(\sigma\|\rho)
    =
    I_{\beta,\beta}(\sigma\|\rho).
\]
On the other hand, the power-weighted Haah protocol~\cite{Haah_2017} has rate function
\[
    I^{\mathrm{Haah}}_{\beta}(\sigma\|\rho)
    =
    I_{\beta,\infty}(\sigma\|\rho).
\]
Keyl's protocol, in turn, has rate function
$I^{\mathrm{Keyl}}(\sigma\|\rho) = D_R(\sigma\|\rho)$, the reverse
relative entropy~\cite{Keyl_2006}. 
The relationship between the rate functions is summarized by 
\begin{theorem}[Relationships between rate functions]
    \label{thm:rateFuncRelations}
    For every $\beta > 0$ and all $\sigma, \rho \in \mc D_d$, the following hold:
    \begin{equation}
        \label{eq:chainOfTomRateComps}
        I^{\mathrm{PGM}}_\beta(\sigma\|\rho) \leq I^{\mathrm{Haah}}_{\beta}(\sigma\|\rho) \leq I^{\mathrm{Keyl}}(\sigma\|\rho),
    \end{equation}
    and
    \begin{equation}
        \label{eq:tomRatesInLimit}
        \lim_{\beta \to \infty} I^{\mathrm{PGM}}_{\beta}(\sigma\|\rho) = \lim_{\beta \to \infty} I^{\mathrm{Haah}}_{\beta}(\sigma\|\rho) = I^{\mathrm{Keyl}}(\sigma\|\rho).
    \end{equation}
\end{theorem}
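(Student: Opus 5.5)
We will work directly with the variational form of the two-parameter family $I_{\beta,\gamma}(\sigma\|\rho)$ from Definition~\ref{def:beta-gamma-divergence}. The theorem then follows from two facts: monotonicity of $I_{\beta,\gamma}$ in its parameters, and identification of the limits. We expect $I_{\beta,\gamma}$ to be an optimization over a ``nuisance'' variable, as is typical for rate functions of covariant protocols obtained by Laplace's method. The nuisance variable is either a unitary $U$ relating the eigenbases of $\sigma$ and $\rho$, or a Young-diagram/spectrum variable $\lambda$. The parameters $\beta,\gamma$ act as inverse temperatures weighting penalty terms, roughly of the form
\[
I_{\beta,\gamma}(\sigma\|\rho)=\inf_{x}\Big\{ F(x;\sigma,\rho) + \tfrac{1}{\beta}\,G(x;\sigma) + \tfrac{1}{\gamma}\,H(x;\sigma)\Big\} ,
\]
with $F,G,H\ge 0$. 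The precise form should be read off from the definition. The first step is to put $I_{\beta,\gamma}$ in such a form, with nonnegative penalties whose coefficients decrease in $\beta$ and $\gamma$.

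\textbf{First inequality.} Once this form is available, $I^{\mathrm{PGM}}_\beta\le I^{\mathrm{Haah}}_\beta$ is monotonicity in $\gamma$: since $I_{\beta,\gamma}$ is nondecreasing in $\gamma$, we get $I_{\beta,\beta}\le I_{\beta,\infty}$.

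If the definition instead involves a Rényi-type quantity with parameter $\gamma$, we would use the standard monotonicity of (sandwiched) Rényi divergences in the order parameter. One subtlety is the sign convention. The natural direction for the PGM is that the uniform spectral prior adds an entropy term penalizing concentration, and this term can only lower the rate. We therefore expect the inequality in $\gamma$ to run the other way for Rényi-type terms, and we would check which convention holds carefully.

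\textbf{Second inequality.} For $I^{\mathrm{Haah}}_\beta\le I^{\mathrm{Keyl}}=D_R(\sigma\|\rho)=D(\rho\|\sigma)$, or whichever orientation the paper uses, we exhibit a feasible point in the infimum defining $I_{\beta,\infty}$ that recovers Keyl's expression. The candidate is the point where the eigenbasis of the estimate aligns with that used in Keyl's rate function, i.e.\ the Keyl optimizer, with the $\beta$-dependent penalty vanishing there.

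Alternatively, we can argue operationally. Keyl's rate function is the limit of the Haah rate as the power-weighting becomes sharp, so if $I_{\beta,\infty}$ is nondecreasing in $\beta$, the inequality follows from the limit statement. So the second inequality either drops out of monotonicity in $\beta$ combined with \eqref{eq:tomRatesInLimit}, or from a direct choice of test point.

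\textbf{The limits \eqref{eq:tomRatesInLimit}.} As $\beta\to\infty$ the penalty coefficients tend to zero, so $I_{\beta,\beta}$ and $I_{\beta,\infty}$ converge to $\inf_x F(x;\sigma,\rho)$. We then identify this infimum with $D_R(\sigma\|\rho)$; this is essentially Keyl's computation, via the Duistermaat--Heckman/Harish-Chandra asymptotics already used to derive $I^{\mathrm{Keyl}}$. Because the family is monotone, the limit exists, and it is $\le I^{\mathrm{Keyl}}$ by the chain of inequalities.

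For the matching lower bound we need an exchange of $\lim_\beta$ and $\inf_x$. Monotone convergence of an increasing family of lower semicontinuous functions on a compact set (the unitary group or the simplex of spectra) gives this, by a Dini/$\Gamma$-convergence argument.

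\textbf{Main obstacle.} The hardest step is the limit at boundary points, where $\sigma$ or $\rho$ is rank-deficient and the divergences may be infinite. There compactness still holds, but lower semicontinuity with extended values must be checked. We would first prove everything for full-rank states and then extend by monotone approximation, using that both sides are lower semicontinuous and that $+\infty=+\infty$ is handled consistently.
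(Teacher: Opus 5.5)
There is a genuine gap. The proposal never works with the explicit formula in Definition~\ref{def:beta-gamma-divergence}, and the structure you guess for it has the wrong monotonicity. Suppose $I_{\beta,\gamma}=\inf_x\{F+\tfrac1\beta G+\tfrac1\gamma H\}$ with $G,H\ge 0$. Then $I_{\beta,\gamma}$ is non\emph{increasing} in both parameters. That would give $I_{\beta,\beta}\ge I_{\beta,\infty}$, the reverse of the first inequality. It would also place each $I_{\beta,\infty}$ \emph{above} its $\beta\to\infty$ limit, so the limit could not be used to bound it by $I^{\mathrm{Keyl}}$. Finally, it contradicts your later appeal to an increasing family of lower semicontinuous functions. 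The correct mechanism follows from the Gibbs variational formula already used in the proof of Theorem~\ref{thm:general-PGM-exponent}:
\[
    I_{\beta,\gamma}(\sigma\|\rho)=\inf_{q\in\mc P_d}\Bigl\{(1+\beta)\sum_i q_i\log q_i-\sum_i q_i\log a^{(\beta)}_i+(\gamma-\beta)\,D(q\|x)\Bigr\}.
\]
Here the penalty coefficient \emph{grows} with $\gamma$. The paper packages the same fact as $I_{\beta,\gamma}=I_{\beta,\beta}+(1+\beta)D_r(x\|q)$, with $q_i\propto(a^{(\beta)}_i)^{1/(1+\beta)}$ and classical R\'enyi order $r=(\gamma-\beta)/(1+\gamma)\uparrow 1$. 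This gives monotonicity in $\gamma$ and the closed form $I_{\beta,\infty}=\sum_i x_i\log\bigl(x_i^{1+\beta}/a^{(\beta)}_i\bigr)$.

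For the second inequality and \eqref{eq:tomRatesInLimit}, both substantive ingredients are missing.
\begin{itemize}
\item \emph{Monotonicity in $\beta$.} The paper obtains this by log-majorization. It applies Horn's inequality $\prod_{i\le k}\lambda_i(A^{1/2}BA^{1/2})\le\prod_{i\le k}\lambda_i(A)\lambda_i(B)$ with $A=\sigma^{\alpha-\beta}$ and $B=\sigma^{\beta/2}\rho\sigma^{\beta/2}$. It observes that the full products coincide: both equal $\det\rho\,\det\sigma^{\gamma}$ after including $x_i^{\gamma-\alpha}$, resp.\ $x_i^{\gamma-\beta}$. It then sums the convex increasing function $t\mapsto e^{t/(1+\gamma)}$.
\item \emph{The value of the limit.} Here $\beta$ enters through $a^{(\beta)}$, not as a vanishing penalty. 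The identification with $D_R$ comes from $I_{\beta,\beta}=D^{\mathrm{Rev}}_{\beta/(1+\beta)}\to D_R$ \cite{audenaert2015alpha}.
\end{itemize}
With these two facts, the sandwich $I_{\beta,\beta}\le I_{\beta,\infty}=\lim_{\gamma}I_{\beta,\gamma}\le\lim_{\gamma}I_{\gamma,\gamma}=D_R$ yields both the second inequality and the limits. Appealing to Duistermaat--Heckman/HCIZ asymptotics does not supply the identification of the limit.

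Your ``test point'' is also unspecified, since $I_{\beta,\infty}$ is an explicit spectral formula rather than an infimum over unitaries containing Keyl's optimizer. A direct route to $I_{\beta,\infty}\le D_R$ does exist. In the eigenbasis of $\sigma$, Cauchy interlacing gives $\prod_{i\le k}a^{(\beta)}_i\ge\det(\Pi_k\sigma^{\beta/2}\rho\sigma^{\beta/2}\Pi_k|\Pi_k)=\bigl(\prod_{i\le k}x_i\bigr)^{\beta}\det(\Pi_k\rho\Pi_k|\Pi_k)$. Abel summation with weights $x_k-x_{k+1}\ge 0$ then turns this into $I_{\beta,\infty}\le D_R$. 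The limit still needs a genuine argument, however.

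Finally, $D_R(\sigma\|\rho)$ is Keyl's quantity \eqref{eq:revRelEnt_Formula}, not $D(\rho\|\sigma)$. It reduces to $D(\sigma\|\rho)$ when $\sigma$ and $\rho$ commute.
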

\noindent
For some states these inequalities become strict. 
Hence protocols
that all have optimal sample-complexity are nevertheless
strictly ordered according to their large-deviation performance. 

In some cases the rate functions of these different protocols can differ substantially.  Two examples are shown in Figures 
\ref{fig:RateFunctionCompAcrossBetas} and     \ref{fig:RateFunctionCompAcross_t}.  In all cases the rate functions are bounded above by the relative entropy $D(\sigma\|\rho)$.
The optimal rate function for covariant tomography protocols is $I^{Keyl}(\sigma\|\rho)$, as was shown in \cite{GMPW_2026}.

In fact, the results can be used to show that these protocols can be distinguished by their sample complexity for a different choice of loss function, Wasserstein distance. 
%
For a traceless Hermitian operator $X$ 
on the $m$-qubit Hilbert space (with a fixed, specified factorization into the qubit subsystems)
we define
\begin{equation}\label{eq:W1}
 \|X\|_{W_1}:=\frac12\inf\left\{
 \sum_{i=1}^m\|X_i\|_1:
 X=\sum_{i=1}^mX_i,\ X_i=X_i^\dagger,\ \Tr_iX_i=0
 \right\}.
\end{equation}
The quantum Wasserstein distance of order one~\cite{DePalma2021} and its
intensive normalization are defined by
\[
W_1(\rho,\sigma):=\|\rho-\sigma\|_{W_1},\quad
 \overline W_1(\rho,\sigma):=\frac1mW_1(\rho,\sigma).
\]
The key tool we use to establish our result is the quantum version of Marton's
inequality~\cite[Theorem~2]{DePalma2021} which states that, when $\rho$ is a product state,
\begin{equation}\label{eq:W-properties}
 W_1(\sigma,\rho)^2\le\frac m2D(\sigma\Vert \rho).
\end{equation}
In contrast, the standard Pinsker inequality contains no such factor of
$m$. This divergence--distance comparison can be combined with 
our previous result that the rate function
$I^{\mathrm{PGM}}_{\beta}(\sigma\|\rho)$
is strictly smaller than $I^{\mathrm{Keyl}}(\sigma\|\rho)$.
The result is the following 
separation in sample
complexity under this Wasserstein-type distance. 
%
\begin{theorem}[Separation of sample complexity]\label{thm:separation-sample}
 Let $d = 2^m$ with $m \ge 1$, let $0 < \varepsilon \le 1/16$ and $0 < \delta < 1$, and take $\tau = \mb I_d/d$. If
    \[
        \log\frac1\delta \ge 4d^2 \log\frac{2d}{\varepsilon},
    \]
    then
    \[
        \frac{N_{\mathrm{PGM}}(\tau, \overline W_1, \varepsilon, \delta)}
             {N_{\mathrm{Keyl}}(\tau, \overline W_1, \varepsilon, \delta)}
        \ge \frac{\varepsilon \log d}{32}.
    \]
\end{theorem}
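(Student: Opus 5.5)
We bound $N_{\mathrm{Keyl}}(\tau,\overline W_1,\varepsilon,\delta)$ from above and $N_{\mathrm{PGM}}(\tau,\overline W_1,\varepsilon,\delta)$ from below, each by a quantity of order $\log(1/\delta)$ divided by a rate. The assumption $\log\frac1\delta \ge 4d^2\log\frac{2d}{\varepsilon}$ is what lets the large-deviation exponents dominate all polynomial prefactors.

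\emph{Upper bound for Keyl.} The failure event is $\{\sigma:\overline W_1(\sigma,\tau)\ge\varepsilon\}$. The state $\tau=\mb I_d/d$ is a product of $m$ maximally mixed qubits, so Marton's inequality~\eqref{eq:W-properties} gives, for $\sigma$ in this event,
\[
D(\sigma\|\tau)\ge \frac{2}{m}W_1(\sigma,\tau)^2\ge 2m\varepsilon^2 .
\]
Keyl's protocol has rate function $D_R(\sigma\|\tau)$; with $\tau$ maximally mixed this is $D(\sigma\|\tau)=\log d-S(\sigma)$. We then need a finite-$n$ bound with explicit prefactor, namely
\[
P^n_{\mathrm{Keyl}}(\text{fail})\le (n+1)^{d^2}\exp\{-n\,2m\varepsilon^2\}.
\]
This comes from the Schur--Weyl type expansion: there are polynomially many Young diagrams, and each has probability at most $\exp\{-nD\}$, which is the standard Keyl/ODonnell-Wright estimate. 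Solving for $n$ and using the hypothesis to absorb the polynomial term gives
\[
N_{\mathrm{Keyl}}\lesssim \frac{\log(1/\delta)}{m\varepsilon^2}.
\]

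\emph{Lower bound for PGM.} Here we exhibit a single state $\sigma^\ast$ with $\overline W_1(\sigma^\ast,\tau)\ge\varepsilon$ (with margin, e.g.\ $2\varepsilon$) at which $I^{\mathrm{PGM}}_\beta(\sigma^\ast\|\tau)$ is much smaller than $m\varepsilon^2$. From the theorem's form we need a rate of order $\varepsilon/d$-ish relative to the Keyl bound, so that the ratio is $\varepsilon\log d/32$. The natural candidate is a state differing from $\tau$ on one qubit, or a spectrum perturbation whose Wasserstein cost is small but which is entropy-cheap.

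The cleaner route uses the structure of $I_{\beta,\beta}$ from Definition~\ref{def:beta-gamma-divergence}. For $\tau$ maximally mixed, the PGM-type divergence with uniform spectral prior should be controlled by something like a Rényi/annealed quantity that grows only linearly in the deviation, rather than quadratically. The plan is to compute $I_{\beta,\beta}(\sigma\|\mb I/d)$ explicitly as a function of the spectrum of $\sigma$ and bound it by $O(\varepsilon/\log d\cdot m\varepsilon^2)$ for a suitable $\sigma^\ast$. For the matching finite-$n$ lower bound on the failure probability, we lower bound $P^n_{\mathrm{PGM}}$ of a small trace-norm ball around $\sigma^\ast$ by $(n+1)^{-d^2}(\varepsilon/2d)^{d^2}\exp\{-nI\}$; this is where the factor $\log\frac{2d}{\varepsilon}$ in the hypothesis originates. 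Then
\[
N_{\mathrm{PGM}}\ge \frac{\log(1/\delta)-O\!\left(d^2\log\frac{2d}{\varepsilon}\right)}{I^{\mathrm{PGM}}_\beta(\sigma^\ast\|\tau)},
\]
and the hypothesis halves the numerator at worst.

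The main obstacle is the PGM lower bound. It requires two things: an explicit evaluation of $I_{\beta,\beta}$ at the maximally mixed state, showing that its dependence on the deviation is only linear (the strictness in Theorem~\ref{thm:rateFuncRelations}), and a non-asymptotic lower bound on the probability of the PGM landing near $\sigma^\ast$. I would first try the choice of $\sigma^\ast$ as a rank-deficient perturbation, such as projecting out a small eigenspace. There the reverse-type divergence stays finite and small while $\overline W_1$ is of order $\varepsilon$, which should produce the $\varepsilon\log d$ gain.
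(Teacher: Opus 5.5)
Your upper bound for Keyl's protocol follows the paper's argument. At $\tau$ each Schur label has probability at most $(n+1)^{d(d-1)/2}e^{-n(\log d-H(\lambda/n))}$. Marton's inequality turns $\overline W_1(\sigma,\tau)\ge\varepsilon$ into $\log d-H(\lambda/n)\ge 2m\varepsilon^2\ge 2\varepsilon^2\log d$, and the hypothesis absorbs the polynomial prefactor. Your cruder $(n+1)^{d^2}$ also works, since the paper's check gives $\log(n_0+1)\le L/d^2$.

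The PGM half has a genuine gap, which you identify but do not close. $N_{\mathrm{PGM}}$ at fixed $\delta$ is a finite-$n$ quantity: you must show the failure probability exceeds $\delta$ for every $n<L/(16\varepsilon)$, where $L=\log(1/\delta)$. Theorem~\ref{thm:general-PGM-exponent} only gives $\liminf_n\frac1n\log\mb P(O)\ge-\inf_O I$, with no control on how large $n$ must be, so it cannot deliver this. Your bound $(n+1)^{-d^2}(\varepsilon/2d)^{d^2}e^{-nI}$ is asserted without any mechanism.

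The missing idea is the random-purification representation of the Hilbert--Schmidt PGM from \cite{pelecanos2025mixed}; this is the $\beta=1$ protocol the theorem refers to. For every $n$ and every Borel $A$, $\mb P^{\mathrm{PGM}}_{\tau,n}[\sigma\in A]=\binom{n+d^2-1}{n}\int_{\{v:\Gamma(v)\in A\}}|\langle v,\Phi_d\rangle|^{2n}\,d\mu(v)$. Here $\mu$ is uniform on pure states of $\mb C^d\otimes\mb C^d$ and $\Gamma$ is the reduced state. Restrict to the trace-distance cap of radius $\varepsilon$ around a purification $v_0$ of a witness state. This cap has measure exactly $\varepsilon^{2(d^2-1)}$, and on it the overlap with $\Phi_d$ is at least $1-5\varepsilon$. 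That gives the bound $\varepsilon^{2(d^2-1)}(1-5\varepsilon)^n$ for all $n$, with no polynomial loss. The paper never passes through the rate function in this part.

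Your candidate witness would also fail. At $\tau$ the relevant PGM rate is $I_{1,1}(\sigma\|\tau)=-\log F(\sigma,\tau)$, a R\'enyi-$\tfrac12$ divergence of the spectrum against uniform, while Keyl's rate is $D(\sigma\|\tau)$. Take $\sigma^\ast$ proportional to a rank-$(d-k)$ projector (``projecting out an eigenspace''). Then $I_{\beta,\beta}(\sigma^\ast\|\tau)=\log\frac{d}{d-k}=D(\sigma^\ast\|\tau)$ for every $\beta$. On the failure set Marton forces this to be at least $2m\varepsilon^2$, so there is no gain at all.

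A separation needs a very non-flat spectrum. This is why the paper uses the spike $\zeta=(1-4\varepsilon)\tau+4\varepsilon|0^m\rangle\langle0^m|$. Transporting mass $4\varepsilon$ over average Hamming distance $m/2$ gives $\overline W_1(\zeta,\tau)=2\varepsilon$. Meanwhile $F(\zeta,\tau)=\frac1d\bigl(\sum_j\sqrt{z_j}\bigr)^2\ge 1-4\varepsilon$, so the exponent is $O(\varepsilon)$ rather than $\Theta(\varepsilon^2\log d)$.

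Your stated target $O(\frac{\varepsilon}{\log d}\,m\varepsilon^2)=O(\varepsilon^3)$ is unattainable. With $T$ the trace distance, $\overline W_1\le T\le\sqrt{1-F}$ forces $-\log F\ge\varepsilon^2$ on the failure set. What is needed is an exponent of order $\varepsilon$, namely the Keyl exponent $\varepsilon^2\log d$ divided by the claimed ratio $\varepsilon\log d$. The paper obtains $-\log(1-5\varepsilon)\le 8\varepsilon$.
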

\noindent
Here the pretty good measurement 
is the one studied in \cite{pelecanos2025mixed}, as described in  Section~\ref{sec:PGM-error-exponents}. 
Thus
two protocols which are
sample-optimal in trace distance exhibit a separation in sample
complexity in Wasserstein distance. Additionally, while the Wasserstein distance depends on the choice of factorization, note that the maximally mixed state has the same representation in each structure. Therefore, this separation is independent of the structure specified in \eqref{eq:W1}. 

The mathematical structure of this paper is as follows: After preliminaries in Section \ref{sec:prelim}, Theorem \ref{thm:LDP-covariant-tomography} provides the main large-deviation result, which reduces the comparison to an analysis of tomographic protocols 
within Young-diagram blocks. Proposition \ref{prop:LDP-seed-induced-general} specializes this to seed-based tomography protocols (see Section \ref{sec:SeedInducedProtocols} for more details), which we use to rederive the rate function of Keyl's protocol \cite{Keyl_2006} in Theorem \ref{thm:seedInduced_KeylRate} and the Haah protocol \cite{Haah_2017} in Theorem \ref{thm:seedInduced_HaahRate}. The rate functions for PGM tomography utilize Theorem \ref{thm:LDP-covariant-tomography}, giving us the rate function for general full-support priors in Theorem \ref{thm:general-PGM-exponent}, and for a protocol based on Weak-Schur sampling  Hilbert-Schidt (WSHS) PGM \cite{pelecanos2025mixed} in Theorem \ref{thm:WSHS_Exponent}. Theorem \ref{thm:rateFuncRelations} follows from these results in combination with Proposition~\ref{prop:beta-gamma-properties}. Section~\ref{sec:W1SampComp} is devoted to proving Theorem \ref{thm:separation-sample}. 

\subsection{Related Work}

Early work on the large deviations of tomography focused on pure-state tomography, for which Hayashi \cite{hayashi1998asymptotic} established the optimal error exponent for pure-state estimation. Hayashi \cite{Hayashi2002Analogs} later studied Fisher information metrics induced by large deviation exponents of hypothesis testing. 
Keyl \cite{Keyl_2006} introduced a covariant protocol for mixed-state estimation using weak Schur sampling and highest-weight projections, and computed the rate function. Keyl conjectured that this rate is the largest achievable among covariant protocols.  Sugiyama, Turner, and Murao \cite{Sugiyama2011ErrorProbability} characterised the large-deviation behavior of maximum-likelihood estimators from repeated informationally-complete measurements. Botero, Christandl and Vrana \cite{botero2021large} considered large-deviation principles of moment-map estimators on general compact connected Lie groups, recovering Keyl's protocol as a special case. Franks and Walter \cite{franks2022minimal} provided an alternate proof of Keyl's result by generalizing a connection between orbit closures and invariant polynomials to highest weight projections. Recently, we \cite{GMPW_2026} proved Keyl's conjecture for covariant tomography by establishing an asymptotic equipartition property for the highest-weight components of a state. One of us \cite{grootveld2026quantummethodtypes} described a discretization of Keyl's algorithm with the same rate, formulated as a quantum version of the classical method of types. 

Quantum state tomography has been studied extensively through its sample complexity. Estimating a state to error $\eps$ in trace distance for a rank $d$ operator requires\footnote{Here we neglect logarithmic and probability terms.} $\Theta\left(\frac{d^2}{\eps^2}\right)$ samples, due to an upper bound of O'Donnell and Wright \cite{o2016efficient} and matching lower bound due to Haah, Harrow, Ji, Wu and Yu \cite{Haah_2017}. Taking $r$ to be the rank of the true operator, Scharnhorst, Spilecki and Wright \cite{Scharnhorst2025OptimalLowerBounds} established the sharp rank dependent lower bound of $\Omega\left(\frac{rd}{\eps^2}\right)$, which matches the rank dependence achieved in \cite{o2016efficient, Haah_2017}. For infidelity, or equivalently squared Bures distance, the sample complexity is $\Theta\left(\frac{rd}{\eps}\right)$, due to lower bounds from Yuen \cite{Yuen2023improvedsample} and upper bounds due to Pelecanos, Spilecki and Wright \cite{pelecanos2025debiased} and Pelecanos, Spilecki, Tang and Wright \cite{pelecanos2025mixed}. Flammia and O'Donnell \cite{flammia2024quantum} showed that estimation in Bures $\chi^2$ divergence and quantum relative entropy $n = \tilde O\left(\frac{\sqrt{r d^3}}{\eps}\right)$ and  $\tilde O\left(\frac{rd}{\eps}\right)$ suffice. Pelecanos, Spelecki, Tan and Wright \cite{pelecanos2026keylwerneralg} later improved the Bures-$\chi^2$ upper bound by removing logarithmic factors. 

Three constructions are particularly relevant to the present work: Keyl's highest-weight protocol, the Schur-block protocol of Haah et al., and tomography based on pretty-good measurements. The Keyl algorithm \cite{Keyl_2006} was used as a primitive in the result of Chen, Li and Liu \cite{Chen2024OptimalTradeoff}, providing an optimal characterisation of the tradeoff between multi-copy measurements and sample complexity. Additionally, Pelecanos, Spilecki and Wright \cite{pelecanos2025debiased} derived an unbiased estimator based on Keyl's algorithm. The Haah protocol \cite{Haah_2017, Wright2016HowToLearn}, has been ported to a streaming setting by Hu, Cervero-Martín, Theil, Man\v{c}inska and Tomamichel \cite{Hu2026SampleOptimal} to implement memory-efficient tomography. PGM tomography was introduced in \cite{Haah_2017}, and has been studied in various forms \cite{Scharnhorst2025OptimalLowerBounds}. Recently, a modified form of PGM using the Hilbert-Schmidt prior gave an efficiently computable algorithm for trace-distance tomography \cite{pelecanos2025mixed}.

\medskip
\begin{figure}[!ht]
    \centering
    \includegraphics[width=0.8\linewidth]{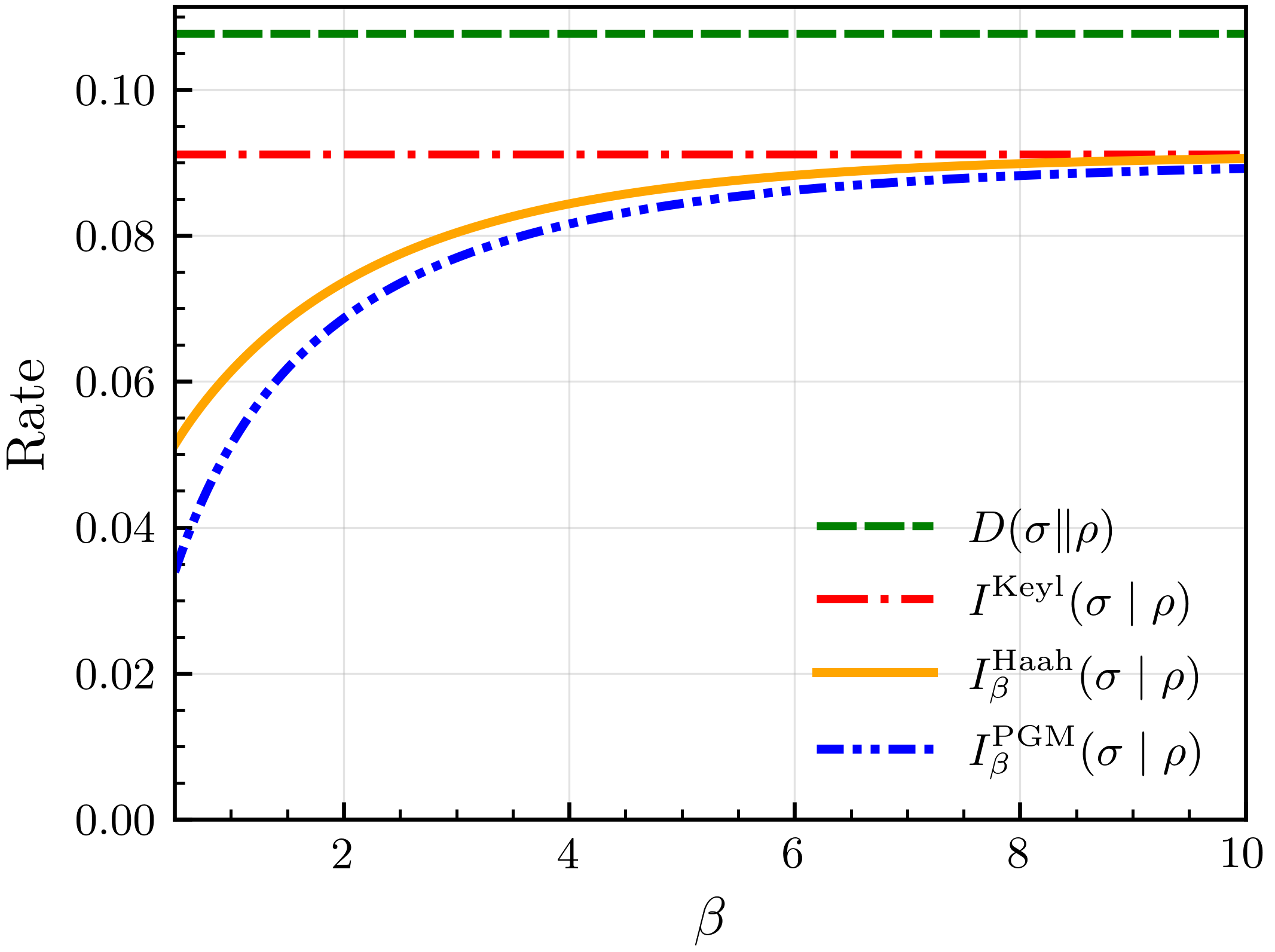}
    \caption{Rate functions for $\beta$-weighted generalizations of known protocols. Details provided in Appendix \ref{sec:appendix_FigureDetails}.
    }
    \label{fig:RateFunctionCompAcrossBetas}
\end{figure}

\begin{figure}[!ht]
    \centering
    \includegraphics[width=0.8\linewidth]{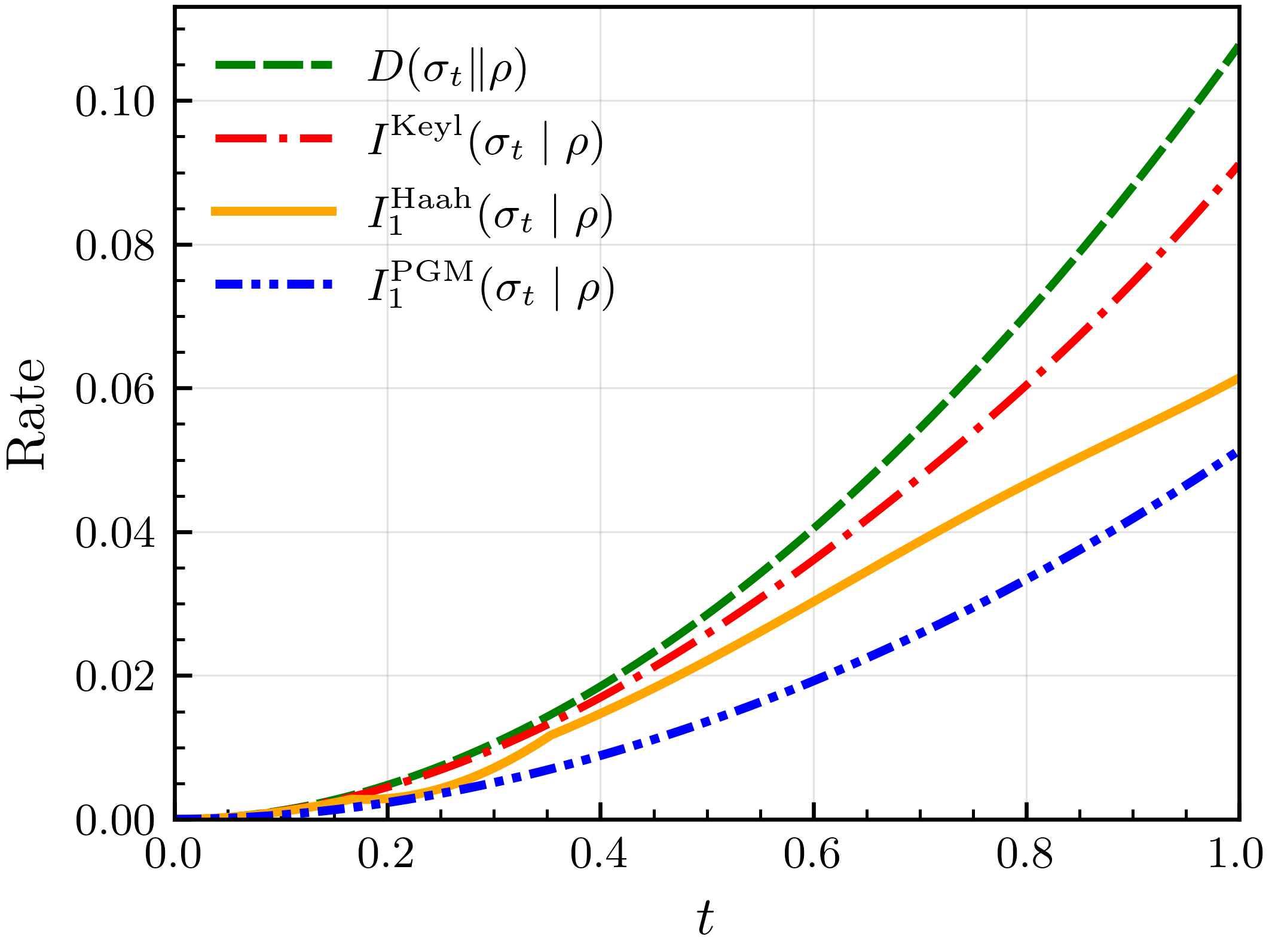}
    \caption{Rate functions computed for $\sigma_t = t \sigma + (1-t) \rho$. Details provided in Appendix \ref{sec:appendix_FigureDetails}.}
    \label{fig:RateFunctionCompAcross_t}
\end{figure}

\section{Preliminaries}\label{sec:prelim}


Throughout the paper, \(d\in\mb N\) denotes the dimension of the underlying
complex Hilbert space \(\mb C^d\).

We denote by
\[
    \mc D_d := \{\rho\in M_d(\mb C): \rho\ge 0,\ \Tr(\rho)=1\}
\]
the set of quantum states on \(\mb C^d\). We also denote by \(\U(d)\) the group of unitary operators on \(\mb C^d\), by \(\GL(d)\) the group of invertible linear operators on \(\mb C^d\) and by $\mc L(\mb C^d)$ the set of all linear operators.

Let
\[
    \mc P_d
    :=
    \left\{
        x=(x_1,\ldots,x_d)\in\mb R^d:
        x_i\ge 0,\ \sum_{i=1}^d x_i=1
    \right\}
\]
be the $d-1$-dimensional probability simplex, and let
\[
    \mc P_d^\downarrow
    :=
    \left\{
        x\in\mc P_d:
        x_1\ge x_2\ge \cdots \ge x_d
    \right\}
\]
be the set of probability vectors arranged in decreasing order. For any
vector \(\alpha\in\mb R^d\), we write \(\alpha^\downarrow\) for the vector
obtained by rearranging the entries of \(\alpha\) in decreasing order.

For \(\alpha,\lambda\in\mb R^d\) with
\(\sum_{i=1}^d \alpha_i=\sum_{i=1}^d \lambda_i\), we say that
\(\alpha\) is majorized by \(\lambda\), denoted by
\(\alpha\prec\lambda\), if
\[
    \sum_{i=1}^k \alpha_i^\downarrow
    \le
    \sum_{i=1}^k \lambda_i^\downarrow,
    \qquad 1\le k\le d-1.
\]

Fix a basis $\{\ket{i}\}_{i=1}^d$ for $\mb C^d$, then for any matrix $A \in \mb C^{d\times d}$, denote $\Delta_k(A)$ as the determinant of the leading \(k\times k\) principal minor of $A$ in this basis: suppose $A = \sum_{i,j=1}^d a_{ij} |i\rangle \langle j|$, denote $A_{k\times k} = \sum_{i,j=1}^k a_{ij} |i\rangle \langle j|$ and 
\begin{equation}\label{eq:determinant-principle-minor}
    \Delta_k(A) = \det A_{k\times k}.
\end{equation}

\subsection{Divergences}

Divergences play a central role in this work, and as such we introduce several divergences here along with their interpretations. Throughout this section, we take $\sigma, \rho \in \mc D_d$. We denote the quantum relative entropy with 
\begin{equation*}
    D(\sigma\|\rho) := \Tr[\sigma (\log \sigma - \log \rho)], \quad \text{ whenever } \supp(\sigma) \subseteq \supp(\rho),
\end{equation*} 
and $+\infty$ otherwise. This is known to be the proper Stein exponent for simple binary hypothesis testing problems \cite{hiai1991proper, ogawa2000strong}. Let  
\begin{equation*}
    F(\sigma, \rho) := \left(\Tr\left[ \left(\sigma^{1/2} \rho \sigma^{1/2}\right)^{1/2} \right]\right)^2 
\end{equation*}
denote the fidelity between states. Notably, $-\log F(\sigma, \rho)$ is the largest feasible rate function for covariant pure state tomography \cite{hayashi1998asymptotic}. The \textit{reverse sandwiched Rényi divergence} \cite{audenaert2015alpha} of order $\alpha \in (0,1)$ is defined to be
\begin{equation}
    \label{eq:RSRD_Formula}
    D_{\alpha}^{\rm{Rev}}(\sigma\|\rho) := \begin{cases}
        \frac{1}{\alpha - 1} \log \Tr\left[\left(\sigma^{ \frac{\alpha}{2(1-\alpha)} } \rho \sigma^{ \frac{\alpha}{2(1-\alpha)} }\right)^{1-\alpha}\right], & \sigma \rho \neq 0\\
        +\infty, & \text{otherwise}.
    \end{cases}
\end{equation}
Notable appearances of the reverse sandwiched Rényi divergence include the rate function for a class of permutation-invariant measurements \cite{notzel2015class}, in the large-deviation behavior of reverse pinching \cite{lipka2024quantum} and a quantum empirical distribution \cite{hayashi2025another}, and as the Hoeffding exponent for certain composite quantum hypothesis testing problems \cite{hayashi2026operational}. 

Letting $\alpha \to 1$, the reverse sandwiched divergence converges to the \textit{reverse relative entropy} \cite[Theorems 2 and 3]{audenaert2015alpha}. This quantity was introduced by Keyl \cite{Keyl_2006} as the large-deviation rate function of a particular quantum state tomography protocol. Recently the quantity has found operational interpretations for a particular composite quantum hypothesis testing problem \cite{hayashi2026operational}, and by the present authors as the proper rate function for covariant quantum state tomography \cite{GMPW_2026}. 

Take $\sigma \in \mc D_d$ with an eigen-decomposition $\sigma = \sum_{k=1}^d x_k \ket{k}\bra{k}$, where $x = (x_1, \dots, x_d) \in \mc P_d^\downarrow$. Given $k\in [d]$, define $\Pi_k = \sum_{i=1}^k\ket{i}\bra{i}$. For any $\rho \in \mc D_d$, the reverse relative entropy is given by 
\begin{equation}
    \label{eq:revRelEnt_Formula}
    D_R(\sigma\|\rho) = \begin{cases}
        \sum_{k=1}^d x_k \log x_k - \sum_{k=1}^d (x_k-x_{k+1}) \log \det\left(\Pi_k \rho \Pi_k | \Pi_k\right), & \rank(\sigma) = \rank(\sigma \rho)\\
        +\infty, &\text{otherwise}
    \end{cases}
\end{equation}
In the above expression, we define $x_{d+1} := 0$, and $\det(\Pi_k \rho \Pi_k|\Pi_k)$ to be the determinant restricted to the subspace of $\Pi_k$. 

We define a new two-parameter family of divergences, $I_{\beta, \gamma}$, which we will soon connect to rate functions of other well-known tomography protocols. 
\begin{definition}[$I_{\beta, \gamma}$ divergence]
    \label{def:beta-gamma-divergence}
    Let $0<\beta\leq\gamma$. The $(\beta, \gamma)$-quantum divergence is defined as 
    \begin{equation}
    \label{eq:beta-gamma-divergence}
        I_{\beta,\gamma}(\sigma\|\rho) :=
            -(1+\gamma) \log \sum_{i=1}^d\left(a_i^{(\beta)}(\sigma,\rho)x_i^{\gamma-\beta}\right)^{\frac1{1+\gamma}},
    \end{equation}
    whenever $\sigma\rho \neq 0$, and $+\infty$ otherwise.
    Here
    \[
        x=\spec^\downarrow(\sigma),\quad a^{(\beta)}(\sigma,\rho) = \spec^\downarrow(\sigma^{\beta/2}\rho\sigma^{\beta/2}).
    \]
\end{definition}
Note that $I_{\beta, \beta} = D_{\frac{\beta}{1+\beta}}^{\rm{Rev}}$ using the substitution $\beta = \frac{\alpha}{1-\alpha}$. We list a few basic properties of $(\beta, \gamma)$ divergences here. 
\begin{proposition}[Basic properties of $I_{\beta,\gamma}$]
    \label{prop:beta-gamma-properties}
    Let $0<\beta\leq\gamma$. Then \(I_{\beta,\gamma}(\sigma\|\rho)=0\) if and only if $\sigma = \rho$. Additionally,
    \begin{align*}
        I_{\beta,\beta}(\sigma\|\rho) \leq
        I_{\beta,\gamma}(\sigma\|\rho) &\nearrow
        I_{\beta,\infty}(\sigma\|\rho), \qquad \text{ as }\gamma\to\infty\\
        I_{\beta, \infty}(\sigma\|\rho) &\nearrow D_R(\sigma\|\rho), \quad \text{ as } \beta \to \infty
    \end{align*}
    where, 
    \[
        I_{\beta,\infty}(\sigma\|\rho) = \sum_{i=1}^d x_i\log \frac{x_i^{1+\beta}} {a_i^{(\beta)}(\sigma,\rho)} = -(\beta+1) H(x) - \max_{U \in U(d)} \Tr\left[U \sigma U^\dagger \log\left(\sigma^{\beta/2} \rho \sigma^{\beta/2}\right)\right]
    \]
    with the value $+\infty$ if $a_i^{(\beta)}(\sigma,\rho)=0$ for some $i \leq d$ such that $x_i > 0$.
\end{proposition}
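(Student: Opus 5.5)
The plan is to reduce every claim to elementary convexity and monotonicity facts by rewriting $I_{\beta,\gamma}$ as an expectation under the probability vector $x=\spec^\downarrow(\sigma)$. Set $r=\rank\sigma$, $t=1/(1+\gamma)\in(0,1/(1+\beta)]$, and $Y_i:=a_i^{(\beta)}/x_i^{1+\beta}$ for $i\le r$. Since $\rank(\sigma^{\beta/2}\rho\sigma^{\beta/2})\le r$, we have $a_i^{(\beta)}=0$ for $i>r$, so those indices drop out for every $\gamma\ge\beta$. A one-line computation then gives
\[
I_{\beta,\gamma}(\sigma\|\rho)=-\tfrac1t\log \sum_{i\le r} x_i\,Y_i^{\,t}=-\log \|Y\|_{L^t(x)} .
\]
The power-mean (Lyapunov) inequality says $t\mapsto\|Y\|_{L^t(x)}$ is nondecreasing in $t$. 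Hence $I_{\beta,\gamma}$ is nondecreasing in $\gamma$, and in particular $I_{\beta,\beta}\le I_{\beta,\gamma}$. As $t\downarrow0$ the power mean tends to the geometric mean $\exp(\sum_i x_i\log Y_i)$, which yields $I_{\beta,\gamma}\nearrow\sum_i x_i\log(x_i^{1+\beta}/a_i^{(\beta)})=I_{\beta,\infty}$. The value is $+\infty$ exactly when some $a_i^{(\beta)}=0$ with $x_i>0$, because the geometric mean then vanishes.

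For the closed form, I expand $\sum_i x_i\log x_i^{1+\beta}=-(1+\beta)H(x)$. The term $\sum_i x_i\log a_i^{(\beta)}$ equals $\max_U\Tr[U\sigma U^\dagger\log(\sigma^{\beta/2}\rho\sigma^{\beta/2})]$ by von Neumann's trace inequality: the maximum pairs both spectra in decreasing order and is attained by a unitary aligning the eigenbases. If the log is singular, the same argument applies on the relevant supports.

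For the faithfulness claim, I use $I_{\beta,\beta}=D^{\rm Rev}_{\beta/(1+\beta)}$, which is nonnegative and vanishes iff $\sigma=\rho$ by \cite{audenaert2015alpha}; this can also be shown directly via Araki--Lieb--Thirring and Hölder. Then $I_{\beta,\gamma}\ge I_{\beta,\beta}\ge0$, and $I_{\beta,\gamma}=0$ forces $\sigma=\rho$. Conversely, if $\sigma=\rho$ then $a_i^{(\beta)}=x_i^{1+\beta}$, so $Y\equiv1$ and $I_{\beta,\gamma}=0$.

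The main step is $I_{\beta,\infty}\nearrow D_R$ as $\beta\to\infty$. I will use the Abel summation $\sum_i x_i c_i=\sum_k (x_k-x_{k+1})\sum_{i\le k}c_i$ with $x_{d+1}=0$, applied to both sides. Comparing with \eqref{eq:revRelEnt_Formula}, it suffices to show for each $k$ with $x_k>x_{k+1}$ that
\[
g_k(\beta):=\log\frac{\prod_{i\le k}a_i^{(\beta)}}{\bigl(\prod_{i\le k}x_i\bigr)^{\beta}}
\]
is nonincreasing in $\beta$ and tends to $\log\det(\Pi_k\rho\Pi_k|\Pi_k)$; indices with $x_k=x_{k+1}$ carry zero weight. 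To see this, pass to antisymmetric powers. With $S=\Lambda^k\sigma$, $s=\lambda_{\max}(S)=\prod_{i\le k}x_i$, and $R=\Lambda^k\rho$, multiplicativity of $\Lambda^k$ gives
\[
\prod_{i\le k}a_i^{(\beta)}=\lambda_{\max}\bigl(S^{\beta/2}RS^{\beta/2}\bigr),\qquad e^{g_k(\beta)}=\bigl\|R^{1/2}(S/s)^{\beta/2}\bigr\|_\infty^2 .
\]
Since $(S/s)^{\delta/2}$ is a contraction and $(S/s)^{(\beta+\delta)/2}=(S/s)^{\beta/2}(S/s)^{\delta/2}$, submultiplicativity of the operator norm makes $g_k$ nonincreasing. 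When $x_k>x_{k+1}$ the top eigenvalue of $S$ is simple, with eigenvector $e_1\wedge\cdots\wedge e_k$. Therefore $(S/s)^{\beta/2}$ converges to the rank-one projector onto it, and the limit of $e^{g_k(\beta)}$ is $\langle e_1\wedge\cdots\wedge e_k,R\,e_1\wedge\cdots\wedge e_k\rangle=\det(\Pi_k\rho\Pi_k|\Pi_k)$.

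Combining the $k$-terms gives $I_{\beta,\infty}\nearrow D_R$. What remains is bookkeeping of the infinite cases: $\det(\Pi_k\rho\Pi_k|\Pi_k)=0$ for some $k<r$ with positive weight corresponds to the rank condition in \eqref{eq:revRelEnt_Formula}. I will handle it by letting both sides take the value $+\infty$ consistently, using the same monotone limit. This degenerate bookkeeping, together with the eigenvalue-degeneracy caveat, is where I expect the care to be needed.
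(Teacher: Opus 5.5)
Your proposal is correct. For the $\gamma$-part and for faithfulness it is the paper's argument in different form. Your identity $I_{\beta,\gamma}=-\log\|Y\|_{L^{1/(1+\gamma)}(x)}$ is equivalent to the paper's decomposition $I_{\beta,\gamma}=I_{\beta,\beta}+(1+\beta)D_{(\gamma-\beta)/(1+\gamma)}(x\|q)$ with $q_i\propto(a_i^{(\beta)})^{1/(1+\beta)}$. Monotonicity of the classical R\'enyi divergence in its order is the same Lyapunov inequality you use. Your form gives the $\gamma\to\infty$ limit (the geometric mean) and the $+\infty$ case immediately, where the paper uses L'H\^opital. The paper's form instead exhibits the gap $I_{\beta,\gamma}-I_{\beta,\beta}$ as a R\'enyi divergence. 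Faithfulness via H\"older is identical.

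The genuine difference is the limit $\beta\to\infty$.

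\emph{The paper's route.} It proves that $I_{\beta,\gamma}$ is nondecreasing in $\beta\in(0,\gamma]$ for each fixed $\gamma$. It does this via the log-majorization $\prod_{i\le k}\lambda_i(A^{1/2}BA^{1/2})\le\prod_{i\le k}\lambda_i(A)\lambda_i(B)$ with $A=\sigma^{\alpha-\beta}$, $B=\sigma^{\beta/2}\rho\sigma^{\beta/2}$, followed by a weak-majorization step with an increasing convex function. It then squeezes $I_{\beta,\beta}\le I_{\beta,\infty}\le\sup_\gamma I_{\gamma,\gamma}$ and imports $\lim_{\alpha\to1}D^{\rm Rev}_\alpha=D_R$ from Audenaert--Datta.

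\emph{Your route.} Your inequality $g_k(\beta+\delta)\le g_k(\beta)$ is exactly that log-majorization; your antisymmetric-power contraction argument is its standard proof. But the Abel summation with nonnegative weights $x_k-x_{k+1}$ replaces the majorization step. The rank-one spectral-projector limit in $\Lambda^k$ then evaluates the limit directly as leading minors. So you land on the paper's own formula \eqref{eq:revRelEnt_Formula} without any external limit theorem.

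\emph{What each buys.} Your argument is self-contained and identifies the limit exactly. The paper's route, at the cost of the citation, also gives monotonicity in $\beta$ at every finite $\gamma$. The same citation supplies $I_{\beta,\beta}\to D_R$, which Theorem~\ref{thm:rateFuncRelations} needs for the PGM limit. Your argument alone gives only $I_{\beta,\beta}\le I_{\beta,\infty}\le D_R$.

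The infinite-case bookkeeping you flag is harmless. With $r=\rank\sigma$, $\rank(\sigma^{\beta/2}\rho\sigma^{\beta/2})=\rank(\sigma\rho)$ does not depend on $\beta$; indeed $g_r(\beta)=\log\det(\Pi_r\rho\Pi_r|\Pi_r)$ for every $\beta$. Hence $I_{\beta,\infty}$ and $D_R$ are infinite simultaneously. Otherwise $\Pi_r\rho\Pi_r$ is positive definite on $\Pi_r$, so every minor with $k\le r$ is positive.
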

The proof of these properties can be found in Appendix \ref{sec:appendix_BetaGammaDivergenceProperties}.

For a pure state\(\sigma = \ket{v}\bra{v}\), we have  
\begin{equation*}
    D_R(\sigma\|\rho) = I_{\beta, \gamma}(\sigma\|\rho) = -\log F(\sigma, \rho).
\end{equation*}

\subsection{Schur--Weyl duality}
In this subsection, we review the necessary representation-theoretic background. Standard references on representation theory include \cite{FultonHarris1991,Sagan2001,GoodmanWallach2009}. From the perspective of quantum information theory, see \cite{Christandl2006BipartiteStates,Harrow2005SchurTransform, Hayashi2017, ODonnell2021, Leditzky2025RepTheoryQI}. 

Let \(d,n\in\mathbb N\), and write \([d]=\{1,\ldots,d\}\). 
We work with the \(n\)-partite Hilbert space
\[
    \mathcal H=(\mathbb C^d)^{\otimes n},
\]
equipped with its computational basis
\begin{equation}\label{eq:computational-basis}
    \bigl\{|i_1 i_2\cdots i_n\rangle : i_1, \dots, i_n\in [d]\bigr\}.
\end{equation}
Thus the notation \((\mathbb C^d)^{\otimes n}\) fixes both the tensor-product
structure and the computational bases. Let
\begin{equation}\label{eq:Young-diagram}
        \Lambda_{d,n}
    := \left\{\lambda=(\lambda_1,\ldots,\lambda_d)\in \mathbb Z_{\ge 0}^d:
    \lambda_1\ge \cdots \ge \lambda_d,\ 
    \sum_{i=1}^d \lambda_i=n\right\},
\end{equation}
which is the set of Young diagrams with \(n\) boxes and at most \(d\) rows. We denote $\lambda \vdash_d n$ as an element in $\Lambda_{d,n}$ and \(\ell(\lambda)\) as the number of nonzero
elements of \(\lambda\).

For each \(\lambda \vdash_d n\), let
\((\mathcal P_\lambda,p_\lambda)\) denote the irreducible \(S_n\)-representation indexed by \(\lambda\), where \(\mathcal P_\lambda\) is the Specht module. Let \((\mathcal Q_\lambda,q_\lambda)\) denote the irreducible polynomial \(\GL(d)\)-representation\footnote{Alternatively, $\U(d)$ representation.} of highest weight \(\lambda\), where
\(\mathcal Q_\lambda\) is the Schur module. Schur--Weyl duality decomposes the tensor space as
\[
    \mathcal H
    =
    \bigoplus_{\lambda \vdash_d n}
    \mathcal P_\lambda\otimes \mathcal Q_\lambda .
\]
After choosing orthonormal bases of $\mc P_\lambda$ and $\mc Q_\lambda$, this
decomposition is implemented by a unitary change of coordinates, the \textit{Schur transform},
\begin{equation}\label{eq:schur-transform-space}
    U_{\mathrm{Schur}}:
    (\mb C^d)^{\otimes n}
    \longrightarrow
    \bigoplus_{\lambda \vdash_d n}
    \mathcal P_\lambda\otimes \mathcal Q_\lambda .
\end{equation}
The corresponding orthonormal basis of $\bigoplus_{\lambda \vdash_d n}
    \mathcal P_\lambda\otimes \mathcal Q_\lambda $ is called a
\textit{Schur basis}.

The symmetric group \(S_n\) acts on \((\mb C^d)^{\otimes n}\) by permuting tensor factors:
\begin{equation}\label{eq:Sn-action}
    P(\pi)|i_1\cdots i_n\rangle
    :=
    |i_{\pi^{-1}(1)}\cdots i_{\pi^{-1}(n)}\rangle,
    \qquad \pi\in S_n.
\end{equation}
For an $X \in \mathrm{GL}(d)$, $X^{\otimes n}$ defines an action on  on \((\mb C^d)^{\otimes n}\). 
These two actions commute. In the Schur basis, we have 
\begin{equation}\label{eq:schur-combined-action}
    U_{\mathrm{Schur}}\bigl(P(\pi)X^{\otimes n}\bigr)U_{\mathrm{Schur}}^\dagger=
    \bigoplus_{\lambda \vdash_d n}p_\lambda(\pi)\otimes q_\lambda(X) = 
    \sum_{\lambda \vdash_d n}|\lambda\rangle\langle\lambda|
    \otimes p_\lambda(\pi)\otimes q_\lambda(X).
\end{equation}
Since \(q_\lambda\) is a polynomial representation, it extends uniquely from
\(\mathrm{GL}(d)\) to all \(X\in \mc L(\mathbb C^d)\). Via Schur's lemma, if $X^n \in \mc L((\mb C^d)^{\otimes n})$ is permutation-invariant, i.e.,
\[
P(\pi) X^n  P(\pi)^\dagger= X^n,\quad \forall \pi \in S_n,
\]
then for any $\lambda \vdash_d n$, there exists $X^n_\lambda \in \mc L(\mc Q_\lambda)$ such that 
\begin{equation}\label{eq:permutation-invariant-general}
    U_{\mathrm{Schur}} X^n\, U_{\mathrm{Schur}}^\dagger = 
    \sum_{\lambda \vdash_d n}
    |\lambda\rangle\langle\lambda|
    \otimes \textbf{1}_{\mc P_\lambda}\otimes X^n_\lambda.
\end{equation}
We write \(\mu_{\Haar}\) for the normalized Haar probability measure on the unitary group \(\U(d)\), namely the unique Borel probability measure satisfying $ \mu_{\Haar}(VU)=\mu_{\Haar}(U)=\mu_{\Haar}(UV)$ for fixed $U\in \U(d)$ and all \(V\in \U(d)\). The following property will be used repeatedly.
\medskip
\begin{lemma}[Schur orthogonality]\label{lemma:Schur-orthogonality}
Fix $\lambda \vdash_d n$. For every \(K_\lambda\in \mc L(\mathcal Q_\lambda)\),
\begin{equation}
    \int_{\mathcal U(d)}
    q_\lambda(U)\,K_\lambda\,q_\lambda(U)^\dagger\, d\mu_{\Haar}(U)
    =
    \frac{\Tr(K_\lambda)}{\dim \mathcal Q_\lambda}\,
    \mathbf 1_{\mathcal Q_\lambda}.
\end{equation}
\end{lemma}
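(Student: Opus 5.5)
The plan is to reduce the lemma to Schur's lemma. Define the twirled operator
\[
T(K_\lambda):=\int_{\U(d)} q_\lambda(U)\,K_\lambda\,q_\lambda(U)^\dagger\,d\mu_{\Haar}(U).
\]
The argument has three steps.

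\emph{Step 1: $T(K_\lambda)$ commutes with $q_\lambda$.} For any fixed $V\in\U(d)$, use that $q_\lambda$ is a homomorphism, that it is unitary on unitaries (so $q_\lambda(V)^\dagger=q_\lambda(V^\dagger)$), and that Haar measure is left-invariant. These give
\[
q_\lambda(V)\,T(K_\lambda)\,q_\lambda(V)^\dagger=\int q_\lambda(VU)\,K_\lambda\,q_\lambda(VU)^\dagger\,d\mu_{\Haar}(U)=T(K_\lambda).
\]
So $T(K_\lambda)$ lies in the commutant of $\{q_\lambda(U):U\in\U(d)\}$.

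\emph{Step 2: the commutant is trivial.} The restriction of the irreducible polynomial $\GL(d)$-representation $q_\lambda$ to $\U(d)$ stays irreducible. The reason is that a polynomial representation is determined by its values on $\U(d)$: $\U(d)$ is Zariski dense in $\GL(d)$, or one can argue via the complexification $\mathfrak{gl}_d=\mathfrak{u}_d\oplus i\,\mathfrak{u}_d$. Hence any $\U(d)$-invariant subspace is $\GL(d)$-invariant. Schur's lemma over $\mb C$ then gives $T(K_\lambda)=c\,\mathbf 1_{\mathcal Q_\lambda}$ for some scalar $c$. I expect this to be the only step needing justification beyond routine computation; I would cite the density argument or simply note that the paper already treats $q_\lambda$ as a $\U(d)$ irrep (see the footnote in the Schur--Weyl subsection).

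\emph{Step 3: fix the constant by taking traces.} By cyclicity of the trace, each integrand has trace $\Tr(K_\lambda)$, and linearity of the trace lets it pass through the integral. Hence $\Tr T(K_\lambda)=\Tr(K_\lambda)$. On the other hand $\Tr(c\,\mathbf 1_{\mathcal Q_\lambda})=c\dim\mathcal Q_\lambda$, so $c=\Tr(K_\lambda)/\dim\mathcal Q_\lambda$. This is the claimed identity.
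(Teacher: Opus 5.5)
Your proof is correct: left-invariance of Haar measure puts the twirled operator in the commutant of $q_\lambda|_{\U(d)}$, this restriction is irreducible by Zariski density of $\U(d)$ in $\GL(d)$, so Schur's lemma makes the operator a scalar, and taking the trace fixes that scalar. The paper states this lemma without proof as a standard fact, so yours is the standard argument, and it supplies the omitted justification, including the one nontrivial point (irreducibility of the restriction to $\U(d)$) that the paper only signals in its footnote.
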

Since \(q_\lambda\) is a polynomial representation, the map
\(X\mapsto q_\lambda(X)\) extends from \(\GL(d)\) to all
\(X\in \mc L(\mathbb C^d)\). We define the Schur polynomial, or Schur character,
by
\begin{equation}
    s_\lambda(X):=\Tr q_\lambda(X).
\end{equation}
When \(X\ge 0\) has eigenvalues $r = (r_1,\cdots,r_d)$ with $r_1\ge r_2\ge\cdots\ge r_d\ge 0$, we also write
\[
    s_\lambda(\diag(r))=s_\lambda(X).
\]

We will use the following standard estimates repeatedly.
\medskip
\begin{lemma}[Representation-theoretic estimates]\label{lemma:rep-data}
For \(\lambda\vdash_d n\), let \(\bar\lambda:=\lambda/n\). Then we have the following estimates 
\begin{enumerate}
    \item 
    \begin{equation}
        \label{eq:dimension-estimates}
        \dim \mathcal Q_\lambda\le(n+1)^{d(d-1)/2}.
    \end{equation}
    \item For $p \in \mc P_d$, denote the entropy $H(p):=-\sum_{i=1}^d p_i\log p_i$. We have
\begin{equation}
    \label{eq:specht-dimension-estimates}
    (n+d)^{-d(d+1)/2} e^{nH(\bar\lambda)}
    \le \dim \mathcal P_\lambda \le e^{nH(\bar\lambda)}.
\end{equation}
\item If \(\rho \ge 0\) has eigenvalues $r = (r_1,\cdots,r_d) \in \mc P_d^\downarrow$, then
\begin{equation}
    \label{eq:schur-polynomial-upper-lower}
    r^\lambda
    \le s_\lambda(\rho)
    \le
    \dim\mathcal Q_\lambda\,
    r^\lambda.
\end{equation}
\end{enumerate}
\end{lemma}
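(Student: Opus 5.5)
The three estimates are standard, and I will prove each directly from an explicit formula.

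\emph{Dimension bound for $\mathcal Q_\lambda$.} I use the Weyl dimension formula
\[
\dim\mathcal Q_\lambda=\prod_{1\le i<j\le d}\frac{\lambda_i-\lambda_j+j-i}{j-i}.
\]
Each factor satisfies
\[
\frac{\lambda_i-\lambda_j+j-i}{j-i}\le 1+\frac{\lambda_i-\lambda_j}{j-i}\le 1+n,
\]
since $0\le\lambda_i-\lambda_j\le n$ and $j-i\ge1$. There are $d(d-1)/2$ factors, which gives \eqref{eq:dimension-estimates}.

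\emph{Bounds for $\dim\mathcal P_\lambda$.} I compare with the multinomial coefficient $\binom{n}{\lambda}:=n!/\prod_i\lambda_i!$.

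For the upper bound, the Young/Frobenius rule shows that $\mathcal P_\lambda$ occurs in the permutation module $M^\lambda$, whose dimension is $\binom{n}{\lambda}$. Hence $\dim\mathcal P_\lambda\le\binom{n}{\lambda}\le e^{nH(\bar\lambda)}$. The last inequality is the usual method-of-types bound: expand $1=(\sum_i\bar\lambda_i)^n$ and keep only the term $\binom{n}{\lambda}\prod_i\bar\lambda_i^{\lambda_i}=\binom{n}{\lambda}e^{-nH(\bar\lambda)}$.

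For the lower bound, I use the determinantal (Frobenius) formula
\[
\dim\mathcal P_\lambda=\frac{n!}{\prod_i(\lambda_i+d-i)!}\prod_{i<j}(\lambda_i-\lambda_j+j-i).
\]
The Vandermonde product is at least $1$, since each factor is at least $1$. For the remaining factorials, $(\lambda_i+d-i)!\le\lambda_i!\,(\lambda_i+d)^{d-i}\le\lambda_i!\,(n+d)^{d-i}$. Therefore
\[
\dim\mathcal P_\lambda\ge\binom{n}{\lambda}(n+d)^{-d(d-1)/2}.
\]
Finally I need $\binom{n}{\lambda}\ge(n+1)^{-d}e^{nH(\bar\lambda)}$. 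This follows from the standard fact that the type class of $\lambda$ carries the largest multinomial probability under $\bar\lambda^{\otimes n}$, and there are at most $(n+1)^d$ types. Combining, the total loss is at most $(n+d)^{-d(d-1)/2-d}=(n+d)^{-d(d+1)/2}$, which is \eqref{eq:specht-dimension-estimates}.

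\emph{Schur polynomial bounds.} I write
\[
s_\lambda(r)=\sum_{T}r^{\mathrm{wt}(T)},
\]
a sum of $\dim\mathcal Q_\lambda$ monomials indexed by semistandard tableaux $T$ of shape $\lambda$. The highest-weight tableau contributes exactly $r^\lambda$, and all terms are nonnegative, so $s_\lambda(r)\ge r^\lambda$. Every weight $\mu=\mathrm{wt}(T)$ satisfies $\mu\prec\lambda$. Because $r$ is sorted in decreasing order, the map $\mu\mapsto\sum_i\mu_i\log r_i$ is maximized over the permutohedron of $\lambda$ at $\mu=\lambda$. So each monomial is at most $r^\lambda$, giving $s_\lambda(r)\le\dim\mathcal Q_\lambda\,r^\lambda$. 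This is \eqref{eq:schur-polynomial-upper-lower}.

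The case where some $r_i=0$ needs a short limiting or direct argument, since $\log r_i$ is then undefined; the monomial comparison holds by continuity. This boundary case, together with the precise exponent bookkeeping in the Specht lower bound, is the only delicate point.
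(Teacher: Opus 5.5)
Your proof is correct. For item 3's upper bound it takes a different route from the paper, and for items 1 and 2 it supplies arguments the paper does not give.

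\textbf{Items 1 and 2.} The paper simply cites Hayashi's book for these. You derive them from the Weyl dimension formula, Young's rule ($K_{\lambda\lambda}=1$), the Frobenius determinantal formula, and the method-of-types count. Your exponent bookkeeping is right: $(n+d)^{-d(d-1)/2}(n+1)^{-d}\ge(n+d)^{-d(d+1)/2}$.

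\textbf{Item 3, lower bound.} This matches the paper, which also notes that $r^\lambda$ occurs among the nonnegative monomials of $s_\lambda(\rho)$.

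\textbf{Item 3, upper bound.} The paper uses the Harish-Chandra--Itzykson--Zuber integral. It writes
\[
s_\lambda(\rho)/\dim\mathcal Q_\lambda
= I_{\mathrm{HCIZ}}(\log\rho,\,C+\diag(\lambda))/I_{\mathrm{HCIZ}}(\log\rho,\,C),
\qquad C=\diag(d-1,\dots,0).
\]
It then bounds the integrand pointwise using $\max_U\Tr[\log\rho\,U\diag(\lambda)U^\dagger]=\sum_i\lambda_i\log r_i$.

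You instead count the $\dim\mathcal Q_\lambda$ semistandard-tableau monomials and bound each by $r^\lambda$. This uses the majorization $\mathrm{wt}(T)\prec\lambda$ and the rearrangement inequality. In effect, you finish the paper's own remark that $r^\lambda$ is the largest term.

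\textbf{What each route buys.} Your argument is elementary and needs no analytic identity. It also handles singular $\rho$ cleanly: both sides are polynomials in $r$, and strictly positive decreasing vectors are dense, so your continuity remark suffices. The HCIZ route needs $\rho>0$ to define $\log\rho$, with a limit for singular or degenerate spectra, which the paper leaves implicit. Its advantage is that it expresses $s_\lambda$ as a matrix integral. A saddle-point analysis of that integral gives the exact exponential asymptotics rather than only two-sided bounds, and this is the connection the paper wants to highlight.
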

Equations \eqref{eq:dimension-estimates} and \eqref{eq:specht-dimension-estimates} are well known combinatorial bounds, see for example \cite[Section 6.2.1]{Hayashi2017}. Equation \eqref{eq:schur-polynomial-upper-lower} is proved in Appendix \ref{sec:appendix_SchurPoly_HCIZFormula}.

We will need the following fact about the highest-weight projections from the weight space of $\mc Q_\lambda$ (see for example \cite[(equation (141)]{Keyl_2006} or \cite[Section 4]{o2016efficient}).
\begin{lemma}[Highest-weight projection]
    \label{lem:HighestWeightProjection}
    Fix a basis for all operators, and take $\ket{\phi_\lambda}\bra{\phi_\lambda}$ the projector onto the highest weight component of $\mc Q_\lambda$ in this basis. For $\tau \in \mc L(\mb C^d)$, we have 
    \begin{equation*}
        \bra{\phi_\lambda} q_\lambda(\tau) \ket{\phi_\lambda} = \prod_{k=1}^d \Delta_k(\tau)^{\lambda_k - \lambda_{k+1}}.
    \end{equation*}
\end{lemma}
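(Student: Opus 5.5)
We prove the identity first for diagonal $\tau$, then extend it to upper-triangular $\tau$, and finally to arbitrary $\tau$ by a triangular factorization.

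\textbf{Setup.} Realize $\mc Q_\lambda$ inside $(\mb C^d)^{\otimes n}$ as in Schur--Weyl duality. Here $\ket{\phi_\lambda}$ is the (normalized) weight vector of weight $\lambda$, unique up to phase. It is annihilated by the raising operators $E_{ij}$, $i<j$, of $\mathfrak{gl}_d$.

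\textbf{Diagonal $\tau$.} For $\tau=\diag(t_1,\dots,t_d)$ we have $q_\lambda(\tau)\ket{\phi_\lambda}=t^\lambda\ket{\phi_\lambda}$. Then $t^\lambda=\prod_k t_k^{\lambda_k}$. Telescoping gives
\[
\prod_k \Delta_k(\tau)^{\lambda_k-\lambda_{k+1}}=\prod_k (t_1\cdots t_k)^{\lambda_k-\lambda_{k+1}}=\prod_j t_j^{\lambda_j},
\]
with the convention $\lambda_{d+1}=0$. So the identity holds for diagonal $\tau$.

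\textbf{Triangular $\tau$.} Let $N_+$ and $N_-$ be the unipotent upper- and lower-triangular groups. Since $E_{ij}\ket{\phi_\lambda}=0$ for $i<j$, exponentiating gives $q_\lambda(n_+)\ket{\phi_\lambda}=\ket{\phi_\lambda}$ for $n_+\in N_+$. The group $N_+$ is generated by such exponentials, and $q_\lambda$ is polynomial, so this holds on all of $N_+$.

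Taking adjoints, $q_\lambda(n_+)^\dagger=q_\lambda(n_+^\dagger)$. This holds because $q_\lambda$ is a $*$-representation in a unitary-compatible inner product, for which $q_\lambda(X)^\dagger=q_\lambda(X^\dagger)$ by polynomial extension from $\U(d)$. Hence $\bra{\phi_\lambda}q_\lambda(n_-)=\bra{\phi_\lambda}$ for $n_-\in N_-$.

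\textbf{General invertible $\tau$.} Suppose all leading minors $\Delta_k(\tau)$ are nonzero. Then the Gauss (LDU) decomposition $\tau=n_-\,h\,n_+$ exists, with $h$ diagonal and $\Delta_k(h)=\Delta_k(\tau)$. The equality of minors holds because multiplying by a unipotent lower matrix on the left, or a unipotent upper matrix on the right, preserves leading principal minors. Therefore
\[
\bra{\phi_\lambda}q_\lambda(\tau)\ket{\phi_\lambda}=\bra{\phi_\lambda}q_\lambda(n_-)q_\lambda(h)q_\lambda(n_+)\ket{\phi_\lambda}=\bra{\phi_\lambda}q_\lambda(h)\ket{\phi_\lambda},
\]
and the diagonal case finishes the argument.

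\textbf{Extension to all of $\mc L(\mb C^d)$.} Both sides are polynomials in the entries of $\tau$. The exponents $\lambda_k-\lambda_{k+1}$ are nonnegative integers, so the right-hand side is indeed a polynomial. The set $\{\tau:\Delta_k(\tau)\ne0\ \forall k\}$ is Zariski dense, hence the identity holds everywhere.

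\textbf{Main obstacle.} The step needing the most care is the annihilation/invariance of $\ket{\phi_\lambda}$ under $N_+$, together with the adjoint relation $\bra{\phi_\lambda}q_\lambda(n_-)=\bra{\phi_\lambda}$. This relies on the standard highest-weight theory of $\GL(d)$, which we cite as in \cite{Keyl_2006,o2016efficient}. One must also fix the convention that ``highest'' means dominant with respect to the upper-triangular Borel subgroup in the chosen basis, since that is what matches the leading minors.
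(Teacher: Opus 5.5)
Your proof is correct and complete. The paper does not prove this lemma; it quotes it from \cite[eq.~(141)]{Keyl_2006} and \cite[Section~4]{o2016efficient}. So your argument is a self-contained derivation rather than a parallel of an in-paper proof.

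Your route is the abstract highest-weight one. The matrix coefficient $\tau\mapsto\bra{\phi_\lambda}q_\lambda(\tau)\ket{\phi_\lambda}$ is left-$N_-$- and right-$N_+$-invariant and equals $h^\lambda$ on diagonal $h$. The Gauss factorization $\tau=n_-hn_+$ preserves leading principal minors, so it fixes the coefficient on the dense set where all $\Delta_k(\tau)\neq 0$, and polynomiality does the rest. The adjoint identity $q_\lambda(X)^\dagger=q_\lambda(X^\dagger)$ that you need follows most quickly from $(X^{\otimes n})^\dagger=(X^\dagger)^{\otimes n}$ and unitarity of $U_{\mathrm{Schur}}$. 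Your extension from $\U(d)$ also works.

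A more hands-on alternative writes the highest-weight vector explicitly inside $(\mb C^d)^{\otimes n}$. Take a tensor product, over the columns of $\lambda$, of the normalized antisymmetric vectors $e_1\wedge\cdots\wedge e_k$, with $k$ the column length. Since $\bra{e_1\wedge\cdots\wedge e_k}\tau^{\otimes k}\ket{e_1\wedge\cdots\wedge e_k}=\Delta_k(\tau)$, and $\lambda$ has exactly $\lambda_k-\lambda_{k+1}$ columns of length $k$, the identity holds for every $\tau\in\mc L(\mb C^d)$ at once. That route needs no LDU step, density argument, or adjoint identity. You only check that this vector has weight $\lambda$ and is killed by the raising operators, so after the Schur transform it equals $u\otimes\ket{\phi_\lambda}$ for a unit $u\in\mc P_\lambda$.

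Your version, in exchange, does not depend on a concrete model of $\mc Q_\lambda$. It also shows why only leading minors appear: highest weight is taken with respect to the upper-triangular Borel, which you rightly flag.
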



\section{Quantum state tomography}
\label{sec:QuantumStateTomography}
Let $\mathcal H_n=(\mathbb C^d)^{\otimes n}$ be the $n$-fold tensor product Hilbert space, equipped with its computational basis. Given $n$ copies of an unknown state $\rho$, a tomography protocol produces an estimator of $\rho$. 
Thus a tomography protocol is naturally a POVM whose outcomes correspond to elements of the state space \(\mathcal D_d\). To allow estimators with a continuum of possible values, we use Borel subsets of \(\mathcal D_d\), denoted as $\mathfrak B(\mathcal D_d)$. The general definition is given as follows:
\begin{definition}\label{def:tomography-general}
    For any $n \ge 1$, a quantum state tomography protocol with output space \(\mathcal D_d\), is a $\sigma$-additive positive-operator-valued measure
    \begin{equation}
        E_n:\mathfrak B(\mathcal D_d)\to \mc L(\mathcal H_n)_+,\quad E_n(\mathcal D_d)=\mathbf 1_{\mathcal H_n},\quad E_n(\emptyset) = 0.
    \end{equation}
\end{definition}
A tomography protocol can be realized as POVM densities and a reference measure. In fact, set $\mu_n(\cdot) = \frac{1}{d^n}\Tr(E_n(\cdot))$ and take the operator-valued Radon-Nikodym derivative $M_\sigma^{(n)}$ such that for any $A \in \mathfrak B(\mc D_d)$, we have 
\[
E_n(A) = \int_A M_{\sigma}^{(n)} d\mu_n(\sigma).
\]
Since we are given access to $\rho^{\otimes n}$ which is permutation-invariant, without loss of generality, one can always assume $M_\sigma^{(n)}$ is also permutation-invariant and then one has \eqref{eq:permutation-invariant-general}:
\begin{equation}
    M_{\sigma}^{(n)} = \sum_{\lambda \vdash_d n} |\lambda \rangle \langle \lambda| \otimes \textbf{1}_{\mc P_\lambda} \otimes K_\lambda(\sigma), \quad K_\lambda(\sigma) \in \mc L(\mc Q_\lambda)_+.
\end{equation}
Ideally a tomography protocol would be universal, in the sense that it should be independent of the unknown state. One way to guarantee this property is to require that a protocol be \emph{covariant}, meaning that acting with a unitary operator on the true state will act on the probability density of our estimate in the same manner. Covariant protocols are agnostic to the basis of the true operator, and they provide substantially more structure on the space of tomography protocols. 
We say that a quantum state tomography protocol $\{E_n\}_{n\ge 1}$ is \emph{covariant}, if for any $A \in \mathfrak B(\mc D_d)$, $U \in \U(d)$,
\[
E_n(U A U^\dagger) = U^{\otimes n} E_n(A) (U^\dagger)^{\otimes n},\quad UAU^\dagger
    :=
    \{U\sigma U^\dagger:\sigma\in A\}.
\]
A convenient normal form of covariant protocols is given in \cite[Theorem 3.5]{GMPW_2026}. That is, if the protocol $\{E_n\}_{n\ge 1}$ is covariant, then there exist unitarily-invariant probability measures $\mu_n$ on $\mc D_d$ and POVM density $\{M_\sigma^{(n)}\}_{\sigma \in \mc D_d}$ of the form
    \begin{equation} \label{eq:normal-form-POVM-covariant}
        M_{U\diag(x)U^\dagger}^{(n)}=U_{\mathrm{Schur}}^\dagger
    \left(\sum_{\lambda\vdash_d n} \dim \mc Q_\lambda|\lambda\rangle\langle\lambda| \otimes \mathbf 1_{\mc P_\lambda}\otimes q_\lambda(U)K_{\lambda,x}q_\lambda(U^\dagger)\right) U_{\mathrm{Schur}},
    \end{equation}
such that for any $A \in \mathfrak B(\mc D_d)$ and $\rho \in \mc D_d$, we have
\[
    \Tr(E_n(A) \rho^{\otimes n}) = \int_{A} \Tr(M_\sigma^{(n)} \rho^{\otimes n})d\mu_n(\sigma).
\]
For each $\lambda \vdash_d n$ and $x \in \mc P_d^\downarrow$, we have $K_{\lambda,x} \in \mc L(\mc Q_\lambda)_+$ and 
\begin{equation}
    \int_{\mc P_d^\downarrow}
    \Tr K_{\lambda,x}d\nu_n(x) = 1,\quad \forall \lambda \vdash_d n,
\end{equation}
where $\nu_n$ is the pushforward measure of $\mu_n$ on the spectrum, i.e., if $s: \mc D_d \to \mc P_d^\downarrow,\ s(\sigma) = \spec(\sigma)^\downarrow$, $\nu_n = \mu_n s^{-1}$. 

In summary, a general covariant tomography scheme can be interpreted as the following protocol:
\begin{tcolorbox}[
    colback=white,
    colframe=black,
    boxrule=0.8pt,
    arc=1pt,
    left=6pt,
    right=6pt,
    top=6pt,
    bottom=6pt
]
Given \(n\)-copies of \(\rho\):
\begin{enumerate}
    \item First apply the Schur transform to \(\rho^{\otimes n}\).
    \item In the Schur basis, apply the POVM of the form 
    \[\sum_{\lambda\vdash_d n} \dim \mc Q_{\lambda}
        |\lambda\rangle\langle\lambda|
        \otimes \mathbf 1_{\mc P_\lambda}
        \otimes
        q_\lambda(U)K_{\lambda,x}q_\lambda(U^\dagger),\]
    where $x \sim \nu_n$, $U \sim \mu_{\Haar}$ and $\int_{\mc P_d^\downarrow}
    \Tr K_{\lambda,x}d\nu_n(x)
    = 1$.
    \item Collect the measurement outcomes \(\widehat{x}\) and \(\widehat{U}\), and output an estimator
    \[
        \sigma
        =
        \widehat{U}\operatorname{diag}(\widehat{x})\widehat{U}^{\dagger}.
    \]
\end{enumerate}
\end{tcolorbox}

\subsection{Sample complexity}
For a fixed covariant tomography protocol $\mathfrak T = \{E_n\}_{n \ge 1}$, we assume that it admits a normal form \eqref{eq:normal-form-POVM-covariant}. Then for any measurable set $A\subseteq \mc D_d$, the probability of getting an estimate in $A$ given an unknown state $\rho$ is given by 
\begin{equation}
    \mb P_\rho(A):=\Tr(E_n(A)\rho^{\otimes n}) = \int_A \Tr\left(M_\sigma^{(n)} \rho^{\otimes n}\right)d\mu_n(\sigma).
\end{equation}
The sample complexity of the tomography protocol, dependent on the distance measure $\mathrm{dist}(\cdot,\cdot)$, is defined as follows:
\begin{definition}[Distance-dependent sample complexity]\label{def:sample}
    Suppose $\mathfrak T = \{E_n\}_{n \ge 1}$ is a tomography protocol, $\mathrm{dist}(\cdot,\cdot)$ is a distance measure, and  $\varepsilon,\delta > 0$ are approximate error and confidence level, the sample complexity of a tomography protocol at an unknown state $\rho$ is defined by
    \begin{equation}
        N_{\mathfrak T}(\rho, \mathrm{dist},\varepsilon,\delta):= \inf\{n\ge 1:  \mb P_\rho (\{\sigma: \mathrm{dist}(\rho,\sigma) \ge \varepsilon\} ) \le \delta\}.
    \end{equation}
\end{definition}
Our definition coincides with the standard definition of sample complexity, if we choose the distance measure as either the trace distance or the fidelity measure~\cite{Haah_2017, pelecanos2025mixed}.

\subsection{Large deviation theory}
\label{subsec:LDP-compact-spaces}
We review the basic large-deviation results used throughout this work. Our presentation is specialized to compact metric spaces and follows \cite[Appendix~A]{Keyl_2006} and \cite[Chapter~III]{hollander2000large}.

Let \((\mc X,d_{\mc X})\) be a compact metric space, and let \(\mathfrak B(\mc X)\) denote its Borel \(\sigma\)-algebra. Throughout this
subsection, \(\{\mu_n\}_{n\geq 1}\) denotes a sequence of probability
measures on \(\mc X\).
\begin{definition}[Rate function]
\label{def:rate-function-compact}
    A function
    \[
        I:\mc X\longrightarrow[0,\infty]
    \]
    is called a \textit{rate function} if \(I\not\equiv+\infty\) and \(I\) is lower semicontinuous, namely,
        \[
            d_{\mc X}(x_k,x) \to 0
            \quad\Longrightarrow\quad
            I(x)\leq\liminf_{k\to\infty}I(x_k).
        \]
\end{definition}
The large deviation principle (LDP) identifies the asymptotic behavior of a sequence of probability measures, $\mu_n$, by the rate function $I$. 
\begin{definition}[Large deviation principle]
\label{def:LDP-compact}
    We say that \(\{\mu_n\}_{n\geq1}\) satisfies a
    \textit{large deviation principle} on \(\mc X\), with speed \(n\) and
    rate function \(I\), if
    \begin{enumerate}
        \item for every closed (compact) set \(F\subseteq\mc X\),
        \begin{equation}\label{eq:LDP-upper}
            \limsup_{n\to\infty}
            \frac1n\log\mu_n(F)\le-\inf_{x\in F}I(x);
        \end{equation}
        \item for every open set \(O\subseteq\mc X\),
        \begin{equation}\label{eq:LDP-lower}
            \liminf_{n\to\infty}\frac1n\log\mu_n(O)\geq
            -\inf_{x\in O}I(x).
        \end{equation}
    \end{enumerate}
\end{definition}
In the following, we briefly review standard tools in large deviation theory. For complete proofs, we refer to \cite{hollander2000large}.  
\begin{lemma}[Equivalence of the LDP and the Laplace principle]\label{lem:LDP-Laplace-equivalence}
    Let \(\mc X\) be a compact metric space. Then
    \(\{\mu_n\}\) satisfies an LDP with rate function \(I\) if and only if it satisfies the Laplace principle with the same rate function, i.e., for any continuous function $f : \mc X \to \mb R$, 
    \begin{equation}
         \lim_{n\to \infty}-\frac1n\log\int_{\mc X}
    e^{-nf(x)}d\mu_n(x) = \inf_{x \in \mc X} \{f(x) + I(x)\}.
    \end{equation}
\end{lemma}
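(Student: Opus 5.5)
We prove the two directions separately, using compactness of $\mc X$ and the fact that a continuous $f$ on a compact space is bounded and uniformly continuous.

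\emph{LDP $\Rightarrow$ Laplace principle.} Fix a continuous $f$ and $\eps>0$. Cover $\mc X$ by finitely many closed sets $F_1,\dots,F_k$ on each of which the oscillation of $f$ is at most $\eps$; uniform continuity provides such a cover by small closed balls.

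For the upper bound on the integral (the lower bound on $-\frac1n\log$), write
\[
\int e^{-nf}\,d\mu_n\le \sum_j e^{-n\inf_{F_j}f}\,\mu_n(F_j).
\]
Taking $\limsup\frac1n\log$ of the finite sum picks out the largest term. By \eqref{eq:LDP-upper} this gives the bound
\[
\max_j\Bigl(-\inf_{F_j}f-\inf_{F_j}I\Bigr)\le -\inf_x\{f(x)+I(x)\}+\eps .
\]
For the matching bound, fix $x_0$ with $I(x_0)<\infty$ and let $O$ be an open ball around $x_0$ on which $f\le f(x_0)+\eps$. Then $\int e^{-nf}\,d\mu_n\ge e^{-n(f(x_0)+\eps)}\mu_n(O)$. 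Applying \eqref{eq:LDP-lower} and $\inf_O I\le I(x_0)$, and then optimizing over $x_0$ and letting $\eps\to0$, shows that the limit exists and equals $\inf\{f+I\}$.

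\emph{Laplace principle $\Rightarrow$ LDP.} This direction carries the real content. For a closed set $F$, take the continuous functions $f_m(x)=m\min\{d_{\mc X}(x,F),1\}$. These vanish on $F$, so $\mu_n(F)\le\int e^{-nf_m}\,d\mu_n$. Hence
\[
\limsup_n \tfrac1n\log\mu_n(F)\le-\inf_x\{f_m(x)+I(x)\}.
\]
We must show that $\inf\{f_m+I\}\to\inf_F I$ as $m\to\infty$. This is the main obstacle, and it is where lower semicontinuity and compactness enter. Let $x_m$ be near-minimizers of $f_m+I$. If the values $\inf\{f_m+I\}$ stay bounded, then $d_{\mc X}(x_m,F)\to0$. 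By compactness a subsequence converges to some $x\in F$, and lower semicontinuity gives $I(x)\le\liminf I(x_m)\le\liminf\inf\{f_m+I\}$. If instead the values diverge, the bound is trivial.

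For an open set $O$ and a point $x_0\in O$ with $I(x_0)<\infty$, choose $\eps>0$ with $B(x_0,\eps)\subseteq O$. Set $g(x)=m\min\{d_{\mc X}(x,x_0)/\eps,1\}$, which equals $m$ off $O$. Then
\[
\int e^{-ng}\,d\mu_n\le\mu_n(O)+e^{-nm},
\]
and the Laplace principle gives that the left side is $\ge e^{-n(I(x_0)+o(1))}$, since $\inf\{g+I\}\le g(x_0)+I(x_0)=I(x_0)$. Choosing $m>I(x_0)$, the term $e^{-nm}$ is negligible, so $\liminf\frac1n\log\mu_n(O)\ge -I(x_0)$. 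Optimizing over $x_0\in O$ yields \eqref{eq:LDP-lower}. The rate function in both directions is the same $I$, which completes the equivalence.
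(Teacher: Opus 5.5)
Your argument is correct and complete. The only omission is the trivial case $F=\emptyset$ in the closed-set bound, where $d_{\mc X}(\cdot,F)$ is undefined but $\mu_n(F)=0$.

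The paper does not prove this lemma; it quotes it as standard and refers to den Hollander. In the general theory on Polish spaces, the forward direction is Varadhan's lemma. The converse can be obtained from Bryc's inverse Varadhan lemma together with exponential tightness (automatic on a compact space), after which one identifies the resulting rate function with $I$.

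You instead give the direct test-function proof, in the style of Dupuis--Ellis:
\begin{itemize}
\item for the forward direction, a finite closed cover on which $f$ has small oscillation;
\item for the converse, the explicit functions $m\min\{d_{\mc X}(\cdot,F),1\}$ and $m\min\{d_{\mc X}(\cdot,x_0)/\eps,1\}$.
\end{itemize}

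This gives a short, self-contained proof that shows exactly where the hypotheses are used. Compactness supplies the finite cover and the convergent subsequence of near-minimizers. Lower semicontinuity then gives $\inf_x\{f_m(x)+I(x)\}\uparrow\inf_F I$. Here compactness of $\mc X$ does the job that goodness of the rate function (compact sublevel sets) does in the non-compact theory.

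The cited machinery is more general but heavier than the paper needs, since all its spaces ($\mc P_d^\downarrow$, $\U(d)$, $\mc D_d$, $\mc P_d^\downarrow\times\U(d)$) are compact.
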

The contraction principle describes how an LDP transforms under a continuous map. It is the main tool for passing from a joint outcome distribution to the distribution of a reconstructed estimator \cite[Section~III.5]{hollander2000large}. Extensions to suitable measurable
maps are discussed in \cite{Mariani2018}.

\begin{lemma}[Contraction principle]
\label{lem:contraction-principle}
    Let \(\mc X\) and \(\mc Y\) be compact metric spaces, and let $\Phi:\mc X\longrightarrow\mc Y$ be a continuous map. Suppose that \(\{\mu_n\}\) satisfies an LDP on \(\mc X\) with rate function \(I\). Define the pushforward measures
    \[
        \nu_n:= \mu_n \circ \Phi^{-1}
        \quad
        \nu_n(A):=\mu_n\bigl(\Phi^{-1}(A)\bigr),
        \quad
        A\in\mathfrak B(\mc Y).
    \]
    Then \(\{\nu_n\}\) satisfies an LDP on \(\mc Y\) with rate function
    \begin{equation}\label{eq:contraction-rate}
        J(y)
        :=\inf_{\substack{x\in\mc X\\ \Phi(x)=y}}
        I(x).
    \end{equation}
\end{lemma}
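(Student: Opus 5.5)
This is the contraction principle, stated as a standard lemma on compact metric spaces, so the plan is to verify the two LDP bounds for $\nu_n=\mu_n\circ\Phi^{-1}$ directly. First I would check that $J$ is a rate function. It is not identically $+\infty$: pick $x_0$ with $I(x_0)<\infty$; then $J(\Phi(x_0))\le I(x_0)$. For lower semicontinuity I would use compactness. Since $I$ is lower semicontinuous on the compact set $\Phi^{-1}(\{y\})$, which is closed because $\Phi$ is continuous, the infimum defining $J(y)$ is attained whenever the fibre is nonempty.

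Now take $y_k\to y$ with $\liminf_k J(y_k)<\infty$, and pass to a subsequence realizing the liminf. Choose minimizers $x_k$ with $\Phi(x_k)=y_k$ and $I(x_k)=J(y_k)$. By compactness of $\mc X$, a further subsequence has $x_k\to x$. Continuity gives $\Phi(x)=y$, and then
\[
J(y)\le I(x)\le\liminf_k I(x_k)=\liminf_k J(y_k).
\]

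Next I would transfer the bounds. For a closed $F\subseteq\mc Y$, the preimage $\Phi^{-1}(F)$ is closed in $\mc X$ by continuity, so the upper bound \eqref{eq:LDP-upper} for $\mu_n$ gives
\[
\limsup_n \tfrac1n\log\nu_n(F)=\limsup_n\tfrac1n\log\mu_n(\Phi^{-1}(F))\le-\inf_{x\in\Phi^{-1}(F)}I(x).
\]
It remains to identify the right-hand side:
\[
\inf_{x\in\Phi^{-1}(F)}I(x)=\inf_{y\in F}\ \inf_{x:\Phi(x)=y}I(x)=\inf_{y\in F}J(y).
\]
This is a tautological regrouping of the infimum over the fibres, with the convention $\inf\emptyset=+\infty$ covering $y\notin\Phi(\mc X)$. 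For an open $O\subseteq\mc Y$, the preimage $\Phi^{-1}(O)$ is open, and the same regrouping applied to the lower bound \eqref{eq:LDP-lower} gives the claim.

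The only mildly delicate step is lower semicontinuity of $J$, which genuinely needs compactness (or goodness of $I$) to extract convergent minimizers. Everything else is set-theoretic bookkeeping. Alternatively, I could use Lemma~\ref{lem:LDP-Laplace-equivalence}: for continuous $f$ on $\mc Y$, the composition $f\circ\Phi$ is continuous on $\mc X$, and
\[
\inf_x\{f(\Phi(x))+I(x)\}=\inf_y\{f(y)+J(y)\}.
\]
This yields the Laplace principle for $\nu_n$ immediately, but it still requires separately checking that $J$ is a rate function, so I would present the direct argument.
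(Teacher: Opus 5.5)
Your proof is correct and is the standard argument the paper cites from \cite[Section~III.5]{hollander2000large} without reproducing it: preimages of closed and open sets under the continuous map $\Phi$ are closed and open, $\inf_{\Phi^{-1}(F)} I = \inf_F J$, and lower semicontinuity of $J$ comes from compactness. The only difference is cosmetic: the textbook gets lower semicontinuity from compactness of $\{J\le c\}=\Phi(\{I\le c\})$, whereas you extract convergent minimizers sequentially, and both rely on $I$ attaining its infimum on the compact fibres $\Phi^{-1}(\{y\})$.
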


Finally, we review how to evaluate the rate function at a point. 
\begin{lemma}[Evaluation of the rate function]
\label{lem:evaluation-rate-function}
    Suppose that \(\{\mu_n\}\) satisfies an LDP on \(\mc X\) with rate function \(I\). For \(x\in\mc X\), let
    \[
        N_\varepsilon(x):=\{z\in\mc X:d_{\mc X}(z,x)<\varepsilon\}.
    \]
    Then
    \begin{equation}\label{eq:local-recovery-rate-compact}
    \begin{aligned}
        I(x) =
        \lim_{\varepsilon\downarrow0}
        \liminf_{n\to\infty}
        -\frac1n
        \log\mu_n\bigl(N_\varepsilon(x)\bigr)=
        \lim_{\varepsilon\downarrow0}
        \limsup_{n\to\infty}
        -\frac1n
        \log\mu_n\bigl(N_\varepsilon(x)\bigr).
    \end{aligned}
    \end{equation}
\end{lemma}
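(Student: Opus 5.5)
The plan is to sandwich both the $\liminf$ and the $\limsup$ of $-\frac1n\log\mu_n(N_\varepsilon(x))$ between two quantities that both tend to $I(x)$ as $\varepsilon\downarrow0$. We use the two halves of Definition~\ref{def:LDP-compact} for the two sides, and the lower semicontinuity from Definition~\ref{def:rate-function-compact} to close the gap. We also note that $\varepsilon\mapsto\mu_n(N_\varepsilon(x))$ is nondecreasing. Hence both $\varepsilon\mapsto\liminf_n-\frac1n\log\mu_n(N_\varepsilon(x))$ and the corresponding $\limsup$ are nonincreasing in $\varepsilon$, so the limits as $\varepsilon\downarrow0$ exist in $[0,\infty]$.

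\emph{Upper bound.} Since $N_\varepsilon(x)$ is open and contains $x$, the lower bound \eqref{eq:LDP-lower} gives
\[
\liminf_{n\to\infty}\tfrac1n\log\mu_n(N_\varepsilon(x))\ \ge\ -\inf_{z\in N_\varepsilon(x)}I(z)\ \ge\ -I(x).
\]
Equivalently, $\limsup_n -\frac1n\log\mu_n(N_\varepsilon(x))\le I(x)$ for every $\varepsilon>0$. This is trivially true if $I(x)=+\infty$.

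\emph{Lower bound.} Let $\overline B_\varepsilon(x):=\{z:d_{\mc X}(z,x)\le\varepsilon\}$. This set is closed, hence compact, and contains $N_\varepsilon(x)$. By monotonicity and \eqref{eq:LDP-upper},
\[
\limsup_{n\to\infty}\tfrac1n\log\mu_n(N_\varepsilon(x))\ \le\ \limsup_{n\to\infty}\tfrac1n\log\mu_n(\overline B_\varepsilon(x))\ \le\ -\inf_{z\in\overline B_\varepsilon(x)}I(z).
\]
Therefore
\[
\inf_{\overline B_\varepsilon(x)}I\ \le\ \liminf_n -\tfrac1n\log\mu_n(N_\varepsilon(x))\ \le\ \limsup_n -\tfrac1n\log\mu_n(N_\varepsilon(x))\ \le\ I(x).
\]

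\emph{Main step.} It remains to show $\lim_{\varepsilon\downarrow0}\inf_{\overline B_\varepsilon(x)}I=I(x)$. This is where lower semicontinuity enters, and it is the only nonformal step. The map $\varepsilon\mapsto\inf_{\overline B_\varepsilon(x)}I$ is nonincreasing in $\varepsilon$ and bounded above by $I(x)$, so its limit $L$ exists and satisfies $L\le I(x)$.

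For the reverse inequality, we use that a lower semicontinuous function attains its infimum on a compact set. So we pick $z_\varepsilon\in\overline B_\varepsilon(x)$ with $I(z_\varepsilon)=\inf_{\overline B_\varepsilon(x)}I$. Then $z_\varepsilon\to x$, and lower semicontinuity gives
\[
I(x)\le\liminf_{\varepsilon\downarrow0}I(z_\varepsilon)=L.
\]
This argument covers the case $I(x)=+\infty$ as well: then $I(z_\varepsilon)\to\infty$, and the sandwich forces both limits to be $+\infty$.

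Letting $\varepsilon\downarrow0$ in the sandwich yields both equalities in \eqref{eq:local-recovery-rate-compact}.
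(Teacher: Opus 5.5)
Your proof is correct. The open-ball LDP lower bound gives $\limsup_n -\frac1n\log\mu_n(N_\varepsilon(x))\le I(x)$, and the closed-ball upper bound gives $\inf_{\overline B_\varepsilon(x)} I$ as a lower bound for the $\liminf$. Lower semicontinuity, applied to minimizers $z_\varepsilon\to x$, then closes the sandwich, including when $I(x)=+\infty$.

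The paper does not prove this lemma itself; it defers to den Hollander's book, and your argument is essentially the standard one found there.
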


Let $\{E_n\}_{n\geq 1}$ be a tomography protocol in the sense of
Definition~\ref{def:tomography-general}. For each true state
$\rho\in\mc D_d$, the protocol induces a probability measure on the
space of estimators,
\begin{equation}
\label{eq:probability-tomography-general}
    \widehat\mu_n^\rho(A)
    :=
    \Tr\left(E_n(A)\rho^{\otimes n}\right).
\end{equation}
A large-deviation principle for $\{\widehat\mu_n^\rho\}$ quantifies the
exponential cost of estimating $\sigma$ when the true state is $\rho$.
By Lemma~\ref{lem:evaluation-rate-function}, its rate function is given by
\[
\begin{aligned}
    I(\sigma\|\rho)
    &=
    \lim_{\varepsilon\downarrow 0}
    \liminf_{n\to\infty}
    -\frac1n
    \log
    \widehat\mu_n^\rho
    \bigl(N_\varepsilon(\sigma)\bigr)
    \\
    &=
    \lim_{\varepsilon\downarrow 0}
    \limsup_{n\to\infty}
    -\frac1n
    \log
    \widehat\mu_n^\rho
    \bigl(N_\varepsilon(\sigma)\bigr).
\end{aligned}
\]
Whenever $I(\sigma\|\rho)=0$ if and only if $\sigma=\rho$, we refer to
$I(\cdot\|\rho)$ as the divergence induced by the protocol.


\subsection{Large deviation principle for covariant tomography protocols}
Set
\[
    \mc X_d:=\mc P_d^\downarrow\times\U(d),
\]
and define the continuous map
\begin{equation}\label{eq:reconstruction-map}
    \Phi:\mc X_d\longrightarrow\mc D_d,
    \qquad
    \Phi(x,U):=U\diag(x)U^\dagger.
\end{equation}
For each \(\rho \in \mc D_d,\ \lambda\vdash_d n\), define the weak Schur distribution on normalized Young diagrams:
\begin{equation}\label{eq:weak-schur-distribution}
    \mb P_{n,\rho}(\lambda/n)
    :=
    \dim \mc P_\lambda \,s_\lambda(\rho).
\end{equation}
From \cite[Lemma 2]{Haah_2017} we have 
\begin{equation}
    \label{eq:WeakSchur_CantGetHigherRank}
    \mb P_{n, \rho}(\lambda/n) = 0,
\end{equation}
whenever $\ell(\lambda) > \rank(\rho)$, where $\ell(\lambda) = \# \{\lambda_i > 0\}$. 

For a sequence of unitarily invariant measures $\{\mu_n\}$ induced by a covariant tomography protocol, let $\nu_n = \mu_n s^{-1}$ be our spectral measure, where $s(\sigma) := \spec^{\downarrow}(\sigma)$.
Whenever \(s_\lambda(\rho)>0\), define the conditional probability measure on \(\mc X_d\) by
\begin{equation}\label{eq:conditional-kernel-covariant}
\begin{aligned}
    \eta_n^{\rho,\lambda}(B)
    :=
    \frac{\dim \mc Q_\lambda}{s_\lambda(\rho)}
    \iint_{(x,U)\in B}
    \Tr\bigl(
        q_\lambda(U^\dagger\rho U)K_{\lambda,x}^{(n)}
    \bigr)
    d\nu_n(x)d\mu_{\Haar}(U),\quad B \in \mathfrak B(\mc X_d).
\end{aligned}
\end{equation}
For \(s_\lambda(\rho)=0\), the measure
\(\eta_n^{\rho,\lambda}\) may be chosen arbitrarily.

The pair-outcome distribution and the state-estimator distribution are then given by
\begin{align}
    &\widetilde\mu_n^\rho(B)
    =\sum_{\lambda\vdash_d n}
    \mb P_{n,\rho}(\lambda/n)
    \eta_n^{\rho,\lambda}(B),
    \label{eq:pair-measure-mixture}
    \\
    & \widehat\mu_n^\rho
    =\widetilde\mu_n^\rho\circ \Phi^{-1}.
    \label{eq:state-pair-pushforward}
\end{align}

Recall that $\{\mb P_{n,\rho}\}_{n \ge 1}$ satisfies an LDP on \(\mc P_d^\downarrow\) with rate function $D(\cdot\| r)$ where $r = \spec^\downarrow(\rho)$, see \cite[Theorem 3.1]{Keyl_2006}. We 
combine this weak Schur LDP with conditional large deviations inside the
individual Schur blocks:
\begin{theorem}[LDP for covariant tomography]
\label{thm:LDP-covariant-tomography}
    Suppose that there exists a jointly lower semicontinuous function
    \[
        J_\rho: \mc P_d^\downarrow  \times \mc X_d \longrightarrow[0,\infty]
    \]
    such that, for every sequence
    \(\lambda^{(n)}\vdash_d n\) satisfying
    \[
        \mb P_{n,\rho}\bigl(\lambda^{(n)}/n\bigr)>0,
        \quad
        \frac{\lambda^{(n)}}{n}
        \longrightarrow y\in\mc P_d^\downarrow,
    \]
    the measures \(\eta_{n}^{\rho,\lambda^{(n)}}\) satisfy an LDP with rate function \(J_\rho(y,\cdot)\). Then \(\{\widetilde\mu_n^\rho\}\) satisfies an LDP on \(\mc X_d\) with rate function
    \begin{equation}\label{eq:pair-rate-covariant}
        \widetilde I_\rho(x,U) =\inf_{y\in\mc P_d^\downarrow} \Bigl\{D(y\|r)+J_\rho(y,x,U)\Bigr\}, \quad r = \spec^\downarrow(\rho).
    \end{equation}
     Consequently, via the contraction principle,
     \(\{\widehat\mu_n^\rho\}\) satisfies an LDP on \(\mc D_d\) with rate function
    \begin{equation}\label{eq:state-rate-covariant}
        I(\sigma\|\rho)=\inf_{\substack{y\in\mc P_d^\downarrow,\ (x,U)\in\mc X_d\\
            \Phi(x,U)=\sigma}}\Bigl\{D(y\|r)+J_\rho(y,x,U)\Bigr\}.
    \end{equation}
\end{theorem}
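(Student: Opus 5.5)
The plan is to work on the compact metric space $\mc X_d=\mc P_d^\downarrow\times\U(d)$ and prove the \emph{Laplace principle} for $\{\widetilde\mu_n^\rho\}$ with rate function $\widetilde I_\rho$. By Lemma~\ref{lem:LDP-Laplace-equivalence} this is equivalent to the LDP. The statement \eqref{eq:state-rate-covariant} then follows from Lemma~\ref{lem:contraction-principle} applied to the continuous map $\Phi$ of \eqref{eq:reconstruction-map}.

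First I check that $\widetilde I_\rho$ is a rate function. The map $(y,x,U)\mapsto D(y\|r)+J_\rho(y,x,U)$ is jointly lower semicontinuous, since $D(\cdot\|r)$ is lsc and $J_\rho$ is lsc by hypothesis. An infimum over the compact factor $\mc P_d^\downarrow$ of a jointly lsc function is lsc in the remaining variables, by a standard subsequence argument. Moreover $\widetilde I_\rho\not\equiv\infty$, because taking $y=r$ and using that $J_\rho(r,\cdot)$ is a rate function gives a finite value.

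Fix a continuous $f:\mc X_d\to\mb R$. By \eqref{eq:pair-measure-mixture},
\[
Z_n:=\int e^{-nf}\,d\widetilde\mu_n^\rho=\sum_{\lambda\vdash_d n}\mb P_{n,\rho}(\lambda/n)\int e^{-nf}\,d\eta_n^{\rho,\lambda}.
\]
The number of Young diagrams satisfies $|\Lambda_{d,n}|\le(n+1)^d$, which is subexponential. Hence $-\frac1n\log Z_n$ agrees up to $o(1)$ with $\min_\lambda$ over $\lambda$ with $\mb P_{n,\rho}(\lambda/n)>0$ of
\[
-\tfrac1n\log\mb P_{n,\rho}(\lambda/n)-\tfrac1n\log\int e^{-nf}\,d\eta_n^{\rho,\lambda}.
\]

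\emph{Lower bound on $-\frac1n\log Z_n$.} Pass to a subsequence realizing $\liminf_n-\frac1n\log Z_n$. Along it, choose the minimizing $\lambda^{(n)}$, and extract a further subsequence with $\lambda^{(n)}/n\to y$. For the weak Schur factor, apply the LDP upper bound \eqref{eq:LDP-upper} to closed balls $\overline{B(y,\epsilon)}$ and let $\epsilon\to0$. By lower semicontinuity this gives $\limsup\frac1n\log\mb P_{n,\rho}(\lambda^{(n)}/n)\le-D(y\|r)$. For the block factor, apply the hypothesized LDP of $\eta_n^{\rho,\lambda^{(n)}}$ together with Lemma~\ref{lem:LDP-Laplace-equivalence}. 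One must check that the hypothesis still applies along a subsequence; I will either extend $\lambda^{(n)}$ to the full sequence by interpolating diagrams converging to $y$, or note that the Laplace limit along a subsequence follows from the upper bound alone. This gives the limit $\inf_{(x,U)}\{f+J_\rho(y,\cdot)\}$. Summing, the liminf is at least $D(y\|r)+\inf(f+J_\rho(y,\cdot))\ge\inf(f+\widetilde I_\rho)$.

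\emph{Upper bound on $-\frac1n\log Z_n$.} Fix $y$ with $D(y\|r)<\infty$, so that $\supp y\subseteq\supp r$. Construct $\lambda^{(n)}\vdash_d n$ by rounding $ny$, with $\ell(\lambda^{(n)})\le\operatorname{rank}\rho$ and $\lambda^{(n)}/n\to y$. Then Lemma~\ref{lemma:rep-data} gives
\[
\mb P_{n,\rho}(\lambda/n)\ge(n+d)^{-d(d+1)/2}e^{nH(\bar\lambda)}r^\lambda=\mathrm{poly}(n)^{-1}e^{-nD(\bar\lambda\|r)}.
\]
This is positive, and $D(\bar\lambda\|r)\to D(y\|r)$ by continuity of $D(\cdot\|r)$ on the face $\{\supp\subseteq\supp r\}$. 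Bounding $Z_n$ below by the single $\lambda^{(n)}$ term and using the Laplace principle for $\eta_n^{\rho,\lambda^{(n)}}$ yields
\[
\limsup_n-\tfrac1n\log Z_n\le D(y\|r)+\inf\bigl(f+J_\rho(y,\cdot)\bigr).
\]
Minimizing over $y$ gives $\inf(f+\widetilde I_\rho)$.

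The main obstacle is the lower bound. Its subsequence and diagonal bookkeeping requires that the block LDPs be applied along arbitrary convergent diagram sequences, and that the weak Schur estimate hold pointwise at a single diagram rather than on a set. Both are handled by lower semicontinuity of $D(\cdot\|r)$ and the polynomial counting of diagrams.
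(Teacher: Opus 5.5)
Your proposal is correct and follows the same route as the paper. You prove the Laplace principle for the mixture $\widetilde\mu_n^\rho=\sum_\lambda \mb P_{n,\rho}(\lambda/n)\,\eta_n^{\rho,\lambda}$ by combining the weak Schur LDP with the conditional Laplace limits, get lower semicontinuity of $\widetilde I_\rho$ from the infimum over the compact factor, and finish with the contraction principle. The only difference is the level of detail in the key step: the paper compresses it into an appeal to the Laplace principle for $\mb P_{n,\rho}$ plus a claimed uniform convergence $L_{n,f}\to L_f$, whereas you make it explicit with the $(n+1)^d$ diagram count, the pointwise bound $\mb P_{n,\rho}(\lambda/n)\ge \mathrm{poly}(n)^{-1}e^{-nD(\bar\lambda\|r)}$, and a subsequence argument. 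Of your two options for applying the hypothesis along a subsequence, the interpolation option is the one that works: positivity along the subsequence forces $\supp y\subseteq\supp r$, so rounding $ny$ supplies the missing diagrams.
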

\begin{proof}
    Fix \(f\in C(\mc X_d)\), and write \(z=(x,U) \in \mc X_d\). For any $\lambda \vdash_d n$ with \(\mb P_{n,\rho}(\lambda/n)>0\), set functions on $\mc P_d^\downarrow$:
    \[
        L_{n,f}(\lambda/n)
        :=-\frac1n \log\int_{\mc X_d} e^{-nf(z)}d\eta_n^{\rho,\lambda}(z), \quad 
        L_f(y):=\inf_{z\in\mc X_d}\bigl\{f(z)+J_\rho(y,z)\bigr\}.
    \]
    The conditional LDP of $\eta_n^{\rho,\lambda^{(n)}}$ and Lemma~\ref{lem:LDP-Laplace-equivalence} imply that, for every Young sequence
    \(\lambda^{(n)}/n \to y\) with $\mb P_{n,\rho}(\lambda^{(n)}/n)>0$,
    \[
        L_{n,f}\bigl(\lambda^{(n)}/n\bigr)
        \longrightarrow
        L_f(y).
    \]
    Using \eqref{eq:pair-measure-mixture}, we have
    \[
    \begin{aligned}
        \int_{\mc X_d}
        e^{-nf(z)}d\widetilde\mu_n^\rho(z)
        &=
        \sum_{\lambda\vdash_d n}
        \mb P_{n,\rho}(\lambda/n)
        e^{-nL_{n,f}(\lambda/n)} \\
        & = \int_{\mc P_d^\downarrow} e^{-nL_{n,f}(y)} d\mb P_{n,\rho}(y).
    \end{aligned}
    \]
    Therefore, the Laplace principle for \(\mb P_{n,\rho}\) and the uniform convergence $L_{n,f} \to L_f$ give 
    \begin{align*}
        &\lim_{n\to\infty} -\frac1n\log\int_{\mc X_d}
        e^{-nf(z)}d\widetilde\mu_n^\rho(z)\\
        & = \lim_{n\to\infty} -\frac1n \log\int_{\mc P_d^\downarrow} e^{-nL_{n,f}(y)} d\mb P_{n,\rho}(y) \\
        &= \inf_{y\in\mc P_d^\downarrow}\bigl\{D(y\|r)+L_f(y)\bigr\}\\ 
        & = \inf_{y\in\mc P_d^\downarrow}\biggl\{D(y\|r)+ \inf_{z\in\mc X_d}\bigl\{f(z)+J_\rho(y,z)\bigr\} \biggr\} =\inf_{z\in\mc X_d}\bigl\{f(z)+\widetilde I_\rho(z)\bigr\}.
    \end{align*}
    Thus \(\{\widetilde\mu_n^\rho\}\) satisfies the Laplace principle
    with rate \(\widetilde I_\rho\), and therefore satisfies the LDP by
    Lemma~\ref{lem:LDP-Laplace-equivalence}.

    Since
    \[
        (y,z)\longmapsto D(y\|r)+J_\rho(y,z)
    \]
    is lower semicontinuous on the compact space
    \(\mc P_d^\downarrow\times\mc X_d\), its infimum over \(y\) is also lower semicontinuous, so
    \(\widetilde I_\rho(x,U) = \inf_{y\in\mc P_d^\downarrow} \Bigl\{D(y\|r)+J_\rho(y,x,U)\Bigr\}\) is a rate function. Finally, \eqref{eq:state-rate-covariant} follows from Lemma~\ref{lem:contraction-principle}.
\end{proof}

\section{Rate functions for general seed-induced tomography protocols}
\label{sec:SeedInducedProtocols}

The Schur decomposition \eqref{eq:permutation-invariant-general} suggests a natural method of performing tomography: first estimate the spectrum by measuring the Young diagrams, then determine the eigenspace through a measurement on the $\rm{GL}(d)$ representation. Taking the normalized Young diagram as our spectral estimator eliminates a redundancy in the spectral component of the general covariant protocol considered thus far. In this section we study   protocols using this estimate, which we refer to as \textit{seed-induced} tomography schemes. 

A seed-induced protocol is defined by taking the operator $K_{\lambda,x}$ in \eqref{eq:normal-form-POVM-covariant} to be 
\begin{equation}
    K_{\lambda,x} =\frac{d^n}{\dim \mc Q_\lambda \dim \mc P_\lambda} \delta_{\lambda/n}(x) K_\lambda,
\end{equation}
and the spectrum distribution $\nu_n$ supported on the normalized Young diagram, given by 
\begin{equation}
    \nu_n:= \sum_{\lambda \vdash_d n} \frac{\dim \mc Q_\lambda \dim \mc P_\lambda}{d^n} \delta_{\lambda/n}.
\end{equation}

We call the family 
\[
    K=\{K_\lambda:\lambda\vdash_d n,\ n\ge 1\}
\]
\textit{seed operators} with every \(K_\lambda\) a density operator on \(\mc Q_\lambda\). The seed-induced tomography protocol consists of the following two steps.

\begin{tcolorbox}[
    colback=white,
    colframe=black,
    boxrule=0.8pt,
    arc=1pt,
    left=6pt,
    right=6pt,
    top=6pt,
    bottom=6pt
]
Given \(n\) copies of \(\rho\):
\begin{enumerate}
    \item \textit{Weak Schur sampling}.
    First apply weak Schur sampling, i.e., apply the Schur transform and then measure the block label $\lambda$ using the projections onto each block. The probability of the
    outcome \(\lambda\) is given by $\mb P_{n,\rho}$ in \eqref{eq:weak-schur-distribution}.
    \item \textit{Eigenspace detection.}
    Conditional on the outcome \(\lambda\), the seed operator
    \(K_\lambda\) induces the POVM 
    \begin{equation}\label{eq:POVM-eigenspace-detection}
        M^K_{U\mid\lambda}
        := \dim\mc Q_\lambda
        \bigl(
            |\lambda\rangle\langle\lambda|
            \otimes \mathbf 1_{\mc P_\lambda}
            \otimes
            q_\lambda(U)K_\lambda q_\lambda(U^\dagger)
        \bigr),
        \quad U \sim \mu_{\Haar}
    \end{equation}
    with the measurement outcome $U$ giving the eigenspace. The final outcome is $U \diag(\overline{\lambda}) U^\dagger$.
\end{enumerate}
\end{tcolorbox}
Equivalently, the joint POVM indexed by the outcome \((\lambda,U)\) is
\begin{equation}\label{eq:seedprotocol_general}
    M_{\lambda,U}^K
    :=\dim\mc Q_\lambda U_{\mathrm{Schur}}^\dagger
    \bigl(
        |\lambda\rangle\langle\lambda|
        \otimes \mathbf 1_{\mc P_\lambda}
        \otimes q_\lambda(U)K_\lambda q_\lambda(U^\dagger)
    \bigr)
    U_{\mathrm{Schur}},
\end{equation}
with the normalization condition
\[
\sum_{\lambda \vdash_d n} \int_{\U(d)} M_{\lambda,U}^K d\mu_{\Haar}(U) = \textbf{1}.
\]
For example, take $K_\lambda = \frac{1_{\mc Q_\lambda}}{\dim \mc Q_\lambda}$. Then 
\[
M_{\lambda, U}^K = U_{\rm{Schur}}^\dagger (\ket{\lambda}\bra{\lambda} \otimes \textbf{1}_{\mc P_\lambda} \otimes \textbf{1}_{\mc Q_\lambda}) U_{\rm{Schur}}
\]
recovers the Keyl-Werner protocol for spectrum estimation \cite{KeylWerner2001}. 

We now apply Theorem~\ref{thm:LDP-covariant-tomography} to the seed-induced tomography protocol. For every \(\lambda\vdash_d n\) satisfying \(s_\lambda(\rho)>0\), the conditional distribution $\eta_n^{\rho,\lambda}$ defined in \eqref{eq:conditional-kernel-covariant} becomes a distribution on \(\U(d)\) 
\begin{equation}\label{eq:conditional-eigenspace-seed}
    \kappa_n^{\rho,\lambda}(C)
    := \frac{\dim \mc Q_\lambda}{s_\lambda(\rho)}
    \int_C\Tr\bigl(q_\lambda(U^\dagger\rho U)K_\lambda\bigr)d\mu_{\Haar}(U),\quad
    C\in\mathfrak B(\U(d)).
\end{equation}
The pair-outcome distribution and the state-estimator distribution induced by $K$ are then given by
\begin{equation}\label{eq:state-distribution-seed}
\begin{aligned}
& \wt \mu_n^{\rho,K}(B):= \sum_{\lambda \vdash_d n} \mb P_{n,\rho}(\lambda/n) \kappa_n^{\rho,\lambda}(\{U\in\U(d):(\overline\lambda,U)\in B\}),\quad B \in \mathfrak B(\mc X_d); \\
&\widehat\mu_n^{\rho,K}(D) = \sum_{\lambda \vdash_d n} \mb P_{n,\rho}(\lambda/n) \kappa_n^{\rho,\lambda}(\{U\in\U(d):\Phi(\overline\lambda,U)\in D\}),\quad D \in \mathfrak B(\mc D_d).
\end{aligned}
\end{equation}
Using Theorem~\ref{thm:LDP-covariant-tomography}, we have the following result of LDP for $\wt \mu_n^{\rho,K}$ and $\widehat\mu_n^{\rho,K}$:
\begin{proposition}[LDP for seed-induced tomography]
\label{prop:LDP-seed-induced-general}
    Suppose that there exists a jointly lower semicontinuous function
    \[
        L_\rho^K:
        \mc P_d^\downarrow\times\U(d)
        \longrightarrow[0,\infty]
    \]
    such that, for every Young sequence
    \(\lambda^{(n)}\vdash_d n\) satisfying 
    \[
        \mb P_{n,\rho}\bigl(\lambda^{(n)}\bigr)>0,
        \quad
        \frac{\lambda^{(n)}}{n}
        \longrightarrow x\in\mc P_d^\downarrow,
    \]
    the measures
    \(\kappa_n^{\rho,\lambda^{(n)}}\) satisfy an LDP on
    \(\U(d)\) with rate function \(L_\rho^K(x,\cdot)\). Then \(\{\widetilde\mu_n^{\rho,K}\}\) satisfies an LDP on \(\mc X_d\) with rate function
    \begin{equation}\label{eq:pair-rate-seed-induced}
        \widetilde I_\rho^K(x,U)
        =
        D(x\|r_\rho)+L_\rho^K(x,U),\quad r_\rho=\spec^\downarrow(\rho).
    \end{equation}
    Consequently, \(\{\widehat\mu_n^{\rho,K}\}\) satisfies an LDP on
    \(\mc D_d\) with rate function
    \begin{equation}\label{eq:state-rate-seed-induced}
        I^K(\sigma\|\rho)
        = D(r_\sigma \| r_\rho) + \inf_{U \in \U(d): U \diag(r_\sigma) U^\dagger = \sigma} L^K_\rho(r_\sigma, U),\quad r_\sigma= \spec^\downarrow(\sigma).
    \end{equation}
\end{proposition}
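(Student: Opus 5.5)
The plan is to derive Proposition~\ref{prop:LDP-seed-induced-general} from Theorem~\ref{thm:LDP-covariant-tomography} by identifying the conditional block measures of the seed-induced protocol. For a seed-induced protocol the spectral component is deterministic given the block, since $K_{\lambda,x}\propto\delta_{\lambda/n}(x)K_\lambda$. Substituting the seed form and $\nu_n$ into \eqref{eq:conditional-kernel-covariant}, the normalizing factors $d^n/(\dim\mc Q_\lambda\dim\mc P_\lambda)$ and $\dim\mc Q_\lambda\dim\mc P_\lambda/d^n$ cancel. This gives
\[
    \eta_n^{\rho,\lambda}=\delta_{\bar\lambda}\otimes\kappa_n^{\rho,\lambda},
\]
a product of a point mass at $\bar\lambda=\lambda/n$ on $\mc P_d^\downarrow$ with $\kappa_n^{\rho,\lambda}$ on $\U(d)$. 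One checks that $\kappa_n^{\rho,\lambda}$ is a probability measure. By Lemma~\ref{lemma:Schur-orthogonality},
\[
    \int\Tr\bigl(q_\lambda(U^\dagger\rho U)K_\lambda\bigr)\,d\mu_{\Haar}=\frac{s_\lambda(\rho)\Tr K_\lambda}{\dim\mc Q_\lambda},
\]
and $\Tr K_\lambda=1$. Hence \eqref{eq:pair-measure-mixture} reduces to the first line of \eqref{eq:state-distribution-seed}.

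Next I would verify the hypothesis of Theorem~\ref{thm:LDP-covariant-tomography} with
\[
    J_\rho(y,x,U):=\begin{cases}L^K_\rho(y,U), & x=y,\\ +\infty, & x\neq y.\end{cases}
\]
Joint lower semicontinuity of $J_\rho$ follows from that of $L^K_\rho$ together with the closedness of the diagonal $\{x=y\}$. Take a Young sequence $\lambda^{(n)}/n\to y$ with positive weak Schur probability. I then need the product measures $\delta_{\lambda^{(n)}/n}\otimes\kappa_n^{\rho,\lambda^{(n)}}$ to satisfy an LDP with rate $J_\rho(y,\cdot)$. Because $\mc X_d$ is compact, it is cleanest to use the Laplace principle (Lemma~\ref{lem:LDP-Laplace-equivalence}). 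For $f\in C(\mc X_d)$ we have
\[
    \int e^{-nf}\,d\eta_n=\int_{\U(d)} e^{-nf(\lambda^{(n)}/n,\,U)}\,d\kappa_n(U).
\]
Uniform continuity of $f$ gives $\sup_U|f(\lambda^{(n)}/n,U)-f(y,U)|\to0$. So the limit of $-\frac1n\log$ equals that of the integral of $e^{-nf(y,\cdot)}$, which by the assumed LDP for $\kappa_n$ is
\[
    \inf_U\{f(y,U)+L^K_\rho(y,U)\}=\inf_z\{f(z)+J_\rho(y,z)\}.
\]
This is exactly the required conditional LDP.

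Theorem~\ref{thm:LDP-covariant-tomography} then gives the pair rate. The infimum over $y$ in \eqref{eq:pair-rate-covariant} is forced to $y=x$, which yields \eqref{eq:pair-rate-seed-induced}. For the state rate I would apply the contraction principle via \eqref{eq:state-rate-covariant}, taking the infimum over $(x,U)$ with $U\diag(x)U^\dagger=\sigma$. Any such $x$ lies in $\mc P_d^\downarrow$, so it must equal $r_\sigma$. Hence $D(x\|r_\rho)=D(r_\sigma\|r_\rho)$ factors out, leaving the infimum of $L^K_\rho(r_\sigma,U)$, which is \eqref{eq:state-rate-seed-induced}.

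The main obstacle is a bookkeeping subtlety rather than a deep one. Theorem~\ref{thm:LDP-covariant-tomography} is stated with convergence $L_{n,f}\to L_f$ only along sequences, while its proof invokes a uniform-convergence version. To be safe I would note that sequential convergence along every convergent Young sequence, with a continuous-in-$y$ limit $L_f$ on the compact simplex, implies uniform convergence on the grid $\{\lambda/n\}$. If $L_f$ fails to be continuous, I would fall back on the upper and lower LDP bounds separately, which only require the sequential statement.
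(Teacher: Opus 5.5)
Your proposal is correct and follows the paper's own proof step for step. It uses the same diagonal rate function $J_\rho(y,x,U)$ (equal to $L^K_\rho(y,U)$ when $x=y$ and $+\infty$ otherwise), gets lower semicontinuity from the closed diagonal, applies Theorem~\ref{thm:LDP-covariant-tomography} with the infimum over $y$ collapsing to $y=x$, and uses the contraction step that forces $x=r_\sigma$. Your uniform-continuity argument for the Laplace principle of $\delta_{\lambda^{(n)}/n}\otimes\kappa_n^{\rho,\lambda^{(n)}}$ fills in the step the paper only calls ``straightforward.'' Your worry about sequential versus uniform convergence is harmless: the sequential hypothesis already forces $L_f$ to be continuous on the face $\{y: y_i=0 \text{ for } i>\rank\rho\}$ by a diagonal argument, so your first fix applies.
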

\begin{proof}
    Let \(\lambda^{(n)}/n\to y\) and set
    \[
    \eta_n^{\rho,\lambda^{(n)}}(B) := \kappa_n^{\rho,\lambda^{(n)}}(\{U\in \U(d): (\lambda^{(n)}/n, U) \in B\}), \quad B \in \mathfrak B(\mc X_d).
    \]
    Using the Laplace principle, it is straightforward to show that the conditional measures
    \(\eta_n^{\rho,\lambda^{(n)}}\) satisfy an LDP on \(\mc X_d\)
    with rate function $J_\rho^K(y,\cdot)$, where 
    \[
        J_\rho^K(y,x,U)
        :=
        \begin{cases}
            L_\rho^K(y,U),&x=y,\\
            +\infty,&x\neq y.
        \end{cases}
    \]
    This function is jointly lower semicontinuous since
    \(L_\rho^K\) is jointly lower semicontinuous and
    \(\{(y,x):x=y\}\) is closed. Theorem~\ref{thm:LDP-covariant-tomography} then gives
    \[
        \widetilde I_\rho^K(x,U)
        = \inf_{y\in\mc P_d^\downarrow}
        \bigl\{
            D(y\|r_\rho)+J_\rho^K(y,x,U)
        \bigr\} = D(x\|r_\rho)+L_\rho^K(x,U).
    \]
    Via the contraction principle and the constraint \(U\diag(r_\sigma)U^\dagger=\sigma\) forces \(x=r_\sigma\), \eqref{eq:state-rate-seed-induced} follows.
\end{proof}
The rate function has the following interpretation: 
\[
 I^K(\sigma\|\rho)
        = \underbrace{D(r_\sigma \| r_\rho)}_{\text{spectral cost}} + \underbrace{\inf \left\{L^K_\rho(r_\sigma, U): U \in \U(d),\ U \diag(r_\sigma) U^\dagger = \sigma \right\}}_{\text{eigenspace cost}}.
\]

In the remainder of the section, we consider two special cases of seed-induced protocols. 

\subsection{Keyl's Protocol}
Keyl's tomography protocol \cite{Keyl_2006} is one example of a seed-induced protocol, with $K^{\rm{Keyl}}_\lambda = \ket{\phi_\lambda}\bra{\phi_\lambda}$. We remind the reader that $\ket{\phi_\lambda}\bra{\phi_\lambda}$ denotes the projector onto the highest weight component of $\mc Q_\lambda$. Note that the debiasing step introduced in \cite{pelecanos2025debiased} will not change the rate function, so that the methods are equivalent for our considerations. 

In \cite[Theorem 3.2]{Keyl_2006}, Keyl showed that the rate function for this protocol is the reverse relative entropy. For completeness, we give an alternative proof using Proposition \ref{prop:LDP-seed-induced-general}.
\begin{theorem}[Keyl's rate function]
    \label{thm:seedInduced_KeylRate}
    For the seed-induced tomography protocol with seed operators $K_\lambda^{\rm{Keyl}} := \ket{\phi_\lambda}\bra{\phi_\lambda}$, we have 
    \begin{equation*}
        I^{\rm{Keyl}}(\sigma\|\rho) := I^{K^{\rm{Keyl}}}(\sigma\|\rho) = D_R(\sigma\|\rho).
    \end{equation*}
\end{theorem}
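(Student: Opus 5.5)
We will apply Proposition~\ref{prop:LDP-seed-induced-general} with $K_\lambda = \ket{\phi_\lambda}\bra{\phi_\lambda}$. The plan is to compute the conditional eigenspace rate function $L^{\rm Keyl}_\rho(x,U)$ explicitly, then minimize over the unitaries $U$ that diagonalize $\sigma$, and finally match the result with \eqref{eq:revRelEnt_Formula}.

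\textbf{Step 1: explicit density.} By Lemma~\ref{lem:HighestWeightProjection}, the density of $\kappa_n^{\rho,\lambda}$ with respect to Haar measure is
\[
\frac{\dim\mc Q_\lambda}{s_\lambda(\rho)}\prod_{k=1}^d \Delta_k(U^\dagger\rho U)^{\lambda_k-\lambda_{k+1}},
\]
where $\lambda_{d+1}:=0$. Each $\Delta_k(U^\dagger\rho U)\ge 0$, since it is a principal minor of a positive semidefinite matrix.

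\textbf{Step 2: normalization asymptotics.} Lemma~\ref{lemma:rep-data} gives $\dim\mc Q_\lambda\le (n+1)^{d(d-1)/2}$ and $r^\lambda\le s_\lambda(\rho)\le \dim\mc Q_\lambda\, r^\lambda$. Hence
\[
-\tfrac1n\log\bigl(\dim\mc Q_\lambda/s_\lambda(\rho)\bigr)\to \textstyle\sum_k y_k\log r_k
\]
uniformly along $\lambda^{(n)}/n\to y$ with $\mb P_{n,\rho}(\lambda^{(n)}/n)>0$.

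\textbf{Step 3: Laplace principle on $\U(d)$.} Set
\[
g_y(U)=-\sum_k (y_k-y_{k+1})\log\Delta_k(U^\dagger\rho U)\in[0,\infty].
\]
This is continuous as an extended-real function, since each $\log\Delta_k$ is upper semicontinuous and bounded above by $0$. We must show
\[
-\tfrac1n\log\int e^{-nf(U)}\,e^{-n\,g_{\lambda/n}(U)}\,d\mu_{\Haar}(U)\;\to\;\inf_U\{f(U)+g_y(U)\}.
\]
\emph{Upper bound on the integral.} Use $g_{\lambda/n}\ge g_y-o(1)$ on the set where the relevant $\Delta_k$ are positive. The subtlety is coefficients $y_k-y_{k+1}=0$ whose $\lambda$-counterparts are small but positive; these only help the bound, because $\Delta_k\le 1$.

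\emph{Lower bound on the integral.} Near a minimizer $U_0$ with $g_y(U_0)<\infty$, the function $g$ is continuous in a neighborhood, and Haar measure of a small ball is a positive constant independent of $n$. However, at $U_0$ a $\Delta_k$ with vanishing coefficient $y_k-y_{k+1}$ could be zero. To handle this, perturb $U_0$ to a nearby $U'$ where all $\Delta_k(U'^\dagger\rho U')>0$; this uses the support condition $\ell(\lambda)\le\rank\rho$, so only $k\le\rank\rho$ matter. This yields
\[
L_\rho(y,U)=\sum_k y_k\log r_k+g_y(U),
\]
which is jointly lower semicontinuous. I expect this boundary analysis (zero minors and rank-deficient $\rho$) to be the main technical obstacle.

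\textbf{Step 4: optimize and identify.} By \eqref{eq:state-rate-seed-induced},
\[
I(\sigma\|\rho)=D(x\|r)+\sum_k x_k\log r_k-\sup_U\sum_k(x_k-x_{k+1})\log\Delta_k(U^\dagger\rho U),
\]
where $x=r_\sigma$ and the supremum runs over $U$ with $U\diag(x)U^\dagger=\sigma$. The first two terms combine to $\sum_k x_k\log x_k$. For such $U$, $\Delta_k(U^\dagger\rho U)=\det(\Pi_k\rho\Pi_k|\Pi_k)$, where $\Pi_k$ projects onto the span of the first $k$ columns of $U$, i.e.\ the top-$k$ eigenvectors of $\sigma$. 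Where $x_k>x_{k+1}$, this subspace is determined by $\sigma$. Where $x_k=x_{k+1}$, the coefficient vanishes, so the value of $\Delta_k$ is irrelevant (up to the convention that such factors are taken as $1$). Hence the infimum over $U$ is attained by any diagonalizing $U$ and equals the expression in \eqref{eq:revRelEnt_Formula}.

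The infinite case $\rank\sigma\neq\rank(\sigma\rho)$ corresponds exactly to some $\det(\Pi_k\rho\Pi_k|\Pi_k)=0$ with $x_k-x_{k+1}>0$, giving $+\infty$ on both sides. Therefore $I^{\rm Keyl}=D_R$.
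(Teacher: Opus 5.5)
Your proposal follows the paper's proof essentially step for step. You reduce via Proposition~\ref{prop:LDP-seed-induced-general}, write the density with Lemma~\ref{lem:HighestWeightProjection}, and control the normalization with the Schur-polynomial bounds. You then prove matching upper and lower bounds on $\U(d)$, perturbing into the dense open set where the first $\rank(\rho)$ leading minors are positive, and identify the infimum over diagonalizing $U$ with $D_R$.

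The one adjustment needed is in the upper bound. State it multiplicatively, $g_{\lambda/n}\ge(1-\varepsilon)g_y$ for all $U$ once $n$ is large (this uses $\log\Delta_k\le 0$), as the paper does. The additive bound $g_{\lambda/n}\ge g_y-o(1)$ is not uniform in $U$ near points where a minor with positive coefficient $y_k-y_{k+1}$ is small.
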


\begin{proof}
    From Proposition \ref{prop:LDP-seed-induced-general}, we need only consider the rate function of
    \begin{equation*}
        \kappa_n^{\rho, \lambda^{(n)}}(C) = \frac{\dim \mc Q_\lambda}{s_\lambda(\rho)} \int_C \bra{\phi_\lambda} q_\lambda(U^\dagger \rho U)\ket{\phi_\lambda} d\mu_{\rm{Haar}}(U).
    \end{equation*}

    Take $r = \spec^{\downarrow}(\rho)$. Using \eqref{eq:schur-polynomial-upper-lower}, we have 
    \begin{equation*}
        \frac{1}{n} \log \frac{\dim Q_\lambda}{s_\lambda(\rho)} = -\sum_{k=1}^d \bar \lambda_k \log r_k + o(1).
    \end{equation*}

    By Lemma \ref{lem:HighestWeightProjection},
    \begin{equation}
        \label{eq:KeylIntegralFormula}
        \int \bra{\phi_\lambda} q_\lambda(U^\dagger \rho U\ket{ \phi_\lambda} d\mu_{\rm{Haar}}(U) = \int \prod_{k=1}^d \Delta_k(U^\dagger \rho U)^{\lambda_{k} - \lambda_{k+1}}  d\mu_{\rm{Haar}}(U).
    \end{equation}

    Take $\rank(\rho) = m$, and assume that $\lambda_{k}^{(n)}, x_k = 0$ for all $k > m$. Define 
    \begin{equation*}
        f_n(U) := \sum_{k=1}^m \left(\bar \lambda_k^{(n)} - \lambda_{k+1}^{(n)}\right) \log \Delta_k(U^\dagger \rho U), \qquad f_x(U) = \sum_{k=1}^m (x_k - x_{k+1}) \log \Delta_k(U^\dagger \rho U). 
    \end{equation*}
    By Definition \ref{def:LDP-compact} it would be sufficient to show that 
    \begin{align*}
        \limsup_{n \to \infty} \frac{1}{n} \log \int_{F} e^{n f_n(U)} d\mu_{\rm{Haar}}(U) &\leq \sup_{U \in F} f_x(U), \quad F \text{ closed} \\
        \liminf_{n \to \infty} \frac{1}{n} \log \int_{O} e^{n f_n(U)} d\mu_{\rm{Haar}}(U) &\geq \sup_{U \in O} f_x(U), \quad O \text{  open},
    \end{align*}
    to recover the contribution of \eqref{eq:KeylIntegralFormula}.
    
    We start with the upper bound. Notice that if $\mu_{\rm{Haar}}(F) = 0$ the result holds trivially, so that without loss of generality we may assume that this is not the case. Now, fix $U \in U(d)$, and note that if $f_x(U) = -\infty$, then eventually $f_n(U) = -\infty$ as well. As $\bar \lambda^{(n)} \to x$, for $n$ large enough, 
    \begin{align*}
        \bar \lambda_k^{(n)} - \bar \lambda_{k+1}^{(n)} &\geq (1-\eps) (x_k - x_{k+1})\\
        \implies f_n(U) &\leq (1-\eps) f_x(U).
    \end{align*}
    Therefore, 
    \begin{align*}
        \int_{F} e^{n f_n(U)} d\mu_{\rm{Haar}}(U) &\leq \int_{F} e^{n (1-\eps)f_x(U)} d\mu_{\rm{Haar}}(U)\\
        &\leq \mu_{\rm{Haar}}(F) \exp\left\{n(1-\eps) \sup_{U \in F} f_x(U) \right\}
    \end{align*}
    Taking $\eps \to 0$ after taking $n \to \infty$ gives us the upper bound. 

    For the lower bound, let $m := \rank(\rho)$, and take $A_\rho := \{U \in U(d): \Delta_m(U^\dagger \rho U) > 0\}$. For every $U \in A_\rho$, Sylvester's criterion gives us $\forall k \in [m]$, $\Delta_k(U^\dagger \rho U) > 0$. Notice that $A$ is open, and as we will show it is dense in $U(d)$. 
    
    Given any $U_0 \in U(d)$ and $V \in A_\rho$,  let $U_t = U_0^{1-t} V^t$ and $g(t) := \Delta_m(U_t^\dagger \rho U_t)$. Since the determinant of a matrix is a polynomial function of the entries, $g(t)$ is analytic over $[0,1]$. Thus $g(1) > 0$ means that $g(t)$ cannot vanish entirely in any neighborhood around $0$. Therefore, $A_\rho$ is dense in $U(d)$. 

    Let $\psi: \mc P_d \times U(d)$ be such that 
    \begin{equation*}
        \psi(y, U) = \sum_{k=1}^m (y_k - y_{k+1}) \log \Delta_k(U^\dagger \rho U).
    \end{equation*}
    When restricted to $\left\{y: y_k = 0, \forall k \geq m\right\} \times A_\rho$, $\psi$ is jointly continuous. Now, fix $U_0 \in O$ and $\eps > 0$. 
    Pick $U_1 \in A_\rho \cap O$ so that 
    \begin{equation*}
        f_x(U_1) \geq f_x(U_0) - \eps.
    \end{equation*}
    For $n$ sufficiently large, there is a non-empty open neighborhood around $U_1$, which we call $N(U_1) \subset A_\rho \cap O$ such that $\forall U \in N(U_1)$
    \begin{equation*}
        f_n(U) = \psi\left(\bar \lambda^{(n)}, U\right) \geq \psi(x, U_0) - 2\eps = f_x(U_0) - 2\eps.
    \end{equation*}
    Then 
    \begin{align*}
        \int_{O} e^{n f_n(U)} d\mu_{\rm{Haar}}(U) &\geq \mu_{\rm{Haar}}(N(U_0)) e^{n [f_x(U_0) - \eps]}\\
        \implies \liminf_{n \to \infty} \frac{1}{n} \log \int_{O} e^{n f_n(U)} d\mu_{\rm{Haar}}(U) &\geq f_x(U_0) - \eps.
    \end{align*}
    Taking the supremum over all $U_0 \in O$, and $\eps \to 0$, we get the lower bound. 

    This gives us 
    \begin{equation*}
        L_\rho^{\rm{Keyl}}(x, U) = \sum_{k=1}^d x_k \log r_k - \sum_{k=1}^d (x_k - x_{k+1}) \log \Delta_k(U^\dagger \rho U).
    \end{equation*}
    From Proposition \ref{prop:LDP-seed-induced-general} we have 
    \begin{align*}
        I^{ \rm{Keyl} }(\sigma\|\rho) &= \sum_{k=1}^d x_k \log \frac{x_k}{r_k} + \sum_{k=1}^d x_k \log r_k - \sup_{\substack{U \in U(d)\\ U\diag(x) U^\dagger = \sigma}}\left\{\sum_{k=1}^d (x_k - x_{k+1}) \log \Delta_k(U^\dagger \rho U)\right\}\\
        &= D_R(\sigma\|\rho),
    \end{align*}
    since $D_R(\sigma\|\rho)$ is unchanged by the choice of eigenbasis for $\sigma$. 
    
\end{proof}

\subsection{Haah's Protocol}
\label{subsec:HaahProtocol}

Another example fitting within the seed-induced framework is the protocol described by Haah, Harrow, Ji, Wu and Yu \cite{Haah_2017}. We note that the modified Haah protocol of \cite{Hu2026SampleOptimal} may not have the same rate function, and we leave its exact derivation to future work. 

In place of the highest-weight projector of Keyl's algorithm, Haah's algorithm uses $K^{\rm{Haah}}_\lambda = \frac{q_\lambda(\diag(\bar \lambda))}{s_\lambda(\bar \lambda)}$. 
We consider a generalization of this protocol, using an exponential weight factor $\beta > 0$, i.e. 
\begin{equation}
    \label{eq:BetaWeightedHaah_SeedOperators}
    K_\lambda^{\rm{Haah}, \beta} := \frac{1}{s_\lambda\left(\bar \lambda^\beta\right)} q_\lambda\left(\diag(\bar \lambda^\beta) \right).
\end{equation}
In our notation, the POVM elements of the $\beta$-weighted Haah protocol are given by
\begin{equation*}
    M_{\lambda, U}^{\rm{Haah}, \beta} := \frac{\dim \mc Q_\lambda}{s_\lambda(\bar \lambda^\beta)} U_{\rm{Schur}}^\dagger\left(\ket{\lambda}\bra{\lambda} \otimes \textbf{1}_{\mc P_\lambda} \otimes q_\lambda(U \diag(\bar \lambda)^\beta U^\dagger) \right) U_{\rm{Schur}}. 
\end{equation*}

\begin{theorem}[Haah rate function]
    \label{thm:seedInduced_HaahRate}

    Take $\beta > 0$. 
    For the seed-induced tomography protocol with seed operators $K^{\rm{Haah}, \beta}_\lambda$, the tomographic rate function is given by  
    \begin{equation*}
        I^{\rm{Haah}}_\beta(\sigma\|\rho) := I^{K^{\rm{Haah}, \beta}}(\sigma\|\rho) = I_{\beta, \infty}(\sigma\|\rho).
    \end{equation*}
\end{theorem}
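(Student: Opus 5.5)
The plan is to apply Proposition~\ref{prop:LDP-seed-induced-general}. That reduces the problem to two tasks: identify the conditional rate $L_\rho^{K}(x,U)$ of the eigenspace measures $\kappa_n^{\rho,\lambda^{(n)}}$, and then compute the contraction \eqref{eq:state-rate-seed-induced}.

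\textbf{Step 1: density of $\kappa_n^{\rho,\lambda}$.} Since $q_\lambda$ is multiplicative on all of $\mc L(\mb C^d)$, we have $\Tr\bigl(q_\lambda(A)q_\lambda(B)\bigr)=s_\lambda(AB)$. Write $D_\lambda:=\diag(\bar\lambda)$. With the seed \eqref{eq:BetaWeightedHaah_SeedOperators}, the Haar density of $\kappa_n^{\rho,\lambda}$ becomes
\[
\frac{\dim\mc Q_\lambda}{s_\lambda(\rho)\,s_\lambda(\bar\lambda^\beta)}\; s_\lambda\bigl(D_\lambda^{\beta/2}U^\dagger\rho U D_\lambda^{\beta/2}\bigr).
\]
The argument of the last factor is positive semidefinite. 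Let $a(\bar\lambda,U)$ denote its decreasing spectrum. Lemma~\ref{lemma:rep-data}, specifically \eqref{eq:schur-polynomial-upper-lower} together with the polynomial bound \eqref{eq:dimension-estimates}, then gives
\[
\frac1n\log(\text{density})=g_n(U)+o(1),\qquad
g_n(U):=\sum_i\bar\lambda_i\log a_i(\bar\lambda,U)-\sum_i\bar\lambda_i\log r_i-\beta\sum_i\bar\lambda_i\log\bar\lambda_i,
\]
where the $o(1)$ term is uniform in $U$.

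\textbf{Step 2: conditional LDP on $\U(d)$.} Set
\[
g_x(U):=\sum_i x_i\log a_i(x,U)-\sum_i x_i\log r_i-\beta\sum_i x_i\log x_i .
\]
The target is an LDP with rate $L_\rho(x,U)=-g_x(U)$. The function $g_x$ is upper semicontinuous with values in $[-\infty,\infty)$, since eigenvalues depend continuously on the matrix and $\log$ is upper semicontinuous at $0$. It follows that $L_\rho$ is lower semicontinuous, and joint lower semicontinuity in $(x,U)$ holds for the same reason.

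I will prove the closed-set upper bound and the open-set lower bound exactly as in the proof of Theorem~\ref{thm:seedInduced_KeylRate}.

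For the upper bound, I will use $\bar\lambda^{(n)}\to x$, continuity of $a(\cdot,\cdot)$, and the $(1-\varepsilon)$-comparison trick to get $\limsup_n \sup_F g_n\le\sup_F g_x$ on closed sets $F$.

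For the lower bound, I will use density of the open set where all relevant $a_i(x,U)>0$; the analytic-path argument from the Keyl proof gives this density. On that set $g_n\to g_x$ locally uniformly, and I will integrate over a small Haar-positive neighbourhood.

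The ranks must be compatible. By \eqref{eq:WeakSchur_CantGetHigherRank}, $\ell(\lambda^{(n)})\le\rank\rho$, so the terms with $\bar\lambda_i=0$ drop out of $g_n$.

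\textbf{Step 3: contraction.} By Proposition~\ref{prop:LDP-seed-induced-general}, with $x=r_\sigma$ and $U\diag(x)U^\dagger=\sigma$, we have
\[
I(\sigma\|\rho)=D(x\|r)+L_\rho(x,U)=\sum_i x_i\log\frac{x_i^{1+\beta}}{a_i(x,U)}.
\]
Moreover $\diag(x)^{\beta/2}U^\dagger\rho U\diag(x)^{\beta/2}=U^\dagger\sigma^{\beta/2}\rho\,\sigma^{\beta/2}U$. Hence $a(x,U)=a^{(\beta)}(\sigma,\rho)$ for every admissible $U$, so the infimum over $U$ is trivial. This yields exactly the formula for $I_{\beta,\infty}$ in Proposition~\ref{prop:beta-gamma-properties}, including the value $+\infty$ when some $a_i^{(\beta)}=0$ with $x_i>0$.

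\textbf{Main obstacle.} The delicate point is Step 2 near degenerate configurations. Where $a_i(x,U)$ vanishes or is small, $\log a_i$ is unbounded, and $\bar\lambda^{(n)}_i>0$ may multiply it even when $x_i=0$. So the convergence $g_n\to g_x$ is not uniform there. I would handle this as in the Keyl proof: use the one-sided $(1-\varepsilon)$ bounds for the upper estimate, and approximate by points of the dense open set of nondegenerate $U$ for the lower estimate.
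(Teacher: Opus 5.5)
Your proposal is correct and follows the paper's proof essentially step for step. You reduce to Proposition~\ref{prop:LDP-seed-induced-general}, use the Schur-polynomial bounds and Keyl-style closed/open-set estimates to get the conditional rate $\sum_i x_i\log\frac{x_i^\beta r_i}{a_i(x,U)}$, and note that $a(x,U)=a^{(\beta)}(\sigma,\rho)$ for every admissible $U$, so the contraction gives $I_{\beta,\infty}$. Your handling of degenerate configurations is more careful than the paper's lower bound, which asserts $f_n(U_0)\to f_x(U_0)$ even though this fails when $\ell(\lambda^{(n)})>\rank x$ and the leading $\ell(\lambda^{(n)})\times\ell(\lambda^{(n)})$ block of $U_0^\dagger\rho U_0$ is singular. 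The dense set you approximate by should be the one where the leading $\rank\rho$ block of $U^\dagger\rho U$ is positive definite (not merely where $a_i(x,U)>0$ for $x_i>0$), since that is what gives $a_i(\bar\lambda^{(n)},U)\ge c\,(\bar\lambda^{(n)}_i)^{\beta}$ and hence makes the terms with $x_i=0$ vanish uniformly.
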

\begin{proof}
    Applying the seed operators in \eqref{eq:BetaWeightedHaah_SeedOperators} to \eqref{eq:conditional-eigenspace-seed}, 
    \begin{equation*}
        \kappa_n^{\rho, \lambda}(C) = \frac{\dim \mc Q_\lambda}{s_\lambda(\rho) s_\lambda\left(\bar \lambda^\beta\right)} \int_{C} s_\lambda\left(U^\dagger \rho U \diag(\bar \lambda^\beta) \right) d\mu_{\rm{Haar}}(U).
    \end{equation*}
    Using \eqref{eq:schur-polynomial-upper-lower} we have 
    \begin{align*}
        \frac{1}{n} \log \frac{1}{s_\lambda(\rho) s_\lambda\left(\bar \lambda^\beta\right)} &= -\sum_{k=1}^d \bar \lambda_k \left[ \log(r_k) + \beta \log \bar \lambda_k \right] + o(1)\\
        s_\lambda\left(U^\dagger \rho U \diag\left(\bar \lambda\right)^\beta \right) &= C_\lambda \exp\left\{n \sum_{k=1}^d \bar \lambda_k \log e_k\left(U^\dagger \rho U \diag\left(\bar \lambda\right)^\beta\right)\right\},
    \end{align*}
    where $e_k(X)$ denotes the $k^{\text{th}}$ eigenvalue of $X$, and $C_{\lambda}$ is a term that scales sub-exponentially in $n$.
    Take $D_{\lambda, U} := U\diag\left(\bar \lambda\right)U^\dagger$, then 
    \begin{equation*}
        e_k\left(U^\dagger \rho U \diag\left(\bar \lambda\right)^{\beta}\right) = e_k\left(D_{\lambda, U}^{\beta/2} \rho D_{\lambda, U}^{\beta/2}\right). 
    \end{equation*}
    Define 
    \begin{equation*}
        f_n(U) := \sum_{k=1}^d \bar \lambda^{(n)}_k \log e_k\left(D_{\lambda^{(n)}, U}^{\beta/2}\  \rho \ D_{\lambda^{(n)}, U}^{\beta/2}\right), \qquad f_x(U) := \sum_{k=1}^d x_k \log e_k\left (D_{x, U}^{\beta/2}\  \rho \ D_{x, U}^{\beta/2}\right).
    \end{equation*}
    Similar to the proof of Theorem \ref{thm:seedInduced_KeylRate}, we will show that 
    \begin{align*}
        \limsup_{n \to \infty} \frac{1}{n}\log \int_{F} e^{n f_n(U)} d\mu_{\rm{Haar}}(U) &\leq \sup_{U \in F} f_x(U), \qquad F \text{ closed}\\
        \liminf_{n \to \infty} \frac{1}{n}\log \int_{O} e^{n f_n(U)} d\mu_{\rm{Haar}}(U) &\geq \sup_{U \in O} f_x(U), \qquad O \text{ open}.
    \end{align*}

    The upper bound follows the same procedure as in Theorem \ref{thm:seedInduced_KeylRate}. 
    

    For the lower bound, fix $U_0 \in O$ such that $f_x(U_0) > - \infty$. If no such $U_0$ exists, then the bound holds trivially. Notice that 
    \begin{equation*}
        e_1\left(D_{\lambda, U}^{\beta/2} \ \rho \ D_{\lambda, U}^{\beta/2}\right), \dots, e_d\left(D_{\lambda, U}^{\beta/2} \ \rho \ D_{\lambda, U}^{\beta/2}\right)
    \end{equation*}
    are continuous with respect to $U$. Then, for $n$ large enough $f_n$ continuous around $U_0$. Therefore, for every $\eps > 0$, $\exists \delta > 0$ such that $\forall U \in \bar N_\delta(U_0) = \{\|U - U_0\|_1 \leq \delta\}$, 
    \begin{equation*}
        |f_n(U_0) - f_n(U)| \leq \eps.
    \end{equation*}
    Now, take $B(U_0) = \bar N_\delta(U_0) \cap O$ to be a non-empty open neighborhood around $U_0$. We get  
    \begin{align*}
        \int_O e^{n f_n(U)} d\mu_{\rm{Haar}}(U) &\geq \int_{B(U_0)} e^{n f_n(U)} d\mu_{\rm{Haar}}(U)\\
        &\geq \mu_{\rm{Haar}}(B(U_0)) e^{n [f_n(U_0) - \eps]}.
    \end{align*}
    Since $f_n(U_0) \to f_x(U_0)$, we have
    \begin{equation*}
        \liminf_{n \to \infty} \frac{1}{n} \log \int_{O} e^{n f_n(U)} d\mu_{\rm{Haar}}(U) \geq f_x(U_0) - \eps.
    \end{equation*}
    Taking $\eps \to 0$ and the supremum over all $U_0 \in O$ gives us the lower bound. 
    This shows that $\kappa_n^{\rho, \lambda}$ satisfies an LDP with rate function 
    \begin{equation*}
        L_\rho^{\rm{Haah}}(x, U) = \sum_{k=1}^d x_k \log \frac{x_k^\beta \cdot r_k}{a_k^{(\beta)}},
    \end{equation*}
    where $r = \spec^{\downarrow}(\rho)$ and $a^{(\beta)} = \spec^{\downarrow}(\sigma^{\beta/2} \rho \sigma^{\beta/2})$, for $\sigma = U \diag(x) U^\dagger$. Applying Proposition \ref{prop:LDP-seed-induced-general}, we have 
    \begin{equation*}
        I_\beta^{\rm{Haah}}(\sigma\|\rho) = D(x\|r) + L_\rho^{\rm{Haah}}(x, U) = I_{\beta, \infty}(\sigma\|\rho).
    \end{equation*}
\end{proof}


\section[Rate functions for Pretty good measurements]{Rate functions for pretty good measurements}
\label{sec:PGM-error-exponents}

Another family of tomography protocols uses a `pretty-good measurement' (PGM). PGM strategies are motivated by problems in state discrimination \cite{hausladen1994pretty, Barnum2002Reversing}, where the goal is to determine the true state, $\rho$, from a set of possible states $\{\rho_1, \dots, \rho_k\}$. A natural strategy is to measure $\rho$ with the observable $\rho_i$. Of course, the set $\{ \rho_i\}$ will not in general make for a valid POVM. Instead we may take $\Omega =\sum_{i} \rho_i$, and use the normalized observables $\left\{\Omega^{-1/2} \rho_i \Omega^{-1/2}\right\}_i$. This strategy is effective for distinguishing states, as \cite{Barnum2002Reversing} showed that for $P^*$ the probability of distinguishing using the optimal strategy and $P_{\rm{PGM}}$ the probability using PGM, that $P_{\rm{PGM}} \geq (P_*)^2$ so that $1-P_{\rm{PGM}} \leq 2(1-P^*)$. 

PGM tomography extends this method to arbitrary state discrimination, by taking a prior probability measure, $\mu$, over $\mc D_d$ with $\Omega_\mu = \int_{\mc D_d} d\mu(\sigma)\sigma$. Haah, Harrow, Ji, Wu and Yu \cite{Haah_2017} showed that this strategy yields a sample optimal tomography scheme for trace distance reconstruction. Motivated by weighted PGM schemes used for state estimation \cite{Tyson_2009}, we consider a generalization of the standard PGM protocol to power-weighted measurements. 

Let \(\mu\) be a unitarily invariant probability measure on \(\mc D_d\), with spectral measure $\nu = \mu s^{-1}$ having full support on $\mc P_d^\downarrow$. 
A weighted pretty good measurement associated with \(\mu\) and weight $\beta > 0$ is 
built from the continuous ensemble
\[
    \left\{
        \sigma^\beta \,d\mu(\sigma)
    \right\}_{\sigma\in\mc D_d}.
\]
Define the average operator by
\begin{equation}\label{eq:PGM-average-operator}
    \Omega_{n,\beta}^{\mu}
    :=
    \int_{\mc D_d}
    (\tau^\beta)^{\otimes n}\,d\mu(\tau).
\end{equation}
Via unitary invariance of $\mu_n$ and permutation invariance, we have 
\begin{equation}
    \label{eq:PGM-average-operator-Schur}
    \Omega_{n,\beta}^{\mu} = \sum_{\lambda \vdash_d n} \frac{\int s_\lambda (\sigma^\beta)d\mu(\sigma)}{\dim \mc Q_\lambda} U_{\mathrm{Schur}}^\dagger ( |\lambda \rangle \langle \lambda| \otimes \textbf{1}_{\mc P_\lambda} \otimes \textbf{1}_{\mc Q_\lambda} )U_{\mathrm{Schur}}
\end{equation}
Then motivated by \cite{hausladen1994pretty}, the measurement operator for PGM tomography is given by
\begin{equation}\label{eq:measurement-PGM-all-block}
\begin{aligned}
        M^{(n)}_\sigma & := (\Omega_{n,\beta}^{\mu})^{-1/2} (\sigma^\beta)^{\otimes n} (\Omega_{n,\beta}^{\mu})^{-1/2} \\
        & = U_{\mathrm{Schur}}^\dagger \left( \sum_{\lambda \vdash_d n } \frac{\dim \mc Q_\lambda}{Z_{\lambda}^{\mu,\beta}} \bigg( |\lambda \rangle \langle \lambda| \otimes \textbf{1}_{\mc P_\lambda} \otimes q_\lambda\left(\sigma^\beta\right) \bigg)\right) U_{\mathrm{Schur}}, \qquad Z_{\lambda}^{\mu, \beta} := \int_{\mc D_d} s_\lambda\left(\tau^\beta\right) d\mu(\tau)
\end{aligned}
\end{equation}
By definition, we have $\int M^{(n)}_\sigma d\mu(\sigma) = \textbf{1}$. In the Schur basis, one can interpret the POVM as follows:
\begin{tcolorbox}[
    colback=white,
    colframe=black,
    boxrule=0.8pt,
    arc=1pt,
    left=6pt,
    right=6pt,
    top=6pt,
    bottom=6pt
]
Given \(n\)-copies of \(\rho\):
\begin{enumerate}
    \item First apply the Schur transform to \(\rho^{\otimes n}\).
    \item 
    Measure all the block labels \(\lambda\) simutaneously using the measurement operator 
    \begin{equation}
        \wt M_\sigma^{(n)}:= U_{\mathrm{Schur}} M_\sigma^{(n)} U_{\mathrm{Schur}}^\dagger 
    \end{equation}
\end{enumerate}
\end{tcolorbox}

The probability density function of $\sigma$ is given by
\begin{equation}\label{probability:PGM}
    p_{\mathrm{PGM},\beta,\mu}^{(n)}(\sigma\mid\rho)
    :=
    \Tr\left(
        \wt M_\sigma^{(n)}
        U_{\mathrm{Schur}}\rho^{\otimes n}U_{\mathrm{Schur}}^\dagger
    \right)
    =
      \sum_{\lambda\vdash_d n }
\frac{
    \dim\mc Q_\lambda\,\dim\mc P_\lambda
}{
    Z_{\lambda}^{\mu,\beta}
}\,
s_\lambda( \sigma^{\beta/2} \rho \sigma^{\beta/2}).
\end{equation}

The PGM rate function is given in the following theorem.
\begin{theorem}[General PGM exponent]\label{thm:general-PGM-exponent}
For $\sigma, \rho \in \mc D_d$, \(p_{\mathrm{PGM}, \beta, \mu}^{(n)}\) satisfies an LDP with rate function $I_{\mathrm{PGM}, \beta}^\mu$, where
\begin{equation}\label{eq:general-PGM-rate-formula}
    I_{\mathrm{PGM},\beta}^{\mu}(\sigma\mid\rho) = I_{\beta, \beta}(\sigma\|\rho)
\end{equation}
\end{theorem}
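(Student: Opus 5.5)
Since the estimator distribution is $\widehat\mu_n^\rho(A)=\int_A p^{(n)}_{\mathrm{PGM},\beta,\mu}(\sigma\mid\rho)\,d\mu(\sigma)$ with a fixed prior $\mu$ of full spectral support, I would prove the LDP through Lemma~\ref{lem:evaluation-rate-function}. The first step is to compute the exponential rate of the density $p^{(n)}$ itself. The second is to check that the fixed prior contributes no exponential cost. The third is to show that the resulting local rate is the continuous function $I_{\beta,\beta}$, so that the local estimates upgrade to a full LDP on the compact space $\mc D_d$ (equivalently, to the Laplace principle of Lemma~\ref{lem:LDP-Laplace-equivalence}).

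\textbf{Step 1: asymptotics of the density.} Let $a=\spec^\downarrow(\sigma^{\beta/2}\rho\sigma^{\beta/2})$ and write $\bar\lambda=\lambda/n$. By Lemma~\ref{lemma:rep-data}, up to factors polynomial in $n$,
\[
\dim\mc Q_\lambda\dim\mc P_\lambda\, s_\lambda(\sigma^{\beta/2}\rho\sigma^{\beta/2})\asymp e^{n[H(\bar\lambda)+\sum_k\bar\lambda_k\log a_k]}.
\]
For the normaliser $Z^{\mu,\beta}_\lambda=\int s_\lambda(\tau^\beta)\,d\mu(\tau)$, the bound $s_\lambda(\tau^\beta)\asymp \prod_k t_k^{\beta\lambda_k}$ (with $t=\spec^\downarrow\tau$) and a Laplace argument over the spectral measure $\nu$ give
\[
\tfrac1n\log Z_\lambda\to \sup_{t\in\mc P_d^\downarrow}\beta\sum_k\bar\lambda_k\log t_k=-\beta\big(H(\bar\lambda)\big)+o(1),
\]
since the supremum is attained at $t=\bar\lambda$ by Gibbs' inequality. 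Full support of $\nu$ gives the matching lower bound, because a small neighbourhood of $\bar\lambda$ has $\nu$-mass that is not exponentially small. I expect this uniform-in-$\lambda$ Laplace estimate to be the main obstacle. The near-zero entries of $\bar\lambda$ and the requirement that $\nu$-masses of shrinking balls are not super-exponentially small both need care. If necessary I would add a mild regularity assumption, or else truncate using continuity of $t\mapsto\sum\bar\lambda_k\log t_k$ on compact subsets of the interior. Summing over the polynomially many $\lambda$,
\[
\tfrac1n\log p^{(n)}(\sigma\mid\rho)\to \max_{y\in\mc P_d^\downarrow}\Big\{(1+\beta)H(y)+\sum_k y_k\log a_k\Big\}.
\]

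\textbf{Step 2: the variational formula.} Maximising $\sum_k y_k\log\!\big(a_k^{1/(1+\beta)}/y_k\big)^{1+\beta}$ over the simplex gives $y_k\propto a_k^{1/(1+\beta)}$, with value $(1+\beta)\log\sum_k a_k^{1/(1+\beta)}$. This equals $-I_{\beta,\beta}(\sigma\|\rho)$ by Definition~\ref{def:beta-gamma-divergence} with $\gamma=\beta$. The maximiser is automatically decreasing, so the constraint $y\in\mc P_d^\downarrow$ is harmless. When $a$ has zero entries they simply drop out, and $\sigma\rho=0$ forces $a=0$, which gives rate $+\infty$.

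\textbf{Step 3: from densities to the LDP.} For the upper bound on $\widehat\mu^\rho_n(N_\varepsilon(\sigma))$, I would bound the density on $\overline{N_\varepsilon(\sigma)}$ by its supremum and use joint continuity in $\sigma$ of the exponent $\Phi_n(\sigma):=\frac1n\log p^{(n)}(\sigma\mid\rho)$. This holds because $\sigma\mapsto a(\sigma)$ is continuous, and the polynomial prefactors are uniform, so $\Phi_n\to -I_{\beta,\beta}$ uniformly on compacts where $a$ stays controlled. Dini/compactness arguments handle upper semicontinuity near the boundary where some $a_k\to0$. For the lower bound, full support of $\mu$, which comes from full support of $\nu$ together with unitary invariance, gives $\mu(N_\varepsilon(\sigma))>0$ independently of $n$. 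Continuity then yields
\[
\liminf_n \tfrac1n\log\widehat\mu_n^\rho(N_\varepsilon(\sigma))\ge -\sup_{N_\varepsilon(\sigma)} I_{\beta,\beta}.
\]
Letting $\varepsilon\downarrow0$ and using continuity (lower semicontinuity) of $I_{\beta,\beta}$ identifies the rate function. Exponential tightness is automatic on the compact space $\mc D_d$, so the local bounds give the full LDP with rate $I_{\beta,\beta}$.
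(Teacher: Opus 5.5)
Your proposal is correct, but it is organized differently from the paper's proof. The paper never treats the density $p^{(n)}_{\mathrm{PGM},\beta,\mu}(\sigma\mid\rho)$ as a function of $\sigma$. Instead it plugs the PGM into Theorem~\ref{thm:LDP-covariant-tomography}. For each Young sequence $\lambda^{(n)}/n\to q$ it proves a conditional LDP for the block measures $\eta_n^{\rho,\lambda}$ on $\mc X_d=\mc P_d^\downarrow\times\U(d)$, with rate $J_\rho(q,x,U)=\sum_k q_k\log r_k-\beta H(q)-\sum_k q_k\log a_k(x,U)$. It then mixes these against the weak Schur LDP with rate $D(q\|r)$, contracts to $\mc D_d$, and only at the end applies the Gibbs variational formula; along the way the spectrum $r$ of $\rho$ enters twice and cancels.

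You use the same two Laplace principles in the opposite order: first the sum over $\lambda$ at fixed $\sigma$, then the integral against the fixed prior $\mu$. As a result $r$ never appears. Your route is more self-contained: there are no conditional LDPs on $\mc X_d$, no joint lower semicontinuity of $J_\rho$ to check, and no contraction principle. It also proves something stronger, namely the density-level asymptotics $\frac1n\log p^{(n)}_{\mathrm{PGM},\beta,\mu}(\sigma\mid\rho)\to-I_{\beta,\beta}(\sigma\|\rho)$, with the upper bound uniform in $\sigma$. The paper's route buys uniformity of method instead: PGM sits in the same template as Keyl and Haah, with an explicit spectral/eigenspace split of the cost and an LDP for the pair outcome $(x,U)$, not only for $\sigma$.

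Two remarks on your sketch. First, the regularity assumption you hedge on is not needed. By compactness and full support of $\nu$, $c_\theta:=\inf_{c\in\mc P_d^\downarrow}\nu\bigl(B_{\theta/(2d)}(c)\bigr)>0$, where $B$ is the sup-norm ball. Restrict the integral defining $Z^{\mu,\beta}_\lambda$ to the ball around $(1-\theta/2)\bar\lambda+\frac{\theta}{2d}(1,\dots,1)$, on which $t_k\ge(1-\theta)\bar\lambda_k$ for all $k$. This gives
\[
\tfrac1n\log Z^{\mu,\beta}_\lambda\ge-\beta H(\bar\lambda)+\beta\log(1-\theta)+\tfrac1n\log c_\theta
\]
uniformly in $\lambda$. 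This is also the uniformity implicitly needed when the paper feeds \eqref{eq:limitOfZTerm} into the Laplace step of Theorem~\ref{thm:LDP-covariant-tomography}.

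Second, for the local lower bound you need not control the density uniformly on $N_\eps(\sigma)$. Fix one Young sequence converging to a target $y$ supported on $\{k:a_k(\sigma)>0\}$, and take a neighbourhood of $\sigma$ on which those $a_k$ stay bounded away from zero. That neighbourhood has positive $\mu$-mass by unitary invariance and full support of $\nu$. Passing from these local rates to the full LDP uses the standard converse of Lemma~\ref{lem:evaluation-rate-function} on a compact space.
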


\begin{proof}

Using the notation of \eqref{eq:conditional-kernel-covariant}, we have 
\begin{equation*}
    K_{\lambda, x} = \frac{q_\lambda(\diag(x)^\beta)}{Z_{\lambda}^{\mu,\beta}}.
\end{equation*}
Take $A(x,U) = (U\diag(x) U^\dagger)^{\beta/2} \rho (U \diag(x) U^\dagger)^{\beta/2}$ with eigenvalues $a_1(x,U) \geq \dots \geq a_d(x,U) \geq 0$. Then \eqref{eq:conditional-kernel-covariant} is described by 
\begin{equation*}
    \eta_{n}^{\rho, \lambda}(dx, dU) = \frac{\dim \mc Q_\lambda}{s_\lambda(\rho) Z_\lambda^{\mu, \beta}} s_{\lambda}(A(x,U)) d\nu(x) d\mu_{\rm{Haar}}(U).
\end{equation*}
Let $\psi(x,U,q) = \sum_{k=1}^d q_k \log a_k(x,U)$. The bound in \eqref{eq:schur-polynomial-upper-lower} yields
\begin{equation*}
    \exp\left\{n \sum_{k=1}^d \bar \lambda_k \log a_k(x,U)\right\} \leq s_\lambda(A(x,U)) \leq \dim \mc Q_\lambda \exp\left\{n \sum_{k=1}^d \bar \lambda_k \log a_k(x,U)\right\}
\end{equation*}

For $\frac{\lambda^{(n)}}{n} \to q$, for any $\eps > 0$ when $n$ is large enough, 
\begin{equation}
    \label{eq:PGM_LowerAndUpperSchurBounds}
    \sum_{k=1}^d \bar \lambda_k \log a_k(x,U) \leq (1-\eps)\sum_{k=1}^d \bar q_k \log a_k(x,U).
\end{equation}
From this we can see that for any closed $F \subset \mc X_d$,
\begin{equation*}
    \limsup_{n \to \infty} \frac{1}{n} \log \int_{F} s_{\lambda^{(n)}}(A(x,U)) d\nu(x) d\mu(U) \leq \sup_{ \substack{ (x,U) \in F \\ x \in \mc P_d^\downarrow } } \psi(x,U,q)
\end{equation*}
using the same strategy as in the proofs of Theorems \ref{thm:seedInduced_KeylRate} and \ref{thm:seedInduced_HaahRate}. 

For an open set $O \subset \mc X_d$, choose $(x_0, U_0) \in O$ so that $\psi(x_0, U_0, q) > -\infty$. By the assumption that $\supp(\nu) = \mc P_d^\downarrow$ we can approximate $x_0$ by $x_1 \in \mc P_d^\downarrow \cap \{s(\sigma): \sigma > 0\}$, so that 
\begin{equation*}
    \psi(x_1, U_0, q) \geq \psi(x_0, U_0, q) - \eps.
\end{equation*}
We have $\sum_{k=1}^d \bar \lambda_k \log a_{k}(x_1,U) \to \psi(x_1, U_0, q)$ uniformly on a small neighborhood around $x_0$. Applying  \eqref{eq:PGM_LowerAndUpperSchurBounds} gives us 
\begin{equation*}
    \liminf_{n \to \infty} \frac{1}{n} \log \int_O s_{\lambda^{(n)}}(A(x,U)) d\nu(x) d\mu_{\rm{Haar}}(U) \geq \sup_{\substack{ (x,U) \in O \\ x \in \mc P_d^\downarrow }} \psi(x,U,q).
\end{equation*}
For sequences $\frac{\lambda^{(n)}}{n} \to q$, we have 
\begin{equation}
    \label{eq:limitOfZTerm}
    \lim_{n \to \infty} \frac{1}{n} \log Z_{\lambda^{(n)}}^{\beta, \mu} = \beta \sup_{x \in \mc P_d^\downarrow} \sum_{i=1}^d q_i \log x_i = \beta \sum_{i=1}^d q_i \log q_i. 
\end{equation}
Let $y_1 \geq y_2 \geq \dots \geq y_d \geq 0$ be the eigenvalues of $\rho$. Using \eqref{eq:limitOfZTerm} along with \eqref{eq:PGM_LowerAndUpperSchurBounds} applied to $s_{\lambda^{(n)}}(\rho)$, we see that $\eta_{n}^{\rho, \lambda}$ satisfies an LDP with rate function 
\begin{equation*}
    J_\rho(q, x,U) := \sum_{k=1}^d q_k \log y_k +\beta\sum_{k=1}^d q_k \log q_k - \sum_{k=1}^d q_k \log a_k(x,U).
\end{equation*}
Theorem \ref{thm:LDP-covariant-tomography} tells us that 
\begin{equation*}
    I_{\mathrm{PGM},\beta}^{\mu}(\sigma\mid\rho) = \inf_{q \in \mc P_d^\downarrow} \left\{(\beta+1) \sum_{i=1}^d q_i \log q_i - \sum_{i=1}^d q_i \log a_i^{(\beta)}(\sigma, \rho)\right\} 
\end{equation*}

It remains only to justify the equivalence to $I_{\beta, \beta}$. 
Set
\[
    u_i:=\left(a_i^{(\beta)}\right)^{1/(\beta+1)}.
\]
Then
\[
    (\beta+1)\sum_i q_i\log q_i - \sum_i q_i\log a_i^{(\beta)} = -(\beta+1)\sum_i q_i\log\frac{u_i}{q_i}.
\]
By the Gibbs variational formula,
\[
    \sup_{q\in\mc P_d} \sum_i q_i\log\frac{u_i}{q_i} = \log\sum_i u_i.
\]
Therefore
\begin{align*}
    I_{\mathrm{PGM},\beta}^{\Delta}(\sigma\mid\rho) = -(\beta+1) \log \sum_i \left(a_i^{(\beta)}\right)^{1/(\beta+1)} = I_{\beta, \beta}(\sigma\|\rho).
\end{align*}

\end{proof}

Special cases of the above result include the uniform spectral prior described in \cite{Haah_2017}, or the rank-$d$ Hilbert-Schmidt in  \cite{Karol2001Induced} and \cite[Definition A.1]{pelecanos2025mixed}. 

\subsection{Weak-Schur sampling Hilbert Schmidt PGM}
\label{subsec:WSHS_Protocol}

In \cite{pelecanos2025mixed}, the authors propose a tomography protocol that first performs weak-Schur sampling and then a pretty good measurement for the selected block. We call this protocol the Weak-Schur sampling Hilbert-Schmidt (WSHS) PGM. This leads to the following protocol:
\begin{tcolorbox}[
    colback=white,
    colframe=black,
    boxrule=0.8pt,
    arc=1pt,
    left=6pt,
    right=6pt,
    top=6pt,
    bottom=6pt
]
Given \(n\)-copies of \(\rho\):
\begin{enumerate}
    \item Apply the Schur transform to \(\rho^{\otimes n}\).
    \item \textit{Spectral estimation/Weak Schur-sampling.} 
    Measure the block label \(\lambda\) using the projectors
    \(\{\Pi_\lambda\}_{\lambda\vdash_d n}\). The probability of the
    outcome \(\lambda\) is $\dim\mc P_\lambda\, s_\lambda(\rho)$. 
    \item \textit{Pretty good measurement}. 
    We measure the resulting block using a pretty good measurement with the rank $m = \ell(\lambda)$ HS measure.
\end{enumerate}
\end{tcolorbox}

Define $\mc P_{d, m} = \{q \in \mc P_d:  \#\{ q_i > 0\} \leq m \}$, and $\mc P_{d,m}^\downarrow$ analogously defined for ordered probability distributions. Let $\mc D_{d,m} \subset \mc D_{d}$ to be the set of rank $m$ density operators.
For a given $\lambda \vdash_d n$, the normalization matrix will be 
\begin{equation}
    \label{eq:WSHS_NormMat}
    \Omega^{\beta, \mathrm{HS}}_\lambda = \int_{\mc D_{d,\ell(\lambda)}} q_\lambda(\sigma^\beta) d\mu_{\mathrm{HS}}^{\ell(\lambda)}(\sigma)
\end{equation}
Naturally, we have $\forall U \in U(d)$, 
\begin{align*}
    U\  \Omega_{\lambda}^{\rm{HS}, \beta} \ U^\dagger &= \int_{\mc D_{d,\ell(\lambda)}} q_\lambda(U \sigma^\beta U^\dagger) d \mu_{\rm{HS}}^{\ell(\lambda)}(\sigma)\\
    &= \int_{\mc D_{d,\ell(\lambda)}} q_\lambda(U \sigma^\beta U^\dagger) d \mu_{\rm{HS}}^{\ell(\lambda)}(U\sigma U^\dagger)\\
    &= \Omega_\lambda^{\rm{HS}, \beta},
\end{align*}
where in the second to last equality we used the unitary invariance of $\mu_{\rm{HS}}^{\ell(\lambda)}$. From this we gather that $\Omega_{\lambda}^{\rm{HS}, \beta} = c_{\lambda, \beta} \textbf{1}_{\mc Q_\lambda}$, for $c_{\lambda, \beta}$ a constant. By computation, we can see that 
\begin{equation*}
    c_{\lambda, \beta} = \frac{ Z^{\rm{HS}}_{\beta,\lambda} }{\dim Q_\lambda}, \qquad Z_{\beta,\lambda}^{\rm{HS}} = \int_{\mc D_{d,\ell(\lambda)}} s_\lambda(\sigma^\beta) d \mu_{\rm{HS}}^{\ell(\lambda)}(\sigma).
\end{equation*}

Now, we have measurement operators
\begin{align*}
    M_{\sigma}^{(n)} &:= 
    U_{\rm{Schur}}^\dagger \left( \sum_{ \substack{\lambda \vdash_d n \\ \ell(\lambda) \geq \rank(\sigma)} } \ket{\lambda}\bra{\lambda}  \otimes \textbf{1}_{\mc P_\lambda} \otimes \left[ \left(\Omega_{\lambda}^{\rm{HS}, \beta}\right)^{-1/2}  q_\lambda(\sigma^\beta) \left(\Omega_{\lambda}^{\rm{HS}, \beta}\right)^{-1/2}  \right]\right) U_{\rm{Schur}}.\\
    &= U_{\rm{Schur}}^\dagger \left(\sum_{\substack{\lambda \vdash_d n \\ \ell(\lambda) \geq \rank(\sigma)}} \frac{\dim(\mc Q_\lambda)}{Z^{\rm{HS}}_{\lambda, \beta}} \ket{\lambda} \bra{\lambda} \otimes \textbf{1}_{\mc P_{\lambda}} \otimes q_{\lambda}(\sigma^\beta)\right) U_{\rm{Schur}}, 
\end{align*}
and associated measure $\mu_{\rm{WSHS}} = \sum_{r=1}^d \mu_{\rm{HS}}^r$.

We show that this has the proper normalization. Define 
\[
    M_{\sigma, \lambda} := \left(\Omega_{\lambda}^{\rm{HS}, \beta}\right)^{-1/2} q_\lambda(\sigma^\beta) \left(\Omega_{\lambda}^{\rm{HS}, \beta}\right)^{-1/2},
\]
giving us an alternative formula for $M_{\sigma}^{(n)}$,  
\[
    M_\sigma^{(n)} = U_{\rm{Schur}}^\dagger \left( \sum_{ \substack{\lambda \vdash_d n \\ \ell(\lambda) \geq \rank(\sigma)} }  \ket{\lambda}\bra{\lambda}  \otimes \textbf{1}_{\mc P_\lambda} \otimes M_{\sigma, \lambda} \right) U_{\rm{Schur}}.
\]
By definition, 
\begin{equation*}
    \int_{\mc D_{d, \ell(\lambda)}} M_{\sigma, \lambda} \ d\mu_{\rm{HS}}^{\ell(\lambda)}(\sigma) = \textbf{1}_{\mc Q_{\lambda}}.
\end{equation*}
Now, 
\begin{align*}
    \int_{\mc D_d} M_\sigma^{(n)} d\mu_n(\sigma) &= \int_{\mc D_d} \left(\sum_{\substack{\lambda \vdash_d n \\ \ell(\lambda) \geq \rank(\sigma)}} \ket{\lambda}\bra{\lambda} \otimes \textbf{1}_{\mc P_\lambda} \otimes M_{\sigma, \lambda} \right) d\mu_{\rm{WSHS}}(\sigma)\\
    &= \sum_{\lambda \vdash_d n} \ket{\lambda} \bra{\lambda} \otimes \textbf{1}_{\mc P_\lambda} \otimes \left(\int_{D_d} \mathds{1}\{\ell(\lambda) \geq \rank(\sigma)\} M_{\sigma, \lambda} d\mu_{\rm{WSHS}}(\sigma) \right)
\end{align*}
And we get 
\begin{align*}
    \int_{\mc D_d} \mathds{1}\{\ell(\lambda) \geq \rank(\sigma)\} M_{\sigma, \lambda} d\mu_{\rm{WSHS}}(\sigma) &= \sum_{k=1}^{\ell(\lambda)} \int_{\mc D_{d,k}} M_{\sigma, \lambda} d\mu_{\rm{HS}}^k(\sigma)\\
    &= \int_{\mc D_{d,\ell(\lambda)}} M_{\sigma, \lambda} d\mu_{\rm{HS}}^{\ell(\lambda)}(\sigma)\\
    &= \textbf{1}_{\mc Q_\lambda}.
\end{align*}
Therefore, 
\begin{equation*}
    \int_{\mc D_d} M_\sigma^{(n)} d\mu_{\rm{WSHS}}(\sigma) = \textbf{1}_{(\mb C^d)^{\otimes n}}
\end{equation*}
For this scheme, the probability density function of $\sigma$ is given by
\begin{align}
    p_{\mathrm{WSHS},\beta}^{(n)}(\sigma\mid\rho) 
    &:= \Tr\left[M_{\sigma}^{(n)} \rho^{\otimes n} \right] \\
    &= \sum_{\substack{\lambda \vdash_d n \\ \ell(\lambda) \geq \rank(\sigma)}} \dim \mc P_{\lambda} \frac{\dim \mc Q_\lambda}{Z_{\lambda, \beta}^{\rm{HS}}} s_\lambda\left(\sigma^{\beta/2} \rho \sigma^{\beta/2}\right),
\end{align}
so that 
\begin{equation*}
    E_n(A) = \int_{A} p_{\rm{WHSH}, \beta}^{(n)}(\sigma\mid \rho)\  d\mu_{\rm{WSHS}}(\sigma)
\end{equation*}

Finally we compute the rate function for this protocol. 
\begin{theorem}[Exponent for weak-Schur sampling Hilbert-Schmidt PGM]
    \label{thm:WSHS_Exponent}
    
    Let \(a_1^{(\beta)}(\sigma, \rho) \geq \dots \geq a_d^{(\beta)}(\sigma, \rho) \geq 0\)
    be the eigenvalues of $\sigma^{\beta/2} \rho \sigma^{\beta/2}$. For all $\sigma, \rho \in \mc D_d$, $p_{\rm{WSHS}, \beta}^{(n)}(\sigma|\rho)$ satisfies a large deviation principle with rate function $I_{\beta}^{\rm{WSHS}}(\sigma|\rho)$.
    Furthermore, for $\rank(\sigma) \leq \rank(\rho)$,
    \begin{equation}
        \label{eq:FormOfWSHSRate}
        I_{\beta}^{\rm{WSHS}}(\sigma|\rho) = I_{\beta, \beta}(\sigma\|\rho)
    \end{equation}
    and for any $\sigma \in \mc D_d$ with $\rank(\sigma) > \rank(\rho)$,
    \begin{equation}
        \label{eq:WSHS_AdditionalInfCond}
        I_{\beta}^{\rm{WSHS}}(\sigma|\rho) = \infty
    \end{equation}

\end{theorem}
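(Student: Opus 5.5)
The plan is to run the argument of Theorem~\ref{thm:general-PGM-exponent} inside the framework of Theorem~\ref{thm:LDP-covariant-tomography}. The one new feature is that the prior now depends on the block. Write the WSHS protocol in the normal form \eqref{eq:normal-form-POVM-covariant}, with reference measure $\mu_{\rm WSHS}=\sum_{k=1}^d\mu_{\rm HS}^k$ and block operators
\[
K_{\lambda,x}=\mathds{1}\{\ell(\lambda)\ge \#\{x_i>0\}\}\,\frac{q_\lambda(\diag(x)^\beta)}{Z^{\rm HS}_{\lambda,\beta}}.
\]
For $\lambda$ with $s_\lambda(\rho)>0$, the conditional measure $\eta_n^{\rho,\lambda}$ is supported on spectra of rank at most $\ell(\lambda)$. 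There it has density proportional to $s_\lambda(A(x,U))/(s_\lambda(\rho)Z^{\rm HS}_{\lambda,\beta})$ with respect to $\nu_{\rm HS}^{\ell(\lambda)}\otimes\mu_{\rm Haar}$, where $A(x,U)$ is as in the proof of Theorem~\ref{thm:general-PGM-exponent}. By \eqref{eq:WeakSchur_CantGetHigherRank}, only blocks with $\ell(\lambda)\le m:=\rank(\rho)$ occur with positive probability.

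Fix a Young sequence $\lambda^{(n)}/n\to q$. Passing to subsequences, we may assume $\ell(\lambda^{(n)})\equiv \ell$ is constant, so $\ell\ge\#\{q_i>0\}$ and $\ell\le m$.

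\textbf{Conditional LDP.} Repeat the upper and lower bound argument of Theorem~\ref{thm:general-PGM-exponent}, using \eqref{eq:schur-polynomial-upper-lower}. The support of $\nu_{\rm HS}^{\ell}$ is all of $\mc P^\downarrow_{d,\ell}$, so the density approximation step still works inside this face. This gives
\[
\frac1n\log Z^{\rm HS}_{\lambda^{(n)},\beta}\to \beta\sum_i q_i\log q_i,
\]
since the supremum of $\sum_i q_i\log x_i$ over $\mc P_{d,\ell}^\downarrow$ is attained at $x=q$. The resulting conditional rate is
\[
J_\rho(q,x,U)=\sum_k q_k\log y_k+\beta\sum_k q_k\log q_k-\sum_k q_k\log a_k(x,U)
\]
for $x\in\mc P^\downarrow_{d,\ell}$, and $+\infty$ otherwise. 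Here $y$ is the spectrum of $\rho$.

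\textbf{The main obstacle.} The function $J_\rho$ depends on $\ell$, not only on $q$, so the hypothesis of Theorem~\ref{thm:LDP-covariant-tomography} fails as stated. I would handle this by splitting the Laplace integral into finitely many pieces according to $\ell\in\{1,\dots,m\}$. On each piece, the set of admissible $q$ is closed ($q_i=0$ for $i>\ell$), and the proof of Theorem~\ref{thm:LDP-covariant-tomography} goes through verbatim. A finite union of pieces does not change exponential rates, so the overall rate is the minimum over $\ell$ of the piecewise rates. This yields an LDP whose rate is the infimum over $q$ and over admissible $\ell\ge\rank(\sigma)$. The constraint reduces to $\rank(\sigma)\le\ell\le m$, so only the extra cutoff $\rank(\sigma)\le m$ survives.

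\textbf{Evaluation.} If $\rank(\sigma)>\rank(\rho)$, no admissible block exists, so every neighbourhood of $\sigma$ inside the rank-$>m$ region has probability $0$ and the rate is $+\infty$. Lower semicontinuity is preserved because the rate $I_{\beta,\beta}$ is already $+\infty$ or large near such points after taking the lsc envelope; I would verify this claim. If $\rank(\sigma)\le\rank(\rho)$, minimizing $D(q\|y)+J_\rho$ gives the same expression
\[
\inf_q\Big\{(\beta+1)\sum_i q_i\log q_i-\sum_i q_i\log a_i^{(\beta)}\Big\},
\]
where the infimum may be restricted to $q$ supported on the first $m$ coordinates. This is harmless because $a^{(\beta)}_i=0$ for $i>\rank(\sigma^{\beta/2}\rho\sigma^{\beta/2})\le m$, and those terms would contribute $+\infty$ anyway. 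The Gibbs variational formula then gives $I_{\beta,\beta}(\sigma\|\rho)$, which is \eqref{eq:FormOfWSHSRate}.
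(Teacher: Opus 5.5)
Your proposal is correct and follows the paper's route: the conditional Schur-block LDP from the general PGM argument, then Theorem~\ref{thm:LDP-covariant-tomography} and the Gibbs variational formula, with the case $\rank(\sigma)>\rank(\rho)$ handled by \eqref{eq:WeakSchur_CantGetHigherRank}; your splitting of the mixture by $\ell(\lambda)\in\{1,\dots,\rank(\rho)\}$ makes rigorous the step the paper covers only by assuming $\ell(\lambda^{(n)})$ is eventually constant. The one loose point is your lower-semicontinuity aside, whose stated reason is false ($I_{\beta,\beta}$ can be finite and small at full-rank $\sigma$ close to a rank-deficient $\rho$), but it is not needed, since $\{\sigma:\rank(\sigma)>\rank(\rho)\}$ is open and carries no mass, and the minimum over $\ell$ of the piecewise rates is already lower semicontinuous.
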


\begin{proof}
    When $\rank(\rho) > \rank(\sigma)$, as noted in  \eqref{eq:WeakSchur_CantGetHigherRank} we have \(p_{\mathrm{WSHS},\beta}^{(n)}(\sigma\mid\rho) = 0\), giving us \eqref{eq:WSHS_AdditionalInfCond}. 

    We would like to use Theorem \ref{thm:LDP-covariant-tomography}, but we first need to find the rate function for $\eta_n^{\rho, \lambda}$ as defined in \eqref{eq:conditional-kernel-covariant}, for our protocol. Take $\ell(\lambda) = m$, and note that 
    \begin{equation*}
        K_{\lambda, x}^{(n)} = \frac{s_\lambda(x^\beta)}{Z_{\beta, \lambda}^{\rm{HS}}}, \qquad \nu_n = \nu^{(\ell(\lambda^{(n)}))}_{\rm{HS}},
    \end{equation*}
    where $\nu^{(m)}_{\rm{HS}}$ denotes the spectral measure of the rank $m$ Hilbert-Schmidt measure.  

    Let $s(\rho) = (r_1, \dots, r_d)$ be the ordered eigenvalues of $\rho$. For $\frac{\lambda^{(n)}}{n} \to q$ with $\mb P_{\rho}\left[\frac{\lambda^{(n)}}{n}\right] > 0$, 
    \begin{equation*}
        \lim_{n \to \infty} \frac{1}{n} \log Z_{\beta, \lambda}^{\rm{HS}} = \beta \sum_{i=1}^d q_i \log q_i, \qquad  \lim_{n \to \infty} \frac{1}{n} \log s_{\lambda}(\rho) =  \sum_{i=1}^d q_i \log r_i.
    \end{equation*}
    Let $a_k(x,U)$ be the $k^{\text{th}}$ ordered eigenvalue of $\diag(x)^{\beta/2} U^\dagger \rho U \diag(x)^{\beta/2}$. By Lemma \ref{lemma:rep-data} we need only determine the large deviation behavior of the following integral: For $B \subset \mc P_d^\downarrow \times U(d)$
    \begin{equation*}
        \int_{B} \exp\left\{n \sum_{i=1}^d  \bar \lambda_i \log a_i(x,U)\right\} \nu_{\rm{HS}}^{(\lambda^{(n)})}(x) d\mu_{\rm{Haar}}(U).
    \end{equation*}

    Note that for all $(x,U) \in \mc P_d^{\downarrow} \times U(d)$ for $n$ large enough, $\bar \lambda^{(n)}_i \geq q_i - \eps$. Then for any closed set $F \subset \mc P_d^\downarrow \times U(d)$,
    \begin{align*}
        &\int_{F} \exp\left\{n \sum_{i=1}^d \bar \lambda_i^{(n)} \log a_i(x,U)  \right\} d\nu_{\rm{HS}}^{ (\ \ell(\lambda^{(n)}) \ ) }(x) d\mu_{\rm{Haar}}(U) \\
        &\leq \exp\left\{ n \sup_{(x,U)\in F} \sum_{i=1}^d (q_i - \eps) \log a_i(x,U)\right\}, \\
        &\implies \limsup_{n \to \infty} \frac{1}{n} \log \int_{F} \exp\left\{n \sum_{i=1}^d \bar \lambda_i^{(n)} \log a_i(x,U)  \right\} d\nu_{\rm{HS}}^{ (\ \ell(\lambda^{(n)}) \ ) }(x) d\mu_{\rm{Haar}}(U) \\
        &\leq \sup_{(x,U) \in F} \sum_{i=1}^d q_i \log a_i(x,U).
    \end{align*}

    Assume for now that $\ell(\lambda^{(n)}) = m$ for sufficiently large $n$. Take $X_m := \mc P_{d,m}^\downarrow \times U(d)$, for $z = (x,U) \in X_m$ let $f_q(z) := \sum_{i=1}^d q_i \log a_i(x,U)$, and take $dh_m(x,U) := d\mu_{\rm{Haar}}(U) d\nu_{\rm{HS}}^{(m)}(x)$. 
    We will show that for $O \subset \mc P_d^\downarrow \times U(d)$ an open set, 
    \begin{equation}
        \label{eq:WSHS_PGM_LowerBound}
        \liminf_{n \to \infty} \frac{1}{n} \log \int_{O \cap X_m} \exp\left\{ n f_{\bar \lambda^{(n)}}(x,U) \right\} d h_m(x,U)\geq \sup_{(x,U) \in O \cap X_m} f_q(x,U).
    \end{equation}
    Suppose $z^* = (x^*, U^*) \in O \cap X_m$ such that $a_m(z^*) > 0$. Then there exists an open neighborhood (relative to $O \cap X_m$) around $z^*$, $N(z^*)$, so that $\forall z \in N(z^*)$, $a_m(z) \geq c > 0$ for some constant $c$. Therefore, $f_{\bar \lambda^{(n)}}(z) \to f_{q}(z)$ uniformly on $N(z^*)$. For $n$ large enough, we can take $N_\eps(z^*) \subset N(z^*)$, an open subset (relative to $O \cap X_m$, so that we have $f_{\bar \lambda^{(n)}}(z) \geq f_q(z^*) - \eps$ for every $z \in N_\eps(z^*)$. Therefore, 
    \begin{align}
        \int_{O \cap X_m} e^{n f_{\bar \lambda^{(N)}}(z)  } dh_m(z) &\geq \int_{N_\eps(z^*)} e^{n [f_q(z^*) - \eps]} dh_m(z)\\
        &= h_m(N_\eps(z^*)) \exp\left\{f_q(z^*) - \eps\right\}\label{eq:upperBound_When_m_constant}\\
        \implies \liminf_{n \to \infty} \frac{1}{n} \log \int_{O \cap X_m} e^{n f_{\bar \lambda^{(N)}}(z)  } dh_m(z) &\geq f_q(z^*).
    \end{align}
    We wish to extend this result to the setting of $\dot z \in O \cap X_m$ with $a_m(\dot z) = 0$. Let 
    \begin{equation*}
        A_\rho := \left\{(x,U) \in \mc P_{d,m}^\downarrow \times U(d): x_m > 0, a_m(x,U) > 0\right\}.
    \end{equation*}
    Note that $A_\rho$ is dense in $\mc P_d^\downarrow \times U(d)$, so that there exists a sequence $\{z_n\}_{n}$ with $z_n \in A_\rho \cap O$ and $f_q(z_n) \to f_q(\dot z)$. Now, we can re-use the argument in \eqref{eq:upperBound_When_m_constant} to see that 
    \begin{equation*}
        \liminf_{n \to \infty} \frac{1}{n} \log \int_{O\cap X_m} \exp{n f_{\bar \lambda^{(n)}} (z) } dh_m(z) \geq \exp\{n f_q(z_0)\}.
    \end{equation*}
    Taking the supremum over $z \in O \cap X_m$ gives us the lower bound in \eqref{eq:WSHS_PGM_LowerBound}. 

    Therefore, $\eta_n^{\rho, \lambda}$ satisfies an LDP with rate function
    \begin{equation*}
        J_\rho(q, x, U) := \sum_{i=1}^d q_i \log r_i + \beta \sum_{i=1}^d q_i \log q_i - \sum_{i=1}^d q_i \log a_i(x,U).
    \end{equation*}
    Using Theorem \ref{thm:LDP-covariant-tomography} we have 
    \begin{align*}
        I_{\beta}^{\rm{WSHS}}(\sigma|\rho) &= \inf_{q \in \mc P_{d,\rank(\sigma)}^\downarrow}\left\{ (\beta+1)  \sum_{i=1}^m q_i \log q_i - \sum_{i=1}^d q_i \log a_i(x,U)\right\}. 
    \end{align*}
    Noting that $\rank(\sigma) \geq \rank(\sigma^{\beta/2} \rho \sigma^{\beta/2})$, we can follow the same argument as in the proof of Theorem \ref{thm:general-PGM-exponent} to recover $I_{\beta, \beta}$ when $\rank(\sigma) \leq \rank(\rho)$. 
\end{proof}

\section{Separation of sample complexity under Wasserstein-type distance}
\label{sec:W1SampComp}

In this section we prove Theorem~\ref{thm:separation-sample}. Throughout, we consider $m$-qubit states, so that $d = 2^m$ with $m \ge 1$, and we write $\tau := \mb I_d/d$ for the maximally mixed state. All logarithms are natural, $H(x) = -\sum_i x_i \log x_i$ denotes the Shannon entropy of a probability vector $x$, and $\overline W_1 = W_1/m$ is the normalized quantum Wasserstein distance introduced in \eqref{eq:W1}. The two protocols we compare are Keyl's protocol~\cite{Keyl_2006} and the raw Hilbert--Schmidt PGM, that is, the pretty good measurement for the ensemble $\{\rho^{\otimes n}\}$ with $\rho$ drawn from the Hilbert--Schmidt measure on $\mc D_d$~\cite{pelecanos2025mixed}. For a tomography protocol $\mathfrak T = \{E_n\}_{n \ge 1}$ and $A \in \mathfrak B(\mc D_d)$, we write
\[
    \mb P^{\mathfrak T}_{\rho,n}(A) := \Tr(E_n(A)\rho^{\otimes n})
\]
for the probability that the estimate lies in $A$ when $n$ copies of $\rho$ are measured, as in Definition~\ref{def:sample}. In this section the unknown state is always $\rho = \tau$.

The proof consists of two estimates at the state $\tau$. First, we upper bound the probability that Keyl's protocol outputs an estimate far from $\tau$ in $\overline W_1$; this follows from the quantum Marton inequality, which converts a bound on $D(\sigma\|\tau)$ into a bound on $\overline W_1(\sigma,\tau)$ with a gain of a factor $m$. Second, we lower bound the corresponding probability for the PGM, using the random-purification representation of \cite{pelecanos2025mixed}. Comparing the two bounds at a common confidence level yields the separation.

\begin{lemma}[Keyl upper bound]\label{lem:Keyl-W1-upper}
    For every $n \ge 1$ and $\varepsilon > 0$, we have
    \[
        \mb P^{\mathrm{Keyl}}_{\tau,n}\bigl[\overline W_1(\sigma,\tau) \ge \varepsilon\bigr]
        \le (n+1)^{d(d+1)/2}\exp\bigl(-2n\varepsilon^2 \log d\bigr).
    \]
\end{lemma}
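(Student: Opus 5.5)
The plan is to reduce the event $\{\overline W_1(\sigma,\tau)\ge\varepsilon\}$ to a condition on the spectrum of the estimate alone. We then bound the probability of that condition using the weak Schur distribution and the estimates of Lemma~\ref{lemma:rep-data}.

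\emph{Step 1: from Wasserstein distance to relative entropy.} The maximally mixed state $\tau=(\mb I_2/2)^{\otimes m}$ is a product state, so the quantum Marton inequality \eqref{eq:W-properties} applies with $\rho=\tau$. Dividing by $m^2$ gives
\[
    \overline W_1(\sigma,\tau)^2\le\frac{1}{2m}D(\sigma\|\tau).
\]
Hence $\overline W_1(\sigma,\tau)\ge\varepsilon$ implies $D(\sigma\|\tau)\ge 2m\varepsilon^2$. Since $\log d=m\log 2\le m$ with natural logarithms, this in turn implies $D(\sigma\|\tau)\ge 2\varepsilon^2\log d$.

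\emph{Step 2: the spectrum determines the event.} We have $D(\sigma\|\tau)=\log d-H(x)$ with $x=\spec^\downarrow(\sigma)$, and this is $D(x\|u)$ for the uniform vector $u=(1/d,\dots,1/d)$. It depends only on the spectrum of $\sigma$. Keyl's protocol is seed-induced, so its output is $U\diag(\bar\lambda)U^\dagger$ with $\bar\lambda=\lambda/n$, and its spectrum is exactly $\bar\lambda$. The unitary outcome is therefore irrelevant, and summing the $\kappa_n^{\tau,\lambda}$-marginals (each a probability measure) in \eqref{eq:state-distribution-seed} gives
\[
    \mb P^{\mathrm{Keyl}}_{\tau,n}\bigl[\overline W_1(\sigma,\tau)\ge\varepsilon\bigr]
    \le\sum_{\lambda\vdash_d n:\ D(\bar\lambda\|u)\ge 2\varepsilon^2\log d}\mb P_{n,\tau}(\bar\lambda).
\]

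\emph{Step 3: bound each weak Schur probability.} We use \eqref{eq:weak-schur-distribution} and Lemma~\ref{lemma:rep-data}. First, $\dim\mc P_\lambda\le e^{nH(\bar\lambda)}$. Second, $s_\lambda(\tau)\le\dim\mc Q_\lambda\,d^{-n}\le(n+1)^{d(d-1)/2}d^{-n}$, by \eqref{eq:schur-polynomial-upper-lower} and \eqref{eq:dimension-estimates}. Together these give
\[
    \mb P_{n,\tau}(\bar\lambda)\le(n+1)^{d(d-1)/2}e^{-n(\log d-H(\bar\lambda))}=(n+1)^{d(d-1)/2}e^{-nD(\bar\lambda\|u)}.
\]
Each term in the sum is thus at most $(n+1)^{d(d-1)/2}e^{-2n\varepsilon^2\log d}$. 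The number of Young diagrams in $\Lambda_{d,n}$ is at most $(n+1)^d$, since each row length lies in $\{0,\dots,n\}$. Multiplying, the total polynomial prefactor is $(n+1)^{d(d-1)/2+d}=(n+1)^{d(d+1)/2}$, which is the claimed bound.

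The only delicate point is Step 1: checking that the normalization in \eqref{eq:W-properties} (the factor $m/2$ and natural versus base-2 logarithms) yields a threshold of at least $2\varepsilon^2\log d$. We use $m\ge\log d$, which holds with natural logarithms since $\log 2\le1$. The rest is routine counting.
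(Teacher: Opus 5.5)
Your proposal is correct and follows the paper's proof essentially step for step. Both arguments use Marton's inequality together with $m\ge\log d$ to reduce the failure event to $\log d - H(\bar\lambda)\ge 2\varepsilon^2\log d$. Both then apply the per-label bound $\mb P_{n,\tau}(\bar\lambda)\le (n+1)^{d(d-1)/2}e^{-n(\log d - H(\bar\lambda))}$ and take a union over at most $(n+1)^d$ Young diagrams.

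The only difference is cosmetic. The paper uses the exact value $s_\lambda(\tau)=d^{-n}\dim\mc Q_\lambda$, which follows from $\tau^{\otimes n}=d^{-n}\mathbf 1$. You instead invoke the upper bound \eqref{eq:schur-polynomial-upper-lower}, which is attained here.
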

\begin{proof}
    Recall that Keyl's protocol first measures the Schur label $\lambda \vdash_d n$ and outputs an estimate $\sigma$ with spectrum $x = \lambda/n$; the eigenbasis of $\sigma$ is then estimated by a covariant measurement on $\mc Q_\lambda$, which plays no role in the argument below. Since $\tau^{\otimes n} = d^{-n}\mathbf 1_{\mathcal H_n}$, the Schur label $\lambda$ is observed with probability
    \[
        \mb P^{\mathrm{Keyl}}_{\tau,n}[\lambda]
        = d^{-n} \dim \mc P_\lambda \dim \mc Q_\lambda
        \le (n+1)^{d(d-1)/2} e^{-n(\log d - H(x))},
    \]
    where we used the standard dimension estimates
    $\dim \mc P_\lambda \le \binom{n}{\lambda_1,\dots,\lambda_d} \le e^{nH(x)}$ and
    $\dim \mc Q_\lambda \le (n+1)^{d(d-1)/2}$. 

    Next, since $\sigma$ has spectrum $x$ and $\tau$ is maximally mixed, we have $D(\sigma\|\tau) = \log d - H(x)$. The state $\tau = (\mathbf 1_2/2)^{\otimes m}$ is a product state, so the quantum Marton inequality~\eqref{eq:W-properties} of \cite[Theorem~2]{DePalma2021} applies and gives
    \[
        \overline W_1(\sigma,\tau)^2
        = \frac{W_1(\sigma,\tau)^2}{m^2}
        \le \frac{D(\sigma\|\tau)}{2m}
        = \frac{\log d - H(x)}{2m}.
    \]
    Therefore, if $\overline W_1(\sigma,\tau) \ge \varepsilon$, then necessarily
    \[
        \log d - H(x) \ge 2m\varepsilon^2 \ge 2\varepsilon^2 \log d,
    \]
    where we used $m = \log_2 d \ge \log d$. That is, the failure event is contained in the event that the observed Schur label $\lambda$ satisfies $\log d - H(\lambda/n) \ge 2\varepsilon^2 \log d$, and by the display above each such label has probability at most $(n+1)^{d(d-1)/2} e^{-2n\varepsilon^2\log d}$. Since there are at most $(n+1)^d$ Schur labels $\lambda \vdash_d n$, summing over them proves the lemma.
\end{proof}

\begin{lemma}[PGM lower bound]\label{lem:PGM-W1-lower}
    For every $n \ge 1$ and $0 < \varepsilon \le 1/16$, we have
    \[
        \mb P^{\mathrm{PGM}}_{\tau,n}\bigl[\overline W_1(\sigma,\tau) \ge \varepsilon\bigr]
        \ge \varepsilon^{2(d^2-1)}(1-5\varepsilon)^n
        \ge \varepsilon^{2(d^2-1)} 2^{-n}.
    \]
\end{lemma}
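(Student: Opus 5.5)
The plan is to use the random-purification representation of the raw Hilbert--Schmidt PGM from \cite{pelecanos2025mixed}. At the maximally mixed state this turns the protocol into pure-state PGM tomography on $\mb C^d\otimes\mb C^d$, which can be handled by a volume argument. Concretely, the Hilbert--Schmidt measure on $\mc D_d$ is the pushforward of the Haar measure on unit vectors $\psi\in\mb C^{D}$, $D=d^2$, under $\psi\mapsto\Tr_B\ket\psi\bra\psi$. The PGM for the ensemble $\{\sigma^{\otimes n}\}$ is then equivalent to the following procedure:
\begin{itemize}
\item take a randomly rotated purification $\ket{\Phi_U}^{\otimes n}$ of $\rho^{\otimes n}$, which lies in $\mathrm{Sym}^n(\mb C^D)$;
\item perform the covariant pure-state measurement with density $\binom{n+D-1}{n}\,|\langle\psi|\Phi_U\rangle|^{2n}$ with respect to Haar measure on $\psi$;
\item output $\sigma=\Tr_B\ket\psi\bra\psi$.
\end{itemize}
For $\rho=\tau$, every purification $\Phi_U=(\mathbf 1\otimes U)\Phi$ is maximally entangled. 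By covariance (in $U$, acting on $B$, which does not change $\Tr_B$) it suffices to fix $\Phi$ the standard maximally entangled vector. Since $\binom{n+D-1}{n}\ge1$, we get
\[
\mb P^{\mathrm{PGM}}_{\tau,n}[A]\ \ge\ \int \mathds 1\{\Tr_B\psi\psi^\dagger\in A\}\,|\langle\psi|\Phi\rangle|^{2n}\,d\psi .
\]

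Next I will construct a single good vector $\psi_0$ and then take a small Haar neighbourhood of it. Take
\[
\psi_0=\sqrt{1-t}\,\Phi+\sqrt t\,\ket{0^m}\ket{0^m}
\]
(renormalized), with $t$ of order $\varepsilon$, say $t=4\varepsilon$. This gives $|\langle\psi_0|\Phi\rangle|^2\ge 1-t-O(\varepsilon^{2})$. Its reduced state is $\sigma_0\approx(1-t)\tau+t\ket{0^m}\bra{0^m}$ plus cross terms of order $\sqrt{t(1-t)/d}$ supported on $\ket{0^m}$. To lower bound $\overline W_1(\sigma_0,\tau)$ I will use the dual (Lipschitz) characterization from \cite{DePalma2021}. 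The observable $H=\sum_{i=1}^m Z_i$ has quantum Lipschitz constant $O(1)$ and satisfies $\Tr[H\tau]=0$ and $\Tr[H\ket{0^m}\bra{0^m}]=m$, so $W_1(\sigma_0,\tau)\gtrsim tm/2$ minus the cross-term contribution. The cross terms are diagonal-free in the $Z$-basis except at $\ket{0^m}$, so they should cost at most a lower-order amount. This yields $\overline W_1(\sigma_0,\tau)\ge 2\varepsilon$ for the chosen $t$ and $\varepsilon\le1/16$.

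Then take the fidelity ball $B=\{\psi:|\langle\psi|\psi_0\rangle|^2\ge 1-\varepsilon^2\}$. Its Haar measure in $\mathbb{CP}^{D-1}$ is exactly $(\varepsilon^2)^{D-1}=\varepsilon^{2(d^2-1)}$. On $B$ two things hold:
\begin{itemize}
\item Reduced trace distance moves by at most $\varepsilon$. Since $\overline W_1\le\frac12\|\cdot\|_1$ (split the difference along a single site chain), we keep $\overline W_1(\sigma,\tau)\ge\varepsilon$.
\item Fidelity with $\Phi$ stays $\ge 1-5\varepsilon$, by the triangle inequality for the angle $\arccos|\langle\cdot|\cdot\rangle|$ together with the choice $t=4\varepsilon$ and $\varepsilon\le1/16$.
\end{itemize}
Integrating gives $\varepsilon^{2(d^2-1)}(1-5\varepsilon)^n$, and $1-5\varepsilon\ge 1/2$ gives the final inequality.

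The main obstacle is the constant bookkeeping in the middle step: choosing $t$ so that simultaneously $\overline W_1(\sigma_0,\tau)\ge2\varepsilon$ (including the cross-term corrections in the Lipschitz estimate) and $|\langle\psi|\Phi\rangle|^2\ge1-5\varepsilon$ on the whole ball. I would first try $t=4\varepsilon$ and a direct computation of $\Tr[H\sigma_0]$. If the cross terms spoil it, I would instead use $\ket{0^m}$ orthogonal to the relevant part by replacing $\Phi$'s overlap component, i.e.\ projecting $\ket{0^m0^m}$ off $\Phi$ before mixing.
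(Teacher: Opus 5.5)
Your proposal is correct and follows the same skeleton as the paper's proof. Both use the random-purification formula with $\binom{n+d^2-1}{n}\ge 1$, a reference vector whose reduced state is $\tau$ tilted toward $|0^m\rangle\langle 0^m|$, the cap $\{|\langle\psi|\psi_0\rangle|^2\ge 1-\varepsilon^2\}$ of measure $\varepsilon^{2(d^2-1)}$, and $\overline W_1\le T$ together with contractivity of $\Tr_B$. The bookkeeping you flagged does close, and more favourably than you expected.

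\textbf{The cross terms are a gain, not a cost.} Since $\Tr_B|\Phi\rangle\langle 0^m0^m|=d^{-1/2}|0^m\rangle\langle 0^m|$, the cross terms only add weight on $|0^m\rangle$. With $c:=2\sqrt{t(1-t)/d}$ you get exactly $\sigma_0=(1-s)\tau+s|0^m\rangle\langle 0^m|$, where $s=(t+c)/(1+c)\ge t$. So your $\psi_0$ is precisely the paper's Schmidt-form purification $\sum_j\sqrt{z_j}|jj\rangle$, with a slightly larger tilt. The overlap is $|\langle\psi_0|\Phi\rangle|^2=1-t(1-1/d)/(1+c)\ge 1-t$, with no $O(\varepsilon^2)$ loss. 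It matters that this is a gain: any genuine lower-order cost in $\overline W_1(\sigma_0,\tau)$ would break the exact constant $2\varepsilon$. Your phrase ``cost at most a lower-order amount'' should therefore be replaced by this computation.

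\textbf{The $W_1$ lower bound.} Your witness $\frac12\sum_i Z_i$ has quantum Lipschitz constant exactly $1$, which gives $\overline W_1(\sigma_0,\tau)\ge s/2\ge 2\varepsilon$. The paper instead computes the classical Hamming transport cost $m/2$ exactly. Only the lower bound is needed, so either works.

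\textbf{The overlap on the cap needs care.} Your Fubini--Study triangle inequality works only with exact angle addition. The crude bound $\sin(\theta_0+\theta_1)\le\sin\theta_0+\sin\theta_1$ gives $(2\sqrt\varepsilon+\varepsilon)^2=81/256>5/16$ at $\varepsilon=1/16$. The exact value $\cos^2(\pi/6+\arcsin\frac1{16})\approx 0.694\ge 11/16$ does suffice, and the endpoint is the worst case. The paper sidesteps this by applying the variational formula for trace distance with $P=|\Phi_d\rangle\langle\Phi_d|$. That gives the linear estimate $|\langle v_0,\Phi_d\rangle|^2-|\langle v,\Phi_d\rangle|^2\le\varepsilon$, and hence $1-5\varepsilon$ in one line.
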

\begin{proof}
    We first recall the random-purification representation of the PGM. Let $\mu$ be the normalized unitarily invariant measure on pure states of $\mathbb C^d \otimes \mathbb C^d$, and for a unit vector $v \in \mathbb C^d \otimes \mathbb C^d$ write $\Gamma(v) := \Tr_B |v\rangle\langle v| \in \mc D_d$ for its reduced state on the first factor. Let $\Phi_d := d^{-1/2}\sum_{j=1}^d |jj\rangle$, so that $\Gamma(\Phi_d) = \tau$. By \cite[Appendix~A]{pelecanos2025mixed}, for every $A \in \mathfrak B(\mc D_d)$ we have
    \begin{equation}\label{eq:random-purification}
        \mb P^{\mathrm{PGM}}_{\tau,n}[\sigma \in A]
        = \binom{n+d^2-1}{n} \int_{\{v :\, \Gamma(v) \in A\}} |\langle v, \Phi_d\rangle|^{2n}\, d\mu(v).
    \end{equation}
    In other words, the output distribution of the PGM on $\tau^{\otimes n}$ coincides with that of the protocol which applies the symmetric-subspace measurement $\{\binom{n+d^2-1}{n} |v\rangle\langle v|^{\otimes n}\, d\mu(v)\}$ to $n$ copies of the purification $\Phi_d$ of $\tau$ and outputs the reduced state $\Gamma(v)$ of the outcome. Note that the right-hand side of \eqref{eq:random-purification} does not depend on the choice of purification: replacing $\Phi_d$ by $(\mathbf 1 \otimes V)\Phi_d$ for a unitary $V$ on the second factor leaves both $\mu$ and the event $\{\Gamma(v) \in A\}$ invariant, so averaging over purifications of $\tau$ has no effect.

    To bound the failure probability from below, we exhibit a set of vectors $v$ whose reduced states are far from $\tau$ in $\overline W_1$, while their overlaps with $\Phi_d$ remain large. Consider the state
    \[
        \zeta := (1-4\varepsilon)\tau + 4\varepsilon\, |0^m\rangle\langle 0^m|,
    \]
    which is diagonal in the computational basis. Since $\zeta - \tau = 4\varepsilon\,(|0^m\rangle\langle 0^m| - \tau)$, homogeneity of the norm $\|\cdot\|_{W_1}$ gives $W_1(\zeta,\tau) = 4\varepsilon\, W_1(|0^m\rangle\langle 0^m|, \tau)$. Both $|0^m\rangle\langle 0^m|$ and $\tau$ are diagonal in the computational basis, and for such states the quantum $W_1$ distance coincides with the classical Wasserstein distance of the corresponding distributions on $\{0,1\}^m$ with respect to the Hamming metric~\cite{DePalma2021}. The classical distance between the point mass at $0^m$ and the uniform distribution is the expected Hamming weight of a uniformly random bit string, namely $m/2$. Hence $W_1(\zeta,\tau) = 2m\varepsilon$, that is,
    \begin{equation}\label{eq:zeta-W1}
        \overline W_1(\zeta,\tau) = 2\varepsilon.
    \end{equation}
    Write $z_1,\dots,z_d$ for the diagonal entries of $\zeta$ and choose the purification $v_0 := \sum_{j=1}^d \sqrt{z_j}\, |jj\rangle$, so that $\Gamma(v_0) = \zeta$. Since every $z_j$ is at least $(1-4\varepsilon)/d$, its overlap with $\Phi_d$ satisfies
    \begin{equation}\label{eq:v0-overlap}
        |\langle v_0, \Phi_d\rangle|^2
        = \frac1d \Bigl(\sum_{j=1}^d \sqrt{z_j}\Bigr)^2
        \ge \frac1d \Bigl(d\sqrt{\tfrac{1-4\varepsilon}{d}}\Bigr)^2
        = 1-4\varepsilon.
    \end{equation}

    Let $T(\rho,\sigma) := \frac12\|\rho - \sigma\|_1$ denote the trace distance, and consider the cap
    \[
        C := \bigl\{ v :\, T(|v\rangle\langle v|, |v_0\rangle\langle v_0|) \le \varepsilon \bigr\}
    \]
    around $v_0$. We claim that every $v \in C$ satisfies
    \begin{equation}\label{eq:cap-properties}
        \overline W_1(\Gamma(v),\tau) \ge \varepsilon
        \quad\text{and}\quad
        |\langle v, \Phi_d\rangle|^2 \ge 1-5\varepsilon.
    \end{equation}
    For the first inequality, recall from \cite{DePalma2021} that $\frac12\|X\|_1 \le \|X\|_{W_1} \le \frac m2 \|X\|_1$ for every traceless Hermitian $X$, so that $\overline W_1 \le T$. Combining this with the contractivity of the trace distance under the partial trace $\Tr_B$, we obtain for $v \in C$
    \[
        \overline W_1(\Gamma(v),\zeta)
        \le T(\Gamma(v),\Gamma(v_0))
        \le T(|v\rangle\langle v|, |v_0\rangle\langle v_0|)
        \le \varepsilon,
    \]
    and the triangle inequality together with \eqref{eq:zeta-W1} yields
    \[
        \overline W_1(\Gamma(v),\tau)
        \ge \overline W_1(\zeta,\tau) - \overline W_1(\Gamma(v),\zeta)
        \ge 2\varepsilon - \varepsilon = \varepsilon.
    \]
    For the second inequality, the variational characterization $T(\rho,\sigma) = \max_{0 \le P \le \mathbf 1} \Tr(P(\rho-\sigma))$, applied with $P = |\Phi_d\rangle\langle\Phi_d|$, gives
    \[
        |\langle v_0, \Phi_d\rangle|^2 - |\langle v, \Phi_d\rangle|^2
        \le T(|v_0\rangle\langle v_0|, |v\rangle\langle v|)
        \le \varepsilon,
    \]
    and \eqref{eq:v0-overlap} then gives $|\langle v,\Phi_d\rangle|^2 \ge 1-4\varepsilon-\varepsilon = 1-5\varepsilon$.

    It remains to compute the measure of $C$. For pure states, $T(|v\rangle\langle v|, |v_0\rangle\langle v_0|) = \sqrt{1-|\langle v, v_0\rangle|^2}$, so $C$ is the event $\{|\langle v, v_0\rangle|^2 \ge 1-\varepsilon^2\}$. Under $\mu$, the overlap $|\langle v, v_0\rangle|^2$ has the $\mathrm{Beta}(1, d^2-1)$ distribution, whose tail is $\mu(|\langle v,v_0\rangle|^2 \ge t) = (1-t)^{d^2-1}$ (see for example  \cite[Lemma II.8]{mele2026optimallearningchannels}). Hence $\mu(C) = \varepsilon^{2(d^2-1)}$.

    Finally, by \eqref{eq:cap-properties} the cap $C$ is contained in $\{v :\, \overline W_1(\Gamma(v),\tau) \ge \varepsilon\}$. Restricting the integral in \eqref{eq:random-purification} to $C$ and using $\binom{n+d^2-1}{n} \ge 1$, we obtain
    \[
        \mb P^{\mathrm{PGM}}_{\tau,n}\bigl[\overline W_1(\sigma,\tau) \ge \varepsilon\bigr]
        \ge \int_C |\langle v, \Phi_d\rangle|^{2n}\, d\mu(v)
        \ge (1-5\varepsilon)^n \mu(C)
        = \varepsilon^{2(d^2-1)}(1-5\varepsilon)^n.
    \]
    This is the first inequality of the lemma; the second follows from $1-5\varepsilon \ge 1 - 5/16 \ge 1/2$.
\end{proof}

We are now ready to prove the separation. Recall from Definition~\ref{def:sample} that
\[
    N_{\mathfrak T}(\rho, \mathrm{dist}, \varepsilon, \delta)
    = \inf\bigl\{ n \ge 1 :\, \mb P^{\mathfrak T}_{\rho,n}(\{\sigma :\, \mathrm{dist}(\rho,\sigma) \ge \varepsilon\}) \le \delta \bigr\}.
\]

\begin{theorem}[Separation of sample complexity]
\label{thm:main-separation-sample}
    Let $d = 2^m$ with $m \ge 1$, let $0 < \varepsilon \le 1/16$ and $0 < \delta < 1$, and let $\tau = \mb I_d/d$. If
    \[
        \log\frac1\delta \ge 4d^2 \log\frac{2d}{\varepsilon},
    \]
    then
    \[
        \frac{N_{\mathrm{PGM}}(\tau, \overline W_1, \varepsilon, \delta)}
             {N_{\mathrm{Keyl}}(\tau, \overline W_1, \varepsilon, \delta)}
        \ge \frac{\varepsilon \log d}{32}.
    \]
\end{theorem}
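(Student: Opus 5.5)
The plan is to bound the two sample complexities separately, using Lemmas~\ref{lem:Keyl-W1-upper} and~\ref{lem:PGM-W1-lower}, and then divide. Write $L := \log\frac1\delta$ throughout.

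\emph{Upper bound on $N_{\mathrm{Keyl}}$.} By Definition~\ref{def:sample} it suffices to exhibit one $n$ for which the right-hand side of Lemma~\ref{lem:Keyl-W1-upper} is at most $\delta$. I would take
\[
    n_K := \Bigl\lceil \frac{c_K L}{\varepsilon^2 \log d} \Bigr\rceil,
\]
with an absolute constant such as $c_K = 1$. The condition to check is
\[
    \tfrac{d(d+1)}{2}\log(n_K+1) \le 2 n_K \varepsilon^2 \log d - L .
\]
The right-hand side is at least $L$, so it is enough to show $\tfrac{d(d+1)}{2}\log(n_K+1)\le L$. This is where the hypothesis $L \ge 4d^2\log\frac{2d}{\varepsilon}$ enters. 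It gives $d^2 \le L/(4\log(2d/\varepsilon))$. Combined with $\log(n_K+1) \lesssim \log L + 2\log\frac1\varepsilon$, this should reduce the claim to an elementary inequality of the form $\log L \le L/(\text{const}\cdot d^2)$. That inequality follows from $L \ge 4d^2\log(2d/\varepsilon) \ge 4d^2\log 64$ together with $\log x \le x/e$.

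\emph{Lower bound on $N_{\mathrm{PGM}}$.} I would use the first, sharper form of Lemma~\ref{lem:PGM-W1-lower}, namely $\varepsilon^{2(d^2-1)}(1-5\varepsilon)^n$, rather than the $2^{-n}$ form. For $\varepsilon\le 1/16$ we have $-\log(1-5\varepsilon)\le c\,\varepsilon$ with $c \approx 5.9$. If $n < N_{\mathrm{PGM}}$ fails, the failure probability at that $n$ must be at most $\delta$, which forces $2(d^2-1)\log\frac1\varepsilon + c\varepsilon n \ge L$. The hypothesis gives $2(d^2-1)\log\frac1\varepsilon \le L/2$, so
\[
    N_{\mathrm{PGM}} \ge \frac{L}{2c\varepsilon}.
\]

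\emph{Combining.} The ratio is then at least
\[
    \frac{L/(2c\varepsilon)}{c_K L/(\varepsilon^2\log d) + 1} .
\]
The $+1$ from the ceiling is absorbed because $c_K L/(\varepsilon^2 \log d) \ge 1$ under the hypothesis. This leaves a bound of order $\varepsilon\log d/(4 c\, c_K)$.

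The main obstacle is bookkeeping of constants. One must make $c_K$ as small as possible while the polynomial prefactor is still absorbed, and use the tight bound on $-\log(1-5\varepsilon)$, so that the final constant comes out as $32$. Using the crude $2^{-n}$ form of the PGM lemma would lose a factor $\varepsilon$, giving only order $\varepsilon^2\log d$. So the step I would check most carefully is the PGM lower bound with the $\varepsilon$-dependent rate $(1-5\varepsilon)^n$. If the constant $32$ does not close, I would retune $c_K$ (for instance $c_K = 1/2 + o(1)$, using the slack $2n\varepsilon^2\log d - L \ge L$) before revisiting the other estimates.
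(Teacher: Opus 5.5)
Your proposal is correct and is essentially the paper's proof: the same choice $n_0=\lceil L/(\varepsilon^2\log d)\rceil$ giving $N_{\mathrm{Keyl}}\le 2L/(\varepsilon^2\log d)$, and the same use of the sharp $(1-5\varepsilon)^n$ form of Lemma~\ref{lem:PGM-W1-lower} with $2(d^2-1)\log\frac1\varepsilon\le L/2$. The paper takes $-\log(1-5\varepsilon)\le 8\varepsilon$ to land exactly on $32$; your $c$ must be at least $16\log\frac{16}{11}\approx 6.0$ rather than $5.9$, which gives $24$, so no retuning of $c_K$ is needed.

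For the prefactor step, the paper sets $y=L/(2d^2)$, so the hypothesis reads $e^y\ge 4d^2/\varepsilon^2$, and then $n_0+1\le 2+4d^2y/\varepsilon^2\le(1+y)e^y\le e^{2y}$, i.e.\ $\log(n_0+1)\le L/d^2$. Note that this step uses the $\log(2d)$ in the hypothesis: $L\ge 4d^2\log 64$ alone cannot control the $2\log d$ hidden in $\log L$.
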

\begin{proof}
    Throughout the proof we write $L := \log(1/\delta)$ and $b := \varepsilon^2 \log d$, so that the assumption reads $L \ge 4d^2\log(2d/\varepsilon)$.

    \emph{Upper bound for Keyl's protocol.}
    Set $n_0 := \lceil L/b \rceil$. We first check that
    \begin{equation}\label{eq:n0-check}
        \Bigl(d + \frac{d(d-1)}{2}\Bigr)\log(n_0+1) \le L.
    \end{equation}
    To this end, put $y := L/(2d^2)$. The assumption gives $y \ge 2\log(2d/\varepsilon)$, that is, $e^y \ge 4d^2/\varepsilon^2$; in particular $e^y \ge 2$. Moreover, $\log d \ge \log 2 \ge 1/2$, so that $L/b \le 2L/\varepsilon^2 = 4d^2y/\varepsilon^2$. Using $\lceil t \rceil \le t+1$, we obtain
    \[
        n_0 + 1
        \le 2 + \frac{L}{b}
        \le 2 + \frac{4d^2 y}{\varepsilon^2}
        \le e^y + y e^y
        = (1+y)e^y
        \le e^{2y},
    \]
    where the last step uses $1+y \le e^y$. Taking logarithms gives $\log(n_0+1) \le 2y = L/d^2$, and since $d + d(d-1)/2 = d(d+1)/2 \le d^2$, the estimate \eqref{eq:n0-check} follows. By Lemma~\ref{lem:Keyl-W1-upper} and \eqref{eq:n0-check}, the failure probability of Keyl's protocol with $n_0$ copies is at most
    \[
        (n_0+1)^{d + d(d-1)/2} e^{-2n_0 b}
        \le e^{L - 2n_0 b}
        \le e^{-L}
        = \delta,
    \]
    where we used $n_0 \ge L/b$. By Definition~\ref{def:sample}, this means $N_{\mathrm{Keyl}}(\tau,\overline W_1,\varepsilon,\delta) \le n_0$. Finally, the assumption implies $L \ge \log d \ge b$, so $n_0 \le L/b + 1 \le 2L/b$, and therefore
    \begin{equation}\label{eq:Keyl-upper}
        N_{\mathrm{Keyl}}(\tau,\overline W_1,\varepsilon,\delta) \le \frac{2L}{\varepsilon^2 \log d}.
    \end{equation}

    \emph{Lower bound for the PGM.}
    Since $0 < \varepsilon \le 1/16$, we have
    \[
        -\log(1-5\varepsilon) \le \frac{5\varepsilon}{1-5\varepsilon} \le 8\varepsilon,
    \]
    where the first inequality is $\log u \le u - 1$ with $u = 1/(1-5\varepsilon)$, and the second holds because $1 - 5\varepsilon \ge 11/16 \ge 5/8$. Moreover, the assumption gives
    \[
        2(d^2-1)\log\frac1\varepsilon \le 2d^2 \log\frac{2d}{\varepsilon} \le \frac L2.
    \]
    Consequently, for every $n \ge 1$, Lemma~\ref{lem:PGM-W1-lower} bounds the failure probability of the PGM with $n$ copies from below by
    \[
        \varepsilon^{2(d^2-1)}(1-5\varepsilon)^n
        = \exp\Bigl( -2(d^2-1)\log\frac1\varepsilon + n\log(1-5\varepsilon) \Bigr)
        \ge e^{-L/2 - 8n\varepsilon}.
    \]
    If this probability is at most $\delta = e^{-L}$, then $L/2 + 8n\varepsilon \ge L$, that is, $n \ge L/(16\varepsilon)$. By Definition~\ref{def:sample}, every $n$ in the set defining $N_{\mathrm{PGM}}(\tau,\overline W_1,\varepsilon,\delta)$ satisfies this inequality, and hence
    \begin{equation}\label{eq:PGM-lower}
        N_{\mathrm{PGM}}(\tau,\overline W_1,\varepsilon,\delta) \ge \frac{L}{16\varepsilon}.
    \end{equation}

    \emph{Conclusion.}
    Dividing \eqref{eq:PGM-lower} by \eqref{eq:Keyl-upper} gives
    \[
        \frac{N_{\mathrm{PGM}}(\tau,\overline W_1,\varepsilon,\delta)}
             {N_{\mathrm{Keyl}}(\tau,\overline W_1,\varepsilon,\delta)}
        \ge \frac{L}{16\varepsilon} \cdot \frac{\varepsilon^2 \log d}{2L}
        = \frac{\varepsilon \log d}{32},
    \]
    which completes the proof.
\end{proof}

\section{Acknowledgements}

\subsection{Statement on the use of Artificial Intelligence}

The main ideas of this work, including the focus of  Sections \ref{sec:QuantumStateTomography}, \ref{sec:SeedInducedProtocols}, \ref{sec:PGM-error-exponents} \ref{sec:W1SampComp} and Appendix \ref{sec:appendix_SchurPoly_HCIZFormula} were formulated by the authors without substantial assistance from artificial intelligence. 
ChatGPT 6 (Astra) was used for proofreading, literature surveys, to assist in the derivations of the properties in Appendix \ref{sec:appendix_BetaGammaDivergenceProperties}, and to refine arguments in Sections \ref{sec:SeedInducedProtocols} and \ref{sec:PGM-error-exponents}.
ChatGPT 4o and 5.6 Sol were used in early versions of this manuscript to generate the code responsible for Figures \ref{fig:RateFunctionCompAcrossBetas} and \ref{fig:RateFunctionCompAcross_t}, for exploring concepts, literature surveys, and proofreading. 

The authors take full responsibility for the presentation and mathematical statements of this manuscript.



\appendix

\section{Schur-polynomial bounds}
\label{sec:appendix_SchurPoly_HCIZFormula}

A Schur-polynomial of a Hermitian operator, $X$, is a polynomial in the operators eigenvalues. Specifically, for $(x_1, \dots, x_d) = \spec^{\downarrow}(X)$, and $\lambda \vdash_d n$
\begin{equation}
    \label{eq:schurPoly_ExactForm}
    s_\lambda(X) = \frac{\det\left(\left[x_i^{\lambda_j + d - j}\right]_{i,j=1}^d\right)}{\det\left(\left[  x_i^{d-j}\right]_{i,j=1}^d\right)}.
\end{equation}
As noted in the proof of \cite[Lemma 2]{Haah_2017}, the largest term of the Schur-polynomial $s_\lambda(\rho)$, is $r^\lambda := \prod_{i=1}^d r_i^{\lambda_i}$, so that the lower bound in \eqref{eq:schur-polynomial-upper-lower} is trivial. 

To prove the upper bound we leverage a connection between the Schur-polynomials and the Harish-Chandra-Itzykson-Zuber (HCIZ) integral. Let $A,B \in \mb C^{d\times d}$ Hermitian operators with eigenvalues $(a_1, \dots, a_d) := \spec^{\downarrow}(A)$ and $(b_1, \dots, b_d) := \spec^{\downarrow}(B)$. The HCIZ integral is the real-valued function,
\begin{align}
    I_{\rm{HCIZ}}(A,B) &:= \int_{U(d)} d\mu_{\rm{Haar}}(U) e^{\Tr[A U B U^\dagger]} \label{eq:HCIZ_Def}\\
    &= \frac{\det([e^{a_i b_j}]_{i,j=1}^d)}{ \Delta_{\rm{Van}}(A) \Delta_{\rm{Van}}(B) } \prod_{1\leq i < j \leq d} (j - i),  \label{eq:HCIZ_EquivDef}
\end{align}
where $\Delta_{\rm{Van}}(A) = \prod_{1 \leq i < j \leq d} (a_j - a_i)$ is the Vandermonde determinant. The second equality is due to Harish Chandra \cite{HarishChandra1957Differential} and independently Itzykson and Zuber \cite{ItzyksonZuber1980ThePlanarApproximation}, hence the functions name. There is a degree of similarity between \eqref{eq:schurPoly_ExactForm} and \eqref{eq:HCIZ_EquivDef}, and with some consideration one could see that \cite[Proposition 1.3]{Sra2016OnInequalities},
\begin{equation}
    \label{eq:SchurPoly_HCIZ_Relationship}
    \frac{s_\lambda(\rho)}{\dim \mc Q_\lambda} = \frac{I_{\rm{HCIZ}}(\log \rho, C + \diag(\lambda))}{E(\log \rho)},
\end{equation}
where $C := \diag([d-i]_{i=1}^d)$, and 
\begin{equation*}
    E(A) := \prod_{1 \leq i < j \leq d} \frac{e^{a_i} - e^{a_j}}{a_i - a_j}.
\end{equation*}
This relationship is well known in the mathematics literature \cite{Guionnet2005CharacterExpansion,Guionnet2005AFourier,Guionnet2009AsymptoticsOfHCIZIntegralsAndOfSchurPolynomials, Belinschi2022LargeDeviation}, but it does not appear well known in the field of quantum information theory. 

If we note that $s_\lambda(\rho) = \dim(\mc Q_\lambda) = 1$ for $\lambda = (0,\dots, 0)$, then \eqref{eq:SchurPoly_HCIZ_Relationship} implies that
\begin{equation*}
    E(\log \rho) = I_{\rm{HCIZ}}(\log \rho, C).
\end{equation*}
Therefore, 
\begin{equation*}
    s_\lambda(\rho) = \dim \mc Q_\lambda \frac{I(\log \rho, C + \diag(\lambda))}{I(\log \rho, C)}.
\end{equation*}
It follows that 
\begin{align*}
    \exp\left\{\Tr[\log(\rho) U\ (C+\diag(\lambda))\  U^\dagger ]\right\} &\leq \exp\left\{\Tr[\log(\rho) U C U^\dagger] + \max_{U \in U(d)} \Tr[\log(\rho) U \diag(\lambda) U^\dagger]\right\} \\
    &= \exp\left\{ \Tr[\log(\rho) U C U^\dagger] + \sum_{i=1}^d \lambda_i \log r_i \right\},
\end{align*}
so,
\begin{align*}
    I(\log \rho, C + \diag(\lambda)) &\leq r^\lambda I(\log \rho, C),\\
    \implies s_\lambda(\rho) &\leq \dim\mc Q_\lambda r^\lambda. 
\end{align*}
Indeed, one easy way of deriving the large $n$ behaviour of the Schur polynomials (and hence the rate function) is to simply use the saddle point or stationary phase approximation for the HCIZ integral.

\section{Properties of \texorpdfstring{$I_{\beta,\gamma}$}{β-γ divergences}}
\label{sec:appendix_BetaGammaDivergenceProperties}

\begin{proof}[Proof of Proposition \ref{prop:beta-gamma-properties}]
    Define $a_i^{(\beta)} := a_i^{(\beta)}(\sigma, \rho)$. Take $Z = \sum_{k} \left(a_k^{(\beta)}\right)^{\frac{1}{1+\beta}}$,
    

    \begin{align*}
        I_{\beta, \gamma}(\sigma\|\rho) - I_{\beta, \beta}(\sigma\|\rho) &= (\beta+1) \log Z - (1+\gamma) \log \sum_{i=1}^d (a_i x_i^{\gamma-\beta})^{\frac{1}{1+\gamma}}\\
        &= (\beta+1) \left[ \log Z - \frac{1+\gamma}{1+\beta} \log \sum_{i=1}^d (a_i x_i^{\gamma - \beta})^{\frac{1}{1+\gamma}} \right]\\
        &= (\beta+1) \left[\log \left( \sum_{i=1}^d \left(\frac{a_i}{Z^{1+\beta}}\right)^{\frac{1}{1+\gamma}} x_i^{\frac{\gamma - \beta}{1+\gamma}} \right)\right].
    \end{align*}
    Define $q := \frac{a_i^{\frac{1}{\beta+1}}}{Z}$, and $r := \frac{\gamma - \beta}{1+\gamma}$. 
    \begin{align*}
        I_{\beta, \gamma}(\sigma\|\rho) - I_{\beta, \beta}(\sigma\|\rho) &= (\beta+1) \left[-\frac{1+\gamma}{1+\beta} \log \left(\sum_{i=1}^d q_i^{\frac{1+\beta}{1+\gamma}} x_i^{\frac{\gamma- \beta}{1+\gamma}} \right)\right]\\
        &= (\beta+1) \left[-\frac{1}{1-r} \log \left( \sum_{i=1}^d q_i^{1-r} x_i^r \right)\right].
    \end{align*}
    Then 
    \begin{equation*}
        I_{\beta, \gamma}(\sigma\|\rho) = I_{\beta, \beta}(\sigma\|\rho) + (1+\beta) D_r(x\|q),
    \end{equation*}
    where $D_r(x\|q) = \frac{1}{r-1} \log \sum_{i} x_i^{r} q_i^{1-r}$ is the classical Rényi divergence. The fact that $I_{\beta, \gamma}$ is monotonically increasing in $\gamma$ follows from the monotonicity of $D_r$, as $\gamma \to \infty \implies r \to 1$.  
    
    


    We get
    \[
        \lim_{\gamma \to \infty} I_{\beta, \gamma}(\sigma\|\rho) = \sum_{i=1}^{m} x_i \log \frac{x_i^{1+\beta}}{a_i^{(\beta)}},
    \]
    by substituting $t = \frac{1}{1+\gamma}$, and evaluating the expression using L'Hôpital's rule. To see the variational formula, take $\sigma^{\beta/2} \rho \sigma^{\beta/2} = W \diag(a^{(\beta)}) W^\dagger$ for $W \in U(d)$, so that 
    \begin{align*}
        \sum_{i=1}^d x_i \log \frac{x_i^{\beta+1}}{a_i^{(\beta)}} &= -(\beta+1) H(x) - \Tr\left[W \diag(x) W^\dagger \log\left(\sigma^{\beta/2}\rho \sigma^{\beta/2}\right)\right]\\
        &= -(\beta+1) H(x) - \max_{U \in U(d)} \Tr\left[ U \diag(x) U^\dagger \log\left(\sigma^{\beta/2} \rho \sigma^{\beta/2}\right)\right].
    \end{align*}

    We now will show that $I_{\beta, \gamma}(\sigma\|\rho)$ is non-decreasing for $\beta \in (0, \gamma]$. Take $0 < \beta < \alpha \leq \gamma$. Note that for $A$ and $B$ being positive semi-definite operators, $\forall k \in \{1, \dots, d\}$ \cite{Tao2022SomeLog} 
    \begin{equation*}
        \prod_{i=1}^k \lambda_i(A^{1/2} B A^{1/2}) \leq \prod_{i=1}^k \lambda_i(A) \lambda_i(B),
    \end{equation*}
    where $\lambda_i(A)$ denotes the $i^{\text{th}}$ largest eigenvalue of $A$. 

    Applying this to $A = \sigma^{\alpha - \beta}$, $B = \sigma^{\beta/2} \rho \sigma^{\beta/2}$, for any $k \in \{1, \dots, d\}$ 
    \begin{align*}
        \prod_{i=1}^k \lambda_i(\sigma^{\alpha/2} \rho \sigma^{\alpha/2}) &\leq \prod_{i=1}^k \lambda_i(\sigma^{\alpha- \beta}) \lambda_i(\sigma^{\beta/2} \rho \sigma^{\beta/2})\\
        \implies \prod_{i=1}^k a_i^{(\alpha)} x_i^{\gamma - \alpha} &\leq \prod_{i=1}^k a_{i}^{(\beta)} x_i^{\gamma - \beta}\\
        \implies \sum_{i=1}^k \log\left(a_i^{(\alpha)} x_i^{\gamma - \alpha}\right) &\leq \sum_{i=1}^k \log\left(a_i^{(\beta)} x_i^{\gamma - \beta}\right).
    \end{align*}
    Further, note that 
    \begin{align*}
        \prod_{i=1}^d a_i^{(\alpha)} x_i^{\gamma - \alpha} &= \det(\sigma^{\alpha/2} \rho \sigma^{\alpha/2}) \det(\sigma^{\gamma - \alpha}) \\
        &= \det(\rho) \det(\sigma^\gamma) \\
        &= \prod_{i=1}^d a_i^{(\beta)} x_i^{\gamma - \beta}.
    \end{align*}
    Take $\log(a^{(\alpha)} x^{\gamma - \alpha} ) := (\log(a_i^{(\alpha)} x_i^{\gamma - \alpha})_{i=1}^d$. Then $\log(a^{(\alpha)} x^{\gamma - \alpha} )$ is weakly majorized by $\log(a^{(\beta)} x^{\gamma - \beta} )$. \cite[Proposition 2.1]{Tao2022SomeLog} tells us that for $p$ weakly majorized by $q$, then for any increasing convex function $f$, 
    \begin{equation*}
        \sum_{i=1}^d f(p_i) \leq \sum_{i=1}^d f(q_i).
    \end{equation*}
    Taking $f(x) = e^{\frac{1}{1+\gamma}x}$, we have 
    \begin{equation*}
        \sum_{i=1}^d \left(a_i^{(\alpha)} x_i^{\gamma - \alpha}\right)^{\frac{1}{1+\gamma}} \leq \sum_{i=1}^d \left(a_i^{(\beta)} x_i^{\gamma - \beta}\right)^{\frac{1}{1+\gamma}},
    \end{equation*}
    from which the monotonicty of $I_{\beta, \gamma}$ follows. 

    To see that $I_{\beta, \gamma}(\sigma\|\rho) \geq 0$ with equality iff $\sigma = \rho$,  
    Hölder's inequality gives us
    \begin{equation}
        \label{eq:HoldersIneq}
        \|AB\|_r \leq \|A\|_p \|B\|_q,
    \end{equation}
    where $A, B \in L(\mc H)$ are of appropriate dimensions, and $\frac{1}{r} = \frac{1}{p} + \frac{1}{q}$ for $p,q,r > 0$. Here, $\|A\|_p = (\Tr[\abs{A}^{p}])^{1/p}$. We must have $|A|^p$ proportional to $|B^\dagger|^q$ for  \eqref{eq:HoldersIneq} to hold with equality \cite{Bhatia1997MatrixAnalysis}. Take $A = \rho^{1/2}$, $B= \sigma^{\beta/2}$, and let $p = 2$, $q = \frac{2}{\beta}$, and $r = \frac{2}{1+\beta}$. We have $|AB|^2 = \sigma^{\beta/2} \rho \sigma^{\beta/2}$, 
    \begin{align*}
        \|AB\|_r &= \left(\Tr\left[A^{\frac{1}{1+\beta}}\right]\right)^{\frac{1+\beta}{2}}\\
        \|A\|_p &= (\Tr[\rho])^{1/2} = 1\\
        \|B\|_q &= (\Tr[\sigma])^{\frac{\beta}{2}} = 1\\
        \implies \Tr[(\sigma^{\beta/2} \rho \sigma^{\beta/2})^{\frac{1}{1+\beta}}] &\leq 1,
    \end{align*}
    with equality iff $|A|^p = \rho$ proportional to $|B|^q = \sigma$, meaning that $\sigma = \rho$. Therefore, $I_{\beta, \beta}(\sigma\|\rho) = 0$ iff $\sigma = \rho$. The result for $\gamma \neq \beta$ follows from monotonicity in the first argument, i.e. for $\sigma \neq \rho$
    \begin{equation*}
        0 < I_{\beta, \beta}(\sigma\|\rho) \leq I_{\gamma, \beta}(\sigma\|\rho)
    \end{equation*}

    $I_{\beta, \infty} \nearrow D_R$ follows from the monotonicity of $I_{\beta, \gamma}$ in $\beta$ and $\gamma$, along with \cite[Theorems 2,3]{audenaert2015alpha}. 
    
\end{proof}

\section{Figure Details}
\label{sec:appendix_FigureDetails}
\begin{example}
    \label{ex:PlotExample}
    Figures \ref{fig:RateFunctionCompAcrossBetas} and \ref{fig:RateFunctionCompAcross_t} use 
    \[
        \sigma = \begin{bmatrix}
            2/5 & 0    & 0   & 0\\
            0   & 3/10 & 0   & 0\\
            0   & 0    & 1/5 & 0\\
            0   & 0    & 0   & 1/10
        \end{bmatrix}, \qquad \rho = \begin{bmatrix}
            7/20 & 1/20 & 0    & 0\\
            1/20 & 1/4  & -1/8 & 0\\
            0    & -1/8 & 1/4  & 0\\
            0    & 0    & 0    & 3/20
        \end{bmatrix}.
    \]
\end{example}

\bibliography{optimal}

\end{document}